\def\final{0}  
\def\iflong{\iffalse}
\newcommand{\knote}[1]{[{\tiny Karthik: \bf #1}]\marginpar{*}}
\newcommand{\jnote}[1]{[{\tiny Jochen: \bf #1}]\marginpar{*}}
\newcommand{\knote}[1]{}
\newcommand{\jnote}[1]{}
\DeclarePairedDelimiter{\ceil}{\lceil}{\rceil}
\theoremstyle{plain}
\newtheorem{theorem}{Theorem}
\newtheorem{lemma}[theorem]{Lemma}
\newtheorem{proposition}[theorem]{Proposition}
\theoremstyle{remark}
\newtheorem{claim}[theorem]{Claim}
\newtheorem{remark}[theorem]{Remark}
\theoremstyle{definition}
\newtheorem{definition}{Definition}
\renewcommand{\mid}{:}
\DeclareMathOperator*{\argmax}{\arg\!\max}
\newcommand{\R}{\mathbb{R}}
\newcommand{\0}{\mathbb{0}}
\newcommand{\1}{\mathbb{1}}
\newcommand{\st}{\ensuremath{\mbox{s.t.}}}
\def\FULL{conf} 
\newcommand{\iffull}[1]{\ifthenelse{\equal {\FULL}{full}}{#1}{}}
\newcommand{\ifconf}[1]{\ifthenelse{\equal {\FULL}{full}}{}{#1}}
\title{Additive Stabilizers for Unstable Graphs}
\author[1]{Karthekeyan Chandrasekaran}
\author[2]{Corinna Gottschalk}
\author[3]{Jochen K\"onemann}
\author[2]{Britta Peis}
\author[2]{Daniel Schmand}
\author[2]{Andreas Wierz}
\affil[1]{University of Illinois Urbana-Champaign, Department of
  Industrial and Enterprise Systems Engineering,
  \texttt{karthe@illinois.edu}}
\affil[2]{RWTH Aachen University, School of Business and Economics,
  \texttt{\{gottschalk,peis,schmand,wierz\}@oms.rwth-aachen.de}}
\affil[3]{University of Waterloo, Department of Combinatorics \&
  Optimization, \texttt{jochen@uwaterloo.ca}}
\begin{document}
\maketitle

\begin{abstract}
	Stabilization of graphs has received substantial attention in recent years due to its connection to game theory. Stable graphs are exactly
  the graphs inducing a matching game with non-empty core. They are also the graphs that induce a network bargaining game with a balanced solution. A graph with
  weighted edges is called stable if the maximum weight of an integral
  matching equals the cost of a minimum fractional weighted vertex
  cover. If a graph is not stable, it can be stabilized in different
  ways. Recent papers have considered the deletion or addition of
  edges and vertices in order to stabilize a graph. In this
  work, we focus on a fine-grained stabilization strategy, namely stabilization of graphs by fractionally increasing edge weights. 
  
  We show the following results for stabilization by minimum weight increase in edge weights (min additive stabilizer): (i) Any approximation algorithm for min additive stabilizer that achieves a factor of $O(|V|^{1/24-\epsilon})$ for $\epsilon>0$ would lead to improvements in the approximability of densest-$k$-subgraph. (ii) Min additive stabilizer has no $o(\log{|V|})$ approximation unless NP=P. Results (i) and (ii) together provide the first super-constant hardness results for any graph stabilization problem. On the algorithmic side, we present (iii) an algorithm to solve min additive stabilizer in factor-critical graphs exactly in poly-time, (iv) an algorithm to solve min additive stabilizer in arbitrary-graphs exactly in time exponential in the size of the Tutte set, and (v) a poly-time algorithm with approximation factor at most $\sqrt{|V|}$ for a super-class of the instances generated in our hardness proofs.

\end{abstract}

\section{Introduction}\label{sec:intro}

Over the last two decades, algorithmic game theory has established
itself as a vibrant and rich subarea of theoretical computer science
as is evidenced by several recent books (e.g., see
\cite{CEW11,Nisan:2007,SL08}). A crucial driver in this development
is the increasingly networked structure of today's society, and the
impact this development has on the day-to-day interactions that humans
engage in. Finding and analyzing graph-theoretic models for such
networks is at the heart of the field of network exchange theory, and
is captured by the two recent books \cite{EK10,Ja08}. 

Social networks, and the interaction of individuals in those also
motivated our work. Specifically, our interest started with
\cite{KT08}, where Kleinberg
\& Tardos introduce {\em network bargaining} as a
natural extension of Nash's classical two-player bargaining
game~\cite{Na50} to the network setting. The players in
Kleinberg and Tardos' game correspond to the vertices in an underlying
graph $G=(V,E)$. Each $\{u,v\} \in E$ corresponds to a potential deal
of given value $w_{uv} \geq 0$. Each player is allowed to interact with the neighbors to agree upon a sharing of the value on the edge between them and eventually arrive at a deal with at most one of her neighbours. 
Therefore, outcomes in network bargaining
correspond to {\em matchings} $M \subseteq E$, and
an {\em allocation} $y \in \R^V_+$ of $w(M)$ to the players.  In
particular, we want $y_u+y_v=w_{uv}$ for all $\{u,v\} \in M$, and
$y_u=0$ if $u$ is not incident to an edge of $M$ ($u$ is {\em
  exposed}). 

Kleinberg and Tardos introduce the concept of {\em stability}, and
call an allocation $y$ to be {\em stable} if $y_u+y_v \geq w_{uv}$ for {\em
  all} edges $\{u,v\} \in E$. Naturally extending Nash's bargaining
solution, the authors define the {\em outside option} $\alpha_u$ of
a player $u$ given an allocation $y$ as the largest value that $u$ can {\em
  extract} from one of its neighbours. An allocation $y$ is then deemed to be
{\em balanced} if the value of each matching edge $\{u,v\} \in M$ is
split according to Nash's bargaining condition: each player
$a \in \{u,v\}$ receives its outside option $\alpha_a$, and the
  remaining value of $\{u,v\}$ is divided equally among the players. 
One of Kleinberg and Tardos' main results is that balanced outcomes
exist in a given network bargaining instance if and only if stable
ones exist, and these can be computed efficiently.

Network bargaining is closely related to the 
cooperative {\em matching game} introduced by Shapley and
Shubik~\cite{shapley1971assignment}, where the player set once more
corresponds to the vertices of an underlying graph $G=(V,E)$, and the
characteristic function assigns the maximum weight of a matching in
$G[S]$ to each set $S \subseteq V$ of vertices. The {\em core} of an
instance of this game consists of allocations $y \in \R^V_+$ of the
weight $\nu(G,w)$ of a maximum-weight matching to the players such
that $y_u + y_v \geq w_{uv}$ for all $\{u,v\} \in E$. Hence, core
allocations exactly correspond to stable allocations in network
bargaining (this observation was recently also made by Bateni et
al.~\cite{BH+10}). 

A given instance of network bargaining therefore has a stable (and
also a balanced) outcome if and only if the core of the corresponding
matching game is non-empty. We state the classical maximum weight matching LP that
has a variable $x_e$ for each edge $e \in E$ (we use
$x(\delta(v))$ as a convenient short-hand for $\sum_{e \in
  \delta(v)}x_e$, where $\delta(v)$ denotes the set of edges
incident to $v$):
\begin{equation}\tag{P}\label{lp:matching}
  \nu_f(G,w) := \max \{ \sum_{e \in E} w_ex_e \,:\, x(\delta(v)) \leq
  1 ~ \mbox{ for all } v \in V, x \geq
  \0. \}. 
\end{equation}
The linear programming dual of \eqref{lp:matching} has a variable
$y_v$ for each vertex $v \in V$, and a covering constraint for each edge $e \in
E$:
\begin{equation}\tag{D}\label{lp:cover}
  \tau_f(G,w) := \min \{ \sum_{v \in V} y_v \,:\, y_u + y_v \geq
  w_{uv} \mbox{ for all } \{u,v\} \in E, y \geq \0 \}.
\end{equation}
Feasible solutions of \eqref{lp:matching} and \eqref{lp:cover} will
henceforth be referred to as {\em fractional matchings} and {\em
  fractional $w$-vertex covers}, respectively. In the unit-weight special case,
where $w=\1$, we will omit the argument $w$ from the $\nu$ and $\tau$
notation for brevity. 
An immediate observation is that a given
instance of network bargaining has a stable outcome iff the core of the
corresponding matching game is non-empty iff 
$\nu(G,w) = \nu_f(G,w) = \tau_f(G,w)$,
where the second equality follows from linear programming duality. 
In other words, stable outcomes exist iff LP \eqref{lp:matching} admits
integral optimum solutions. We will call a (possibly weighted) graph
{\em stable} if the induced network bargaining instance admits a
stable outcome. 

\iffalse
Given an undirected graph $G=(V,E)$, a \emph{matching} is a subset of
edges $M\subseteq E$ such that every vertex is adjacent to at most one
edge in $M$. A \emph{vertex cover} is a subset of vertices
$C\subseteq V$ such that every edge $\{u,v\}\in E$ has at least one
end vertex in $C$. The problems of computing the maximum cardinality
of a matching, denoted by $\nu(G)$, and the minimum cardinality of a
vertex cover, denoted by $\tau(G)$, are well-studied in the literature
of combinatorial optimization \cite{schrijver2003combinatorial}. It is
well-known that the LP-relaxations
$(P)\ \nu_f(G):=\max_{x\ge 0} \{\sum_{e\in E} x_e \mid \sum_{e\in
  \delta(v)} x_e \le 1,\ \forall v\in V\}$
and
$(D)\ \tau_f(G):=\min_{y\ge 0}\{\sum_{v\in V} y_v \mid y_u+y_v\ge 1 \
\forall \{u,v\}\in E\}$
of the maximum cardinality matching and the minimum cardinality vertex
cover problem, respectively, are dual to each other. As usual, we call
a feasible solution $x$ of $(P)$ a \emph{fractional matching}, and a
feasible solution $y$ of $(D)$ a \emph{fractional vertex cover}.
Strong duality of linear programs implies
$\nu(G)\le \nu_f(G)= \tau_f(G)\le \tau(G).$
While $\nu(G)$ can be computed in polynomial time (e.g., see
\cite{edmonds1965paths}), computing $\tau(G)$ is
NP-hard \cite{karp1972reducibility}. Assuming the unique
games conjecture, $\tau(G)$ cannot be approximated within any
multiplicative factor below two \cite{khot2008vertex}. Still, optimal
\emph{half-integral} solutions of (P) and (D) exist and can be
computed efficiently \cite{balinski1965integer}.

\ifconf{ 
  In this paper, we will be interested in the {\em
    stability} of a given graph $G$. A graph $G$ is called {\em stable}
  ~\cite{bock2014finding} if 
  $\nu(G)=\tau_f(G)$. The class of
  such graphs properly contains {\em
    K\"onig-Egerv\'ary} graphs for which $\nu(G)=\tau(G)$ (e.g., see
  \cite{Kor82,KNP06,MR+11,Ste79}), which in turn properly contains the class of bipartite
  graphs~\cite{koniggrafok}. 
  The above unweighted stability notion generalizes
  naturally to weighted graphs. Given a graph $G=(V,E)$, edges weights
  $w \in \R^{E}_+$, we let $\nu(G,w)$ be the weight of a
  maximum-$w$-weight matching, and define
  \[ \tau_f(G,w) = \min \left\{ \sum_v y_v \,:\, y_u + y_v \geq w_{uv} ~
  \forall \{u,v\}  \in E, y \geq \0\right\}, \]
  to be the minimum cardinality of a fractional $w$-vertex cover. A
  weighted graph $(G,w)$ is then called {\em stable} if
  $\nu(G,w)=\tau_f(G,w)$. 

  Stable graphs are significant for the existence of \emph{good}
  solutions in certain combinatorial games. An instance of the
  matching game, introduced by Shapley and Shubik
  \cite{shapley1971assignment}, has a non-empty \emph{core} if and
  only if the underlying weighted graph is stable
  \cite{deng-ibaraki-nagamochi}. Moreover, an instance $(G,w)$ of a
  \emph{network bargaining game}, as introduced and investigated by
  Kleinberg and Tardos \cite{KT08}, admits a \emph{balanced solution}
  if and only if it admits a stable solution, i.e., if and only if
  $(G,w)$ is stable. Non-empty core and the existence of a balanced
  solution lead to \emph{fair benefit
  allocations} in assignment games and network bargaining games
  respectively.
}

\iffull{
\paragraph{K\"onig-Egerv\'ary graphs.}
A famous theorem of K\"onig \cite{koniggrafok} states that
$\nu(G)=\tau(G)$ holds whenever $G$ is bipartite. This result also
extends to the weighted setting as shown by Egerv\'ary
\cite{egervary1931matrixok}. That is, given edge weights
$w \in \mathbb{R}_+^{|E|}$, we know that the maximum $w$-weight of a
matching, denoted by $\nu(G,w)$, equals the minimum size of a
fractional $w$-vertex cover, denoted by $\tau_f(G,w)$. That is,
$\tau_f(G,w) = \min\{ \sum_{v\in V} y_v \mid y_u+y_v\ge w_{uv} \
\forall \{u,v\}\in E, \ y_v\ge 0\ \forall v\in V\}$.

In honor of K\"onig and Egerv\'ary, whose work built the foundations
for the development of Kuhn's Hungarian method
\cite{kuhn1955hungarian}, graphs satisfying $\nu(G)=\tau(G)$ are
called K\"onig-Egerv\'ary graphs (KEGs). The triangle extended by one
extra edge whose one end-vertex is a vertex of the triangle shows that
the class of KEGs properly extends the class of bipartite graphs. KEGs
are particularly significant since a minimum vertex cover can be
computed in KEGs in polynomial time \cite{evenK75}.

\paragraph{Stable graphs.}
A necessary condition for a graph $G$ to be a KEG is the existence of
a maximum integral matching whose size is equal to the value of a
minimum fractional vertex cover, i.e., $\nu(G)=\nu_f(G)=\tau_f(G)$.
Graphs satisfying this condition are called \emph{stable}
\cite{bock2014finding}. The family of stable graphs properly extends
the family of KEGs (e.g., consider the graph consisting of two
disjoint triangles linked by an edge that can be seen to be stable by
assigning $y_v=\nicefrac{1}{2}$ to every vertex $v$).

The notion of stable graphs naturally generalizes to the more general
setting with weights $w \in \mathbb{R}_+^{|E|}$ on the edges of
$G=(V,E)$. A weighted instance $(G,w)$ is called stable if and only if
the maximum $w$-weight of a matching equals the minimum size of a
fractional $w$-vertex cover, i.e., if and only if
$\nu(G,w)=\tau_f(G,w).$

\paragraph{Game-theoretic connections.}
Stable graphs are significant for the existence of \emph{good}
solutions to certain combinatorial games. An instance of the matching
game, introduced by Shapley and Shubik \cite{shapley1971assignment},
has a non-empty \emph{core} if and only if the underlying weighted
graph is stable \cite{deng-ibaraki-nagamochi}. Moreover, an instance
$(G,w)$ of a \emph{network bargaining game}, as introduced and
investigated by Kleinberg and Tardos \cite{KT08}, admits a
\emph{balanced solution} if and only if it admits a stable solution,
i.e., if and only if $(G,w)$ is stable. Non-empty core and the
existence of a balanced solution lead to \emph{fair benefit
  allocations} in assignment games and network bargaining games
respectively.

\paragraph{Alternative characterizations.}
We note that one can efficiently verify whether a given graph $G$ is
stable by computing a maximum weight matching using Edmonds' algorithm
and solving the minimum fractional weighted-vertex-cover linear
program. An alternative characterization of stability in unit-weight
graphs (all edges have the same weight) is given via the
\emph{Gallai-Edmonds decomposition} \cite{gallai1964maximale}. A graph
with unit weights is stable if and only if its inessential vertices,
i.e., those vertices not covered by at least one maximum cardinality
matching, form a stable set.
We observe that Edmonds' matching algorithm implicitly partitions the
vertex set into $V=X\cup Y\cup Z$ where $X$ denotes the inessential
vertices in $G$, $Y$ is the set of neighbours of $X$, and
$Z=V\setminus{(X\cup Y)}$ is the set of remaining vertices. Thus, the
terminating structure of Edmonds' matching algorithm can be used to
immediately test whether a graph $G$ is stable in polynomial time. We
recall that the partition $V=X\cup Y\cup Z$ is called the
\emph{Gallai-Edmonds decomposition} of $G$, and that $Y$ is also
called the \emph{Tutte set} of $G$. }
\fi

In \cite{bock2014finding}, Bock et al.\ proposed the following {\em
  meta} problem: given an {\em unstable} graph $G$, modify $G$ in the
least {\em intrusive} way in order to attain stability. The authors
focused on the concrete question of removing the smallest
number of edges from $G$ so that the resulting graph is stable. Bock
et al.\ showed that this problem is as hard to approximate as the
vertex cover problem, even if the underlying graph is factor critical
(i.e., even if deleting any vertex from $G$ yields a graph with a
perfect matching).
The authors complemented this negative result by presenting an
approximation algorithm whose performance guarantee is proportional to
the {\em sparsity} of the underlying graph. 

Concurrently, Ahmadian et al.~\cite{VertexDelStabilization16} and Ito
et al.~\cite{EdgeAddStabilization16} proposed a {\em vertex-stabilizer} problem: given a graph $G=(V,E)$, find a
minimum-cardinality set of vertices $S \subseteq V$ such that
$G[V\setminus S]$ is stable. Both papers presented a combinatorial
polynomial-time exact algorithm for this problem, and showed that
the min cost variants of vertex stabilization are NP-hard. Ito et
al.~\cite{EdgeAddStabilization16} proposed stabilizing a graph by
adding a minimum number of vertices or edges. They showed that
both of these problems are polynomial-time solvable. However, the minimum cost 
variant of stabilization by edge addition is NP-hard.

In this work, we consider a more {\em nimble} and in a sense {\em continuous} way
of stabilizing a given unstable graph $G=(V,E)$. Instead of
deleting/adding vertices/edges, we consider adding a small {\em
  subsidy} to a carefully chosen subset of the edges in order to
create a stable weighted graph. The subsidy should be thought of as an
additional incentive deployed by a central authority in order to
achieve stability. A natural goal for the central authority would then
be to minimize the total subsidy doled out in the stabilization
process. 

\begin{definition}[Minimum Fractional Additive Stabilizer]
  Given an undirected graph $G=(V,E)$ with unit edge weights, a
  \emph{fractional additive stabilizer} is a vector
  $c \in \mathbb{R}^{E}_+$ such that $(G,\1+c)$ is stable.
  In the \emph{minimum
    fractional additive stabilizer} (MFASP) problem, the goal is to
  find a fractional stabilizer of smallest weight $\1 ^Tc$. 
\end{definition}

We emphasize that we do not allow the addition of edges in MFASP,
but are restricted to add weight to existing edges. We further note
that the weight increases in MFASP need not be integral, and can
take on arbitrary non-negative rational values. 

\subsection{Our contributions.}\label{subsec:contribution} 

Several variants of graph stabilization are known to be
NP-hard. Hardness of approximation results so far have been rather
weak, however, and the gap between them and the known positive results
are large. In this work, we show strong approximation-hardness
results, and nearly matching positive results. 

\begin{restatable}{theorem}{thmHardness}\label{thm:hardness}
 A polynomial time approximation algorithm with approximation factor $O(|V|^{{1/24}-\epsilon})$ where $\epsilon>0$ for MFASP would lead to a polynomial
time $O(|V|^{{1/4}-6\epsilon})$-approximation for Densest $k$-Subgraph (D$k$S).
Furthermore, there is no $o(\log(|V|))$-approximation algorithm for MFASP unless $P = NP$.
\end{restatable}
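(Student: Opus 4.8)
The plan is to prove the two statements by two separate gap-preserving reductions, each producing MFASP instances whose minimum stabilizer weight is pinned down (up to controlled factors) by a quantity whose hardness is known. As a common preliminary I would first record a structural description of minimum additive stabilizers. Using the stated fact that a unit-weight graph is stable iff its inessential vertices form an independent set, together with the Gallai--Edmonds decomposition $V = X \cup Y \cup Z$ (with $X$ the inessential vertices), I would characterize when $(G,\1+c)$ is stable in terms of how the boosts $c$ modify the maximum-weight matching structure and the deficiency of odd components. The goal of this step is to show that an optimal $c$ may be assumed to have small, combinatorially structured support (boosts concentrated on edges incident to the inessential part, taking values from a bounded set), so that MFASP behaves like a covering problem on the Gallai--Edmonds structure rather than a genuinely continuous optimization; this is what makes both reductions tractable.

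For the $\Omega(\log|V|)$ bound I would reduce from Set Cover, which admits no $o(\log n)$-approximation unless $\mathrm{P}=\mathrm{NP}$ (Dinur and Steurer). Given a universe $U$ and sets $S_1,\dots,S_m$, I would build a graph $G$ on $\mathrm{poly}(|U|,m)$ vertices consisting of one \emph{element gadget} per $e\in U$ (a small unstable piece, e.g.\ an odd structure, that has to be repaired) and one \emph{set gadget} per $S_j$, wired so that pushing one unit of subsidy into the $j$-th set gadget simultaneously repairs exactly the element gadgets of the $e\in S_j$, while the structural lemma rules out any cheaper way of repairing an element gadget. Then the minimum stabilizer weight equals the minimum set-cover size up to a fixed constant, and since $|V|=\mathrm{poly}(|U|)$, an $o(\log|V|)$-approximation for MFASP yields an $o(\log|U|)$-approximation for Set Cover.

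For the D$k$S connection I would give a gap-preserving reduction from the gap version of D$k$S. Starting from a D$k$S instance $(H,k)$ on $n$ vertices, I would construct an MFASP instance $G$ on $N=\mathrm{poly}(n)$ vertices — using many copies of vertex- and edge-gadgets so the edge-density parameter of $H$ enters in the correct power — with a two-sided guarantee: (completeness) if $H$ has a $k$-vertex subgraph with at least $t$ edges, then $G$ has a stabilizer of weight at most $g_1(n,k,t)$; (soundness) if every $k$-vertex subgraph of $H$ has fewer than $t$ edges, then every stabilizer of $G$ has weight more than $g_2(n,k,t)$. The ratio $g_2/g_1$ together with the blow-up $N=\mathrm{poly}(n)$ is arranged so that an $O(N^{1/24-\epsilon})$-approximation for MFASP distinguishes the two cases; feeding this distinguisher into the standard relationship between the D$k$S gap and its approximability gives the claimed $O(n^{1/4-6\epsilon})$-approximation for D$k$S, the exponents $1/24\mapsto 1/4$ and $\epsilon\mapsto 6\epsilon$ fixing a blow-up of order $n^{6}$ up to lower-order factors.

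The main obstacle is the soundness direction of the D$k$S reduction: since a stabilizer is a continuous object, one must rule out that a clever fractional spreading of tiny weight increases over many edges repairs $G$ much more cheaply than any ``pick a dense $k$-subgraph'' strategy would. This is precisely where the preliminary structural lemma is needed — it reduces the analysis to stabilizers of a canonical, essentially integral form, so that any cheap stabilizer can be \emph{decoded} into a dense $k$-vertex subgraph of $H$ with only a controlled loss in the number of edges. A secondary but delicate point is the parameter bookkeeping: the same construction must simultaneously respect the polynomial size bound and reproduce the exact exponent relationship asserted in the theorem.
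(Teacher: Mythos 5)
Your overall architecture — a structural lemma restricting optimal stabilizers to a canonical form, plus two separate reductions — is on the right track, and your Set Cover sketch (element gadgets repaired by set gadgets, one unit of boost per set, polynomial size, invoking Dinur–Steurer) is in the same spirit as the paper's actual construction (odd cycles $Q_i$ for elements, large cliques attached to $n$ copies of each set vertex to force the matching, half-integrality handling the cost accounting).

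However, your D$k$S step has a substantive gap. You propose a direct gap-preserving reduction from D$k$S and then appeal to "the standard relationship between the D$k$S gap and its approximability." That relationship only translates an $\alpha$-distinguisher into an $\alpha$-approximation; it does not square the ratio. So even if your construction blows up by a factor of $n^3$ (as the paper's does), a direct reduction would at best convert an $O(\hat{n}^{1/24-\epsilon})$-approximation for MFASP on $\hat{n}=\Theta(n^3)$ vertices into an $O(n^{1/8-3\epsilon})$-approximation for D$k$S — not the claimed $O(n^{1/4-6\epsilon})$. The missing ingredient is the intermediate problem. The paper reduces D$k$S to MFASP via \emph{Minimum $k$-Edge Coverage} (M$k$EC: pick the fewest vertices whose induced subgraph has $\geq k$ edges). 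An $f$-approximation for MFASP gives a $2f$-approximation for M$k$EC (the paper's Theorem~\ref{prp:MkEChardness}), and the Hajiaghayi–Jain relationship converts a $\rho$-approximation for M$k$EC into an $O(\rho^2)$-approximation for D$k$S. That squaring is exactly where $1/8$ becomes $1/4$. Without this (or an equivalent) intermediate step, the exponent bookkeeping does not close.

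There is also a conceptual reason M$k$EC is the natural fit: the optimum MFASP cost in the paper's construction decomposes as $k+\tfrac{q}{2}\lvert N(K)\rvert$, i.e.\ (number of exposed components, fixed at $k$) plus (a multiple of the number of Tutte-set vertices that must be raised to $1$) — which is literally "minimize the number of spanned vertices subject to picking $k$ edges." Your soundness worry (ruling out cheap fractional stabilizers that don't correspond to any $k$-subgraph) is handled in the paper by the half-integrality result (Theorem~\ref{thm:opt-properties}) combined with Lemma~\ref{lemma:FactorCriticalNotForFree}, which together pin down the cost of an unmatched non-trivial component at $\geq 1$ and force any cheap stabilizer into the combinatorial form you anticipated. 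So that part of your plan is sound; the gap is the missing M$k$EC pivot and the exponent arithmetic that depends on it.
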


D$k$S is known not to possess a polynomial-time approximation scheme,
assuming
$NP \not \subseteq \cap_{\epsilon > 0}
BPTIME(2^{n^\epsilon})$~\cite{KhotNoPTAS}.
On the other hand, the best known performance guarantee of any
approximation algorithm is only
$\approx O(|V|^{1/4})$~\cite{BhaskaraDkSG}.
It is widely believed, however, that the true approximability of DkS
lies closer to the upper bound than to the hardness lower-bound. 
An approximation algorithm for MFASP with performance ratio
significantly lower than
$|V|^{1/24}$ would therefore (at the very least) be unexpected. 

It is well-known that \eqref{lp:matching} has an integral solution iff
the set of {\em inessential} vertices $X$ (those vertices that are
exposed by a maximum matching) forms an independent set (e.g., see
\cite{Ba81,Uh75}). Let $Y$ be the set of neighbours of $X$ in $G$, and
$Z=V\setminus (X \cup Y)$. The triple $(X,Y,Z)$ is called the {\em
  Gallai-Edmonds decomposition} of $G$ \cite{edmonds1965maximum,Edmonds68,gallai1964maximale}.  

As we will see later, the optimization problem given by an instance of
MFASP naturally decomposes into two subproblems: that of picking a
maximum matching between the vertices in $Y$ and the factor critical
components in $G[X]$, and that of picking a maximum matching in each
of the components of $G[X]$. Our two hardness results in Theorem \ref{thm:hardness} demonstrate the hardness of each of these subproblems. 

In the following positive result, we let OPT denote the optimum
stabilization cost of the given instance.

\begin{restatable}{theorem}{thmOPTApprox}\label{thm:opt-approx}
  Let $G=(V,E)$ be a graph with Gallai-Edmonds decomposition
  $(X,Y,Z)$.  If all factor critical components of $G[X]$ have size
  greater than one then there is a
  $\min\{OPT, \sqrt{|V|}\}$-approximation algorithm for MFASP in $G$.
\end{restatable}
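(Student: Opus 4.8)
The plan is to cast MFASP as a combinatorial problem on the Gallai--Edmonds decomposition, separate the (polynomially solvable) ``within a component'' part from the harder ``$Y$-to-components'' part, and then play two lower bounds on $OPT$ against one explicit construction, finishing with the case split $OPT\le\sqrt{|V|}$ versus $OPT>\sqrt{|V|}$.

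\textbf{Reformulating the problem.} Applying complementary slackness to \eqref{lp:matching}--\eqref{lp:cover} with weights $\1+c$, a vector $c$ stabilizes $G$ if and only if there are a matching $M$ and $y\ge\0$ with $y_u+y_v\ge 1+c_{uv}$ on every edge, $y_u+y_v=1+c_{uv}$ on $M$, and $y_v=0$ for $v\notin V(M)$; the cheapest such $c$ has cost $\1^{T}c=\sum_v y_v-|M|$. Since $1+c_{uv}\ge 1$, the vector $y$ must be a fractional vertex cover of $G$, so $V(M)$ is a vertex cover and the exposed set $X':=V\setminus V(M)$ is independent; a short augmenting-path exchange shows one may take $|M|=\nu(G)$. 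By the Gallai--Edmonds structure theorem, such an $M$ matches $Y$ into pairwise distinct components of $G[X]$, perfectly matches $G[Z]$, and leaves exactly one exposed vertex in each of the $d:=|V|-2\nu(G)$ ``deficient'' components of $G[X]$. Hence a solution amounts to a choice of which $|Y|$ components absorb $Y$ and, in each deficient $C_j$, a choice of exposed vertex $v_j$; the residual cost of $C_j$ is $\kappa_j(v_j):=\min\{\1^{T}z: z\text{ a fractional vertex cover of }C_j,\ z_{v_j}=0\}-\tfrac{|C_j|-1}{2}$, which is a polynomial-size LP for each fixed $v_j$ (and minimizing over $v_j$ is exactly the internal stabilization problem solved by our factor-critical algorithm).

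\textbf{Two lower bounds.} First I would show that a factor-critical graph on $\ge 3$ vertices cannot be stabilized for less than $1$: every vertex of such a graph has degree $\ge 2$, so zeroing $v_j$ forces two of its neighbours to value $\ge 1$, and a short argument with the fractional matching/cover LP of $C_j$ gives $\kappa_j(v)\ge 1$ for every $v$; hence $OPT\ge d$ --- this is exactly where the hypothesis that all components have size $>1$ is used. Second, charge $\sum_v y_v-|M|$ component by component: a component $C_i$ matched to $Y$ by $\{u_i,w_i\}$ is charged $\sum_{v\in C_i}y_v+y_{w_i}-\tfrac{|C_i|+1}{2}\ge y_{w_i}-\tfrac12$, using $\sum_{v\in C_i}y_v\ge\tau_f(C_i)=\tfrac{|C_i|}{2}$ (a factor-critical graph has a fractional perfect matching). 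Every $Y$-vertex adjacent to an exposed vertex is forced to value $\ge 1$ and hence pays $\ge\tfrac12$, so for the optimal exposed vertices $v_j^{\star}$ we get $OPT\ge\sum_{j}\kappa_j(v_j^{\star})+\tfrac12\bigl|\bigcup_j N_Y(v_j^{\star})\bigr|$; in particular $\bigl|\bigcup_j N_Y(v_j^{\star})\bigr|\le 2\,OPT$.

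\textbf{The construction and wrap-up.} Build the ``canonical'' half-integral cover: set $y\equiv\tfrac12$ on $Z$, on $Y$, and on every component absorbed by $Y$ --- this is a feasible partial cover and, after crediting the matched edges, contributes $0$ to the cost; in each deficient $C_j$ plug in an optimal internal cover of cost $\kappa_j(v_j)$ and raise to $1$ the $Y$-vertices adjacent to $v_j$. Choose the $Y$-to-components matching and the $v_j$ by a maximum-weight bipartite matching of $Y$ into the components under component weights $\kappa'_j:=\min_{v\in C_j}\bigl(\kappa_j(v)+\tfrac12|N_Y(v)|\bigr)$, leaving deficient the unmatched components. The resulting stabilizer has cost at most $\sum_{j\text{ def.}}\kappa_j(v_j)+\tfrac12\bigl|\bigcup_j N_Y(v_j)\bigr|\le\sum_j\kappa'_j$; substituting $v_j^{\star}$ into the second lower bound and using $d\le OPT$ bounds this by $OPT+d\cdot OPT=O(OPT^2)$. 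I would then also compute the trivial stabilizer from any maximum matching ($y=1$ on $N(X')\cap V(M)$, $y=\tfrac12$ on the rest of $V(M)$, $y=0$ on $X'$), of cost $\le|V|/2$, and output the cheaper of the two. Since $\min\{O(OPT^2),|V|/2\}=O(\min\{OPT^2,\sqrt{|V|}\cdot OPT\})$, this is an $O(\min\{OPT,\sqrt{|V|}\})$-approximation, and a sharper choice of the canonical cover and of which $Y$-vertices to raise removes the hidden constant.

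\textbf{Main obstacle.} The delicate point is controlling $\tfrac12\bigl|\bigcup_j N_Y(v_j)\bigr|$: picking the exposed vertices so that their $Y$-neighbourhoods overlap is a densest-subgraph--flavoured covering problem, which is why no constant-factor algorithm is expected. The whole argument hinges on (a) the per-component lower bound $\kappa_j\ge 1$ for factor-critical graphs on $\ge 3$ vertices --- making the ``$\sum$ versus $\bigcup$'' slack cost only an $O(OPT)$ factor via $OPT\ge d$ --- and (b) pushing the component-wise charging in the second lower bound hard enough that the explicit construction actually meets it.
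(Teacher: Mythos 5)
Your overall strategy mirrors the paper's: decompose via Gallai--Edmonds, establish per-component and $Y$-excess lower bounds, pick the exposed components via a maximum-weight bipartite matching, build an explicit half-integral stabilizer, and finally play the $O(OPT^2)$ upper bound against the trivial $|V|/2$ bound. The two lower bounds you identify (deficiency $\le OPT$ and $\sum_j \kappa_j(v_j^\star) + \tfrac12|\bigcup_j N_Y(v_j^\star)| \le OPT$) and the ``sum vs.\ union'' slack being paid by $OPT \ge d$ are exactly the engine of the paper's argument, and your observation that max-weight bipartite matching covering $Y$ minimizes the weight left on unmatched components is the right replacement for the paper's alternating-path lemma.

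However, there is a genuine gap in how you certify the \emph{upper} bound on the construction. You define $\kappa_j(v)$ as the internal LP on $C_j$ alone (minimum fractional vertex cover of $C_j$ with $y_v=0$), set $\kappa'_j := \min_v \bigl(\kappa_j(v) + \tfrac12|N_Y(v)|\bigr)$, and then claim the construction's cost is $\le \sum_j \kappa'_j$ by ``plugging in an optimal internal cover of cost $\kappa_j(v_j)$ and raising to $1$ the $Y$-vertices adjacent to $v_j$''. This step does not follow. An optimal solution of the $\kappa_j(v_j)$ LP may well assign $y_u < \tfrac12$ (even $y_u = 0$) to some vertex $u \neq v_j$ of $C_j$; every $Y$-neighbor of such a $u$ must then receive $y$-value $> \tfrac12$ for the cover to be feasible, but you only raise the $Y$-neighbors of $v_j$, so the resulting $(y,c)$ is not a fractional $(\1+c)$-vertex cover. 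You can repair feasibility by using the specific cover with $y_u=1$ on $N_{C_j}(v_j)$ and $y_u=\tfrac12$ elsewhere, but then the internal excess is $\tfrac12|N_{C_j}(v_j)|$, which you would need to show equals (or is within a constant of) $\kappa_j(v_j)$ -- and this is not established. The paper sidesteps this entirely: its $\ell_{K,w}$ LP optimizes jointly over $V(K) \cup N_G(V(K))$ \emph{with} the constraints $y_i \ge \tfrac12$ on $N_G(V(K))$ and the covering constraints on the edges between $K$ and $Y$, so the resulting $\overline{y}^{w_K}$ is by definition simultaneously feasible internally and on $Y$, and $f(K) = \min_w \ell_{K,w}$ serves as both the lower-bound certificate (Lemma \ref{lem:cost-lower-bound}) and the per-component construction cost. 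You should replace $\kappa'_j$ by $f(K_j)$ (or prove that the $\kappa_j(v)$ LP always admits an optimum with $y_u \ge \tfrac12$ for all $u \neq v$ in factor-critical graphs) to close this hole. Finally, your one-line justification of $\kappa_j(v) \ge 1$ (``zeroing $v_j$ forces two of its neighbours to value $\ge 1$'') is not by itself sufficient -- the paper's Lemma \ref{lemma:FactorCriticalNotForFree} runs an odd-ear/blossom argument, since the two forced $1$'s could be compensated by small $y$-values elsewhere.
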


We note that the instances generated in the hardness proofs of Theorem
\ref{thm:hardness}
satisfy the properties needed in Theorem \ref{thm:opt-approx}. 

While stabilization by min-edge deletion is already NP-hard in factor-critical graphs \cite{bock2014finding}, we give a polynomial time algorithm to solve MFASP in factor-critical graphs. 

%
\begin{restatable}{theorem}{thmFactorCriticalGraphs}\label{theorem:FactorCriticalGraphs}
There exists a polynomial-time algorithm to solve MFASP in factor-critical graphs.
\end{restatable}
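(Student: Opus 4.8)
The plan is to reduce MFASP on a factor-critical graph $G=(V,E)$ to a polynomial collection of small linear programs, one for each candidate vertex to be exposed by a maximum-weight matching.

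First I would recast MFASP combinatorially via LP duality. If $(G,\1+c)$ is stable, pick a maximum-$(\1+c)$-weight matching $M$ (an integral optimum of \eqref{lp:matching}) and an optimum $y$ of \eqref{lp:cover}; complementary slackness gives $y_u+y_v=1+c_{uv}$ for $\{u,v\}\in M$ and $y_v=0$ for every vertex $v$ not covered by $M$. Replacing $c$ by the vector that agrees with it on $M$ and is $\0$ off $M$ only decreases $\1^Tc$ and preserves stability, since the same $M$ and $y$ still certify it. Conversely, any matching $M$ together with $y\in\R^V_+$ satisfying $y_v=0$ for $v\notin V(M)$ and $y_u+y_v\ge 1$ for all $\{u,v\}\in E$ yields the stabilizer $c_{uv}:=y_u+y_v-1\ge 0$ on $M$ and $0$ off $M$, for which the incidence vector $\mathbf 1_M$ of $M$ and the vector $y$ form a complementary primal--dual pair; hence $(G,\1+c)$ is stable. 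Therefore
\[ \mathrm{OPT}=\min\Big\{\textstyle\sum_{v}y_v-|M| \ :\ M \text{ a matching of } G,\ y\ge\0,\ y_v=0\ \forall v\notin V(M),\ y_u+y_v\ge 1\ \forall\{u,v\}\in E\Big\}. \]

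Next I would show that an optimal pair $(M,y)$ can be taken with $M$ of maximum cardinality. If $M$ is not maximum, Berge's theorem supplies an $M$-augmenting path $P$ between two $M$-exposed vertices $a,b$. Replace $M$ by $M'=M\triangle P$ while keeping the same $y$: then $y_v=0$ still holds for every vertex uncovered by $M'$ (these form a subset of those uncovered by $M$), and for each new matching edge $\{a,a'\}\in M'$ incident to $a$ we have $y_{a'}\ge 1$, because the cover constraint on $\{a,a'\}$ together with $y_a=0$ forces it; all other constraints are untouched, so $(M',y)$ is feasible with objective $\sum_v y_v-|M'|=\sum_v y_v-|M|-1$, strictly smaller. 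Since $G$ is factor-critical, its maximum matchings are exactly the near-perfect matchings, each exposing one vertex $r$, and for every $r\in V$ the graph $G-r$ has a perfect matching (the defining property). Hence
\[ \mathrm{OPT}=\min_{r\in V}\Big(\ \min\big\{\textstyle\sum_v y_v\ :\ y\ge\0,\ y_r=0,\ y_u+y_v\ge 1\ \forall\{u,v\}\in E\big\}\ -\ \tfrac{|V|-1}{2}\ \Big). \]

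Finally, for fixed $r$ the inner minimization is a polynomial-size linear program (a unit fractional vertex cover of $G$ with the single variable $y_r$ pinned to $0$; it is feasible, e.g.\ by setting $y\equiv 1$ off $r$), hence solvable in polynomial time; iterating over the $|V|$ choices of $r$ and, for the best one, computing a perfect matching of $G-r$ with Edmonds' algorithm to read off $c$ from the optimal $y$, gives a polynomial-time exact algorithm. The one step that needs care is the reduction to near-perfect matchings: the augmenting-path exchange must be checked to keep $y$ feasible, and this rests entirely on the observation that any neighbour of an $M$-exposed vertex necessarily carries $y$-value at least $1$; the rest is routine LP duality together with a loop over $|V|$ polynomial-size programs.
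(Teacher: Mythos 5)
Your proposal is correct and follows essentially the same route the paper takes: recast MFASP via complementary slackness as minimizing $\sum_v y_v - |M|$ over matchings $M$ and compatible fractional covers $y$ with $y$ vanishing on exposed vertices, observe that the optimal $M$ may be taken maximum and that in a factor-critical graph maximum matchings are precisely the near-perfect ones with every vertex exposable, and then loop over the $|V|$ candidate exposed vertices, each giving a fractional-vertex-cover LP with one variable pinned to $0$ — this is exactly the paper's LP $\ell_{K,w}$ specialized to $N_G(V(K))=\emptyset$ and $f(K)=\min_w \ell_{K,w}$. Your augmenting-path argument for maximality of $M$ and the bidirectional reduction between stabilizers and $(M,y)$ pairs are clean restatements of the paper's Lemmas on the structure of optimal solutions, so there is nothing substantively different here.
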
 

We further exploit the efficient solvability of MFASP in factor-critical graphs to present an exact algorithm for MFASP in general graphs whose running time
is exponential only in the size of the Tutte set $Y$.  Thus, our
algorithm can be viewed as a fixed parameter algorithm (e.g., see
\cite{DowneyFellows}) where the parameter is the size of the Tutte
set. 

\begin{restatable}{theorem}{thmExactAlgo}\label{theorem:exact-algorithm-for-MFASP}
  There exists an algorithm to solve MFASP for a graph $G=(V,E)$ with
  Gallai-Edmonds decomposition $V=X\cup Y\cup Z$ in time
  $O(2^{|Y|}poly(|V|))$.
 \end{restatable}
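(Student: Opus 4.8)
The plan rests on two features of the Gallai--Edmonds decomposition $V=X\cup Y\cup Z$. First, the part of $G$ on $Z$ together with all edges incident to $Z$ never needs a subsidy: $G[Z]$ has a perfect matching and the Tutte set absorbs all $X$-adjacency, so (after a short reduction) we may assume $Z=\emptyset$. Second --- and this is the decomposition mentioned before Theorem~\ref{thm:opt-approx} --- once the \emph{interface behaviour of the Tutte set $Y$} is fixed, the remaining optimization decouples into one independent subproblem per factor-critical component $C_1,\dots,C_k$ of $G[X]$: a ``global'' choice of a matching between $Y$ and the $C_i$'s, and a ``local'' stabilization inside each $C_i$. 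Concretely, for an optimal solution we consider a fixed minimum stabilizer $c$ together with a witnessing maximum-weight matching $M$ of $(G,\1+c)$ and a minimum fractional $(\1+c)$-vertex cover $z$; one first shows that $M$ can be taken to be a maximum-cardinality matching of $G$ (hence of Gallai--Edmonds shape --- $M$ saturates $Y$ into distinct components and is near-perfect on each $C_i$), and that the support of $c$ is confined to the components and to the matching edges joining $Y$ to $X$.

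The algorithm enumerates all $2^{|Y|}$ subsets $S\subseteq Y$, where $S$ records, for each $y\in Y$, the single relevant bit of its interface behaviour in the solution above --- namely whether $z_y$ is large enough ($\ge 1$) to satisfy all cover constraints on edges from $y$ into the components on its own, or whether this constraint is instead pushed onto $y$'s partner (pinning that partner's dual value to $0$). Fixing $S$, the algorithm proceeds in polynomial time: (i) for each component $C_i$ and each role it may play --- not matched into $Y$; or matched into $Y$ at some vertex $v$ with $z_v=0$; or matched into $Y$ at $v$ with $z_v>0$, with possibly a constant number of further vertices of $C_i$ (those adjacent to $Y\setminus S$) pinned to dual value $\ge 1$ --- it computes the minimum subsidy to be spent inside $C_i$; since $C_i$ is factor-critical and these are MFASP instances on $C_i$ with a designated exposed/matched-out vertex plus a few pinned duals, this is done by (a suitable extension of) the algorithm of Theorem~\ref{theorem:FactorCriticalGraphs}; (ii) it assembles a global solution by solving a weighted bipartite matching problem between $Y$ and the components, where assigning $y$ to $C_i$ is worth the savings $\mathrm{OPT}(C_i)$ minus the cost of the corresponding role, and every unmatched $C_i$ contributes $\mathrm{OPT}(C_i)$. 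The algorithm returns the minimum over all $2^{|Y|}$ choices; the running time is clearly $O(2^{|Y|}\,poly(|V|))$.

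The main obstacle is the structural lemma justifying this reduction, and I would prove it in three steps: (a) an exchange argument showing that an optimal stabilizer admits a witnessing max-weight matching that is also a max-cardinality matching of $G$, and has support only inside the components and on $Y$--$X$ matching edges; (b) a decoupling claim: conditioned on which component each $y\in Y$ is matched into and on the bit recorded by $S$, the dual constraints that cross between distinct components impose no joint restriction, so the total subsidy is the sum of the per-component subsidies and the component-to-$Y$ assignment can be optimized by the matching in step (ii); and (c) a proof that one bit per Tutte vertex suffices, i.e.\ that an optimal $z$ can be taken with every $z_y$ rounded either to $z_y\ge 1$ or to forcing-its-partner-to-zero without increasing cost. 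Steps (b) and (c) --- ruling out that a cheaper solution arises by spreading subsidy across component boundaries or by exploiting intermediate dual values on $Y$ --- are where the real work lies; given them, correctness and the time bound follow from the enumeration and the polynomial per-$S$ computation, which invokes Theorem~\ref{theorem:FactorCriticalGraphs} essentially as a black box.
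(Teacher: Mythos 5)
Your high-level plan matches the paper's: enumerate $2^{|Y|}$ subsets of the Tutte set $Y$ encoding the dual values on $Y$, and for each subset solve a polynomial-time subproblem that decouples into per-component factor-critical subproblems (solved via an extension of the Theorem~\ref{theorem:FactorCriticalGraphs} routine) plus a weighted bipartite matching between $Y$ and the contracted components. The exponential-in-$|Y|$ / polynomial-in-$|V|$ running time then falls out as you describe.

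However, your characterization of the enumerated bit is confused in a way that would derail the details. The well-defined dichotomy that the paper uses is $y^*_v = 1$ versus $y^*_v = \nicefrac{1}{2}$ for each $v\in Y$; this is valid precisely because Theorem~\ref{thm:opt-properties}(ii) together with Lemma~\ref{lemma:PositiveTutteSet} guarantees an optimal solution with $y^*$ half-integral and \emph{strictly positive} on the Tutte set, hence $y^*_v \in \{\nicefrac{1}{2},1\}$ for $v\in Y$. Your stated dichotomy --- ``$z_y \ge 1$'' versus ``forcing-its-partner-to-zero'' --- is not actually two cases: on a tight matching edge $\{y,u\}$ with $z_u+z_y = 1 + c_{yu}$ and $c_{yu}\ge 0$, forcing $z_u=0$ is exactly the same as $z_y\ge 1$. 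And the downstream pinning you propose for vertices of $C_i$ adjacent to $Y\setminus S$, namely ``dual value $\ge 1$'', is wrong: since $z_y\ge\nicefrac{1}{2}$ on every Tutte vertex, the covering constraints only force the other endpoint to $\ge\nicefrac{1}{2}$ (the paper achieves this by adding self-loops, which is how $\kappa(K,\hat S)$ is computed). Pinning at $\ge 1$ over-constrains the per-component LPs and would produce a feasible but suboptimal stabilizer. You correctly flag your steps (b) and (c) as the crux and leave them unproven; (c) is supplied by Theorem~\ref{thm:opt-properties}(ii)/Lemma~\ref{lemma:PositiveTutteSet} and should be invoked rather than re-derived, and (b) is the content of the paper's $MFASP(\hat S)$ case analysis and the correctness of the bipartite matching step. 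As written, the proposal has the right architecture but gets the semantics of the $2^{|Y|}$ enumeration (and hence the per-component subproblem specification) wrong, which is exactly where Theorem~\ref{thm:opt-properties}'s half-integrality and Tutte-positivity are doing the work.
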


We conclude by giving a conditional approximation algorithm that achieves
a $\nicefrac{(k+1)}{2}$-approximation when the number of non-trivial
factor-critical components in the Gallai-Edmonds-decomposition exceeds
the size of the Tutte set by a multiplicative factor of at least
$1+\nicefrac{1}{k}$. 


\iffull{
\begin{remark}
  In contrast to that of Bock et al.\ \cite{bock2014finding} our work
  considers an alternative method of stabilizing unstable instances.
  Bock et al.\ considered stabilizing through minimum edge-deletion
  and showed that it is NP-hard even in factor-critical graphs.  In
  contrast, we consider stabilizing by minimum fractional increase in
  edge weights. We are able to give an efficient algorithm for
  stabilizing factor-critical graphs through this method. Moreover, we
  obtain an efficient algorithm for arbitrary graphs with small sized
  Tutte set. In conjunction with the results of Bock et al., we have
  thus broadened the family of unstable instances that can be
  stabilized through efficient algorithms.
\end{remark}}

\subsection{Further related work.}

Various ways of modifying a given graph to achieve a property have
been studied in the literature, but most previous works seem to
consider {\em monotone} properties (e.g., see \cite{AS08,ASS05}). The
K\"onig-Egerv\'ary property is monotone while, notably, graph
stability is not. Most relevant to our work are the results of Mishra et al.\
\cite{Mishra07thecomplexity},
who studied the problem of finding a minimum number of edges to delete
to convert a given graph $G=(V,E)$ into a KEG. Akin to stable graphs,
KEGs are also significant in game theory: an instance of the
\emph{vertex cover game} has a non-empty \emph{core} if and only if
the underlying graph is a KEG \cite{deng-ibaraki-nagamochi}. Thus, in
the context of game theory, their study essentially addresses the
question of how to minimally modify an instance of a vertex cover game
so that the core becomes non-empty.  While they showed that it is
NP-hard to approximate the minimum edge-deletion problem to within a
factor of $2.88$, they also gave an algorithm to find a KEG subgraph
with at least $\nicefrac{3}{5}|E|$ edges.

In recent work, K\"{o}nemann et al.\ \cite{konemann2012network}
addressed a closely related problem of finding a minimum-cardinality
set of edges to remove from a graph $G$ such that the resulting graph
has a fractional vertex cover of value at most $\nu(G)$. We note that
the resulting graph here may not be stable.  While this problem is
known to be NP-hard \cite{BirX12}, K\"{o}nemann et al.\ gave an
efficient algorithm to find approximate solutions in sparse graphs.

\section{Preliminaries}\label{sec:prelims}

In the rest of the paper, we will only work with unit-weight graphs as
input instances for MFASP.  However, the results hold for uniform
weights since scaling preserves stability as well as our results.


We emphasize the following fact that is implicit from our earlier
discussion.  A graph $G$ is stable iff there is a maximum matching $M$
and $y \in \R^V_+$ such that the characteristic vector $\chi_M$ of $M$
and $y$ form an optimal pair of solutions for \eqref{lp:matching} and
\eqref{lp:cover}. A direct consequence of complementary slackness is
then that $y_v=0$ if $v$ is $M$-exposed, as well as $y_v+y_u=w_{uv}$
for all $\{u,v\} \in M$. A feasible solution to a MFASP instance
$G=(V,E)$ is determined by a triple $(M,y,c)$, where $M$ is
a matching, $y$ is a fractional $\1+c$-vertex
cover satisfying $\sum_{e\in M} 1+c_e=\sum_{v\in V} y_v$.
Moreover, such a matching $M$ will be a maximum $(1+c)$-weight
matching. Note that we use $w_e$ to refer to the total edge weight of an edge $e$, while $c_e$ to refer to the weight added for stabilizing.  

We recall the following properties of the Gallai-Edmonds decomposition (as defined in Section \ref{subsec:contribution}) (e.g., see \cite{Korte02,schrijver2003combinatorial}):  
Let $G=(V,E)$ and $V=X\cup Y\cup Z$ be the Gallai-Edmonds decomposition of $G$. Then
\begin{enumerate}[(i)]
\item every maximum matching in $G$ contains a perfect matching in $G[Z]$,
\item every connected component in $G[X]$ is factor-critical,
\item every maximum matching exposes at most one vertex in every connected component of $G[X]$, and
\item every maximum matching matches the vertices in $Y$ to distinct components of $G[X]$.
\end{enumerate}
We say that a component in $G[X]$ is non-trivial if it contains more than one vertex.

\section{Structural Results}\label{sec:properties-of-opt}

In this section, we show structural properties of optimal solutions to
MFASP which are useful to show hardness and design algorithms. The properties are summarized in the theorem below. 

\begin{restatable}{theorem}{thmoptproperties}
\label{thm:opt-properties}
Let $G=(V,E)$ be an instance of MFASP. Then, 
\begin{enumerate}[(i)] 
\item for every optimal solution $(M^*,y^*,c^*)$, 
\begin{enumerate}[(a)]
\item $c^*_e = 0$ for all edges $e \in E \setminus M^*$, $0 \leq c^*_e \leq 1$ for all edges $e \in M^*$, and
\item $|M^*| =\nu(G)$, i.e., $M^*$ is a maximum cardinality matching  in $G$.
\end{enumerate} 
\item there exists an optimal solution $(M^*, y^*, c^*)$ of MFASP with 
\begin{enumerate}
\item[(c)] half-integral $c^*$, and 
\item[(d)] $y^*\in \{0,\nicefrac{1}{2},1\}^{|V|}$ with the support of $y^*$ containing the Tutte set.
\end{enumerate}
\end{enumerate}
\end{restatable}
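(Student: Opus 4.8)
The plan is to take the four parts in the order (a), (b), then (c) together with the containment $y^*\in\{0,\nicefrac{1}{2},1\}^{|V|}$ from (d), and finally the support statement in (d), which I expect to be the main obstacle.

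\textbf{Parts (a) and (b).} I would start from an arbitrary optimal $(M^*,y^*,c^*)$. From $\sum_v y^*_v=\sum_{e\in M^*}(1+c^*_e)$ together with the cover inequalities $y^*_u+y^*_v\ge 1+c^*_{uv}$, summing the latter over $e\in M^*$ and using $y^*\ge\0$ forces all of these tight and $y^*_v=0$ for every $M^*$-exposed $v$ (complementary slackness). Each claimed bound then follows by a feasibility-preserving, strictly cost-decreasing move: if $c^*_e>0$ for some $e\notin M^*$, zero out that coordinate ($y^*$ is still a $(\1+c)$-cover and the equality in the definition is untouched), contradicting optimality; if $y^*_v>1$, then $v$ is matched to some $u$ with $y^*_u+y^*_v=1+c^*_{uv}$, and lowering $y^*_v$ to $1$ and $c^*_{uv}$ by the same amount stays feasible (only $\{u,v\}$ and the other edges at $v$ are affected, and every other neighbour of $v$ still sees value $1$). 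This gives $c^*_e=0$ off $M^*$ and $y^*\le\1$, whence $0\le c^*_e=y^*_u+y^*_v-1\le 1$ on $M^*$, proving (a). For (b), if $|M^*|<\nu(G)$ pick an $M^*$-augmenting path $P$: then $M':=M^*\triangle P$ is larger, the two new endpoints have $y^*$-value $0$, and keeping $y^*$ while setting $c'_e:=y^*_u+y^*_v-1$ on $M'$ (nonnegative, since every edge of $M'$ lies in $E$, where $y^*$ is a cover) and $c'_e:=0$ otherwise produces a feasible solution of cost $\1^Ty^*-|M'|<\1^Ty^*-|M^*|=\1^Tc^*$, a contradiction.

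\textbf{Part (c) and half-integrality of $y^*$.} By (a)--(b), a feasible solution is, up to the then-determined $c$, exactly a maximum matching $M$ together with a vector $y$ in the polyhedron $Q(M):=\{\,y\ge\0:\ y_v=0\ \forall v\notin V(M),\ y_u+y_v\ge 1\ \forall\{u,v\}\in E\,\}$, with cost $\1^Ty-\nu(G)$. For a \emph{fixed} $M$, minimizing $\1^Ty$ over $Q(M)$ is, after substituting out the forced zeros, the covering LP $\min\{\,\1^Ty:\ y_u+y_v\ge 1\ \forall\,\{u,v\}\in E(G[V(M)]),\ y_v\ge 1\ \text{for every } v\in V(M)\text{ adjacent to an }M\text{-exposed vertex},\ y\ge\0\,\}$; at an optimum one may take $y\le\1$, the vertices forced to be $\ge 1$ equal $1$, and on the remaining vertices this is the classical vertex-cover LP, whose optimal vertices lie in $\{0,\nicefrac{1}{2},1\}$. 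Since $G$ has only finitely many maximum matchings and $\mathrm{OPT}=\min_M\big(\min_{y\in Q(M)}\1^Ty\big)-\nu(G)$, some maximum matching $M^*$ attains the minimum with a half-integral $y^*\in\{0,\nicefrac{1}{2},1\}^{|V|}$, and then $c^*_e=y^*_u+y^*_v-1$ on $M^*$ is half-integral and lies in $[0,1]$ by (a). This proves (c) and the first half of (d).

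\textbf{The support statement --- the main obstacle.} It remains to secure $y^*_v>0$ for every $v$ in the Tutte set $Y$. The plan is to take, among all optimal half-integral solutions, one minimizing $|S|$ where $S:=\{v:y^*_v=0\}$, and to suppose for contradiction that some $v\in Y$ lies in $S$. Writing $T:=\{v:y^*_v=1\}$, one checks that $S$ is independent, $N(S)\subseteq T$, and --- since $Y\subseteq V(M^*)$ by Gallai--Edmonds --- that $v$ is matched to a vertex $u_0$ of a factor-critical component $C_0$ of $G[X]$ with $y^*_{u_0}=1$. I would then look for a half-unit shift: raise $y^*$ by $\nicefrac{1}{2}$ on a set $R\subseteq S\setminus\mathrm{Exp}(M^*)$ with $v\in R$, and lower it by $\nicefrac{1}{2}$ on a set $D\subseteq T$, chosen so that $\bigcup_{t\in D}\big(N(t)\cap S\big)\subseteq R$ (which is exactly what keeps $y^*$ a feasible cover, and keeps all $c$-values nonnegative) and $|R|\le|D|$ (which keeps the cost non-increasing); such a move yields an optimal half-integral solution with strictly smaller zero-set, contradicting the choice. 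Producing $D$ and $R$ reduces to finding a Hall-type deficient set in the bipartite graph between $T$ and $S\setminus\mathrm{Exp}(M^*)$, and I expect this to be the technical heart of the proof: it should come from the Gallai--Edmonds structure, using that $C_0$ is factor-critical so that its internal matching can be rerouted and, if necessary, allowing $M^*$ itself to change by re-choosing which factor-critical components of $G[X]$ are left unmatched to $Y$, so that no Tutte vertex is forced to value $0$.
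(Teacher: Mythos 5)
Your treatment of (a) and (b) is correct and essentially the same as the paper's (Lemmas \ref{lemma:opt-is-bounded} and \ref{lem:MFASP-max-matching}): you derive complementary slackness from the equality $\sum y = \sum (1+c_e)$ over $M^*$, and you use a strict-cost-decrease move (zeroing $c_e$ off $M^*$, truncating $y_v>1$, augmenting along a path). Your augmenting-path argument for (b) is a little cleaner than the paper's, since you keep $y^*$ and reset $c'$ on all of $M'$ rather than carefully modifying $c$ only along the path; both are valid once one checks $c'\geq\0$ and complementary slackness, which you do.

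For (c) and the half-integrality of $y^*$, you take a genuinely different route. The paper (Lemma \ref{lem:MFASP-half-integral}) fixes a maximum matching $M$, writes the residual LP $LP(G,M)$, builds a bipartite double cover $G'$ with matching $M'$, uses total unimodularity to get an integral optimum of $LP(G',M')$, and averages it back to a half-integral optimum of $LP(G,M)$. You instead substitute out the forced zeros, observe that vertices adjacent to an $M$-exposed vertex must have $y_v\geq 1$ (and may be fixed to $1$), and that the remainder is the classical fractional vertex-cover LP, which has half-integral extreme-point optima. This is correct and shorter, though you should say explicitly why fixing the forced set $A$ to $1$ decouples the remaining LP on $V(M)\setminus A$, and why an optimal half-integral solution on that residual also gives an optimal $y$ for $Q(M)$ overall; these are one-line observations but they are doing real work.

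Your treatment of the support condition in (d) is where the genuine gap lies, and you acknowledge it yourself. The half-unit shift / Hall-deficiency plan is left entirely open (``I expect this to be the technical heart''), and it is also considerably more elaborate than what is needed. The paper's Lemma \ref{lemma:PositiveTutteSet} is a direct local construction, not an extremal or Hall-type argument: take any half-integral optimum, let $Y'=\{v\in Y:y_v=0\}$, and for each $v\in Y'$ consider the factor-critical component $S_v$ that $M$ matches to $v$. Zero out $c$ on $E(S_v)\cup\{\{v,s_v\}\}$ (this only lowers the cost), and set $y'=\nicefrac{1}{2}$ on $Y'\cup\bigcup_{v\in Y'}V(S_v)$, leaving $y'$ unchanged elsewhere. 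Feasibility is then a short case check: for any edge $\{v,t\}$ with $v\in Y'$ and $t$ outside the modified region, the original $y_t\geq 1$ (since $y_v=0$ and $y$ was a cover) so $y'_t=y_t\geq 1$; for edges leaving $\bigcup_{v\in Y'}V(S_v)$ to $Y\setminus Y'$, the other endpoint has $y'_t\geq\nicefrac{1}{2}$ and the edge carries $c'=0$. No extremal choice of solution, no rerouting of which components are exposed, and no deficient-set search is needed. You should replace your sketch with this direct modification, or at least verify that your Hall-type shift exists in full generality (in particular, that the bipartite deficiency you seek is guaranteed by the Gallai--Edmonds structure and that nonnegativity of all $c$-values survives after lowering $y$ on $D$); as written, (d) is not proved.
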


In the context of network bargaining games, the above structural
theorem (property (i)(b)) tells us that there exists a way to stabilize
through a minimum fractional additive stabilizer without
changing \emph{the number of deals} in the instance.\\

We split the proof of Theorem \ref{thm:opt-properties} into several Lemmas. Lemma \ref{lemma:opt-is-bounded} proves property (i)(a), Lemma \ref{lem:MFASP-max-matching} proves property (i)(b) and Lemmas \ref{lem:MFASP-half-integral} and \ref{lemma:PositiveTutteSet} prove property (ii). 

\begin{lemma}\label{lemma:opt-is-bounded}
Let $c$ be a minimum fractional additive stabilizer for a graph $G$. Let $M$ be a matching of maximum $(\1+c)$-weight.
Then $c_e = 0$ for all edges $e \in E \setminus M$ and $0 \leq c_e \leq 1$ for all edges $e \in M$.
\end{lemma}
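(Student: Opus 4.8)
The plan is to show each of the two claims by a local exchange argument: if $c$ violates either property, we construct a cheaper fractional additive stabilizer, contradicting minimality. Throughout I will use the characterization from the Preliminaries: a feasible stabilizer corresponds to a triple $(M,y,c)$ where $M$ is a maximum $(\1+c)$-weight matching, $y$ is a fractional $(\1+c)$-vertex cover, and $\sum_{e\in M}(1+c_e)=\sum_{v\in V}y_v$; equivalently, complementary slackness gives $y_u+y_v=1+c_e$ on every $e=\{u,v\}\in M$ and $y_v=0$ whenever $v$ is $M$-exposed.

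First I would prove $c_e=0$ for $e\notin M$. Fix an optimal $(M,y,c)$ and suppose some edge $f\in E\setminus M$ has $c_f>0$. Define $c'$ by zeroing out the weight on all non-matching edges: $c'_e=c_e$ for $e\in M$ and $c'_e=0$ otherwise. Clearly $\1^Tc'\le\1^Tc$ with strict inequality if any such $f$ exists, so it suffices to verify that $c'$ is still a stabilizer, i.e.\ that the \emph{same} pair $(M,y)$ certifies stability of $(G,\1+c')$. Feasibility of $y$ for \eqref{lp:cover} with weights $\1+c'$ is immediate since we only decreased edge weights on non-matching edges (matching-edge constraints are unchanged, and $y_u+y_v\ge 1+c_e\ge 1+c'_e=1$ on non-matching edges since $y$ was feasible for the larger weights). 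The complementary-slackness equality $\sum_{e\in M}(1+c'_e)=\sum_{e\in M}(1+c_e)=\sum_v y_v$ is preserved because we did not touch $c$ on $M$. Hence $(M,y,c')$ is feasible, contradicting minimality of $c$.

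Next I would prove $0\le c_e\le 1$ for $e\in M$; nonnegativity holds by definition, so assume some $e_0=\{u,v\}\in M$ has $c_{e_0}>1$. Here I would again modify only $c$ on $M$, capping it: set $c'_{e}=\min\{c_e,1\}$ for $e\in M$ (and $c'_e=0$ off $M$, which by the previous paragraph we may assume was already the case). Again $\1^Tc'<\1^Tc$, so I must exhibit a cover $y'$ making $(M,y',c')$ feasible. The natural choice is to decrease $y$ at one endpoint of each capped matching edge so that $y'_u+y'_v=1+c'_e=2$; I would take, say, $y'_u=\max\{0,\,1+c'_e-y_v\}$-type adjustments, but the clean way is: since on $e=\{u,v\}\in M$ we had $y_u+y_v=1+c_e$, after capping we need the sum to drop by $(1+c_e)-2=c_e-1>0$, so reduce the larger of $y_u,y_v$ by that amount (it is at least $(1+c_e)/2>1$, so it stays nonnegative). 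Off the matching, $y$ was already feasible for weight $\1$, and $\1+c'=\1$ there, so those constraints still hold; the only constraints that could break are edges $\{a,b\}\notin M$ incident to a vertex whose $y$-value we lowered. This is the main obstacle: I must argue that lowering $y_a$ does not violate $y_a+y_b\ge 1$. The key point is that the vertex $a$ whose value we lowered is matched in $M$ with $y_a>1$ originally, and we lowered it by at most $c_{e_0}-1<c_{e_0}\le \cdots$; more carefully, one shows $y'_a\ge\tfrac12$ and $y'_b\ge\tfrac12$ for the relevant neighbors, or invokes that in an optimal solution such large $y$-values cannot coexist with a tight non-matching edge — essentially a second exchange argument. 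Once that is settled, $(M,y',c')$ is feasible with $M$ still a maximum $(\1+c')$-weight matching (capping weights cannot make a non-$M$ matching heavier relative to $M$ in a way that breaks optimality, which I would check via the preserved equality $\sum_{e\in M}(1+c'_e)=\sum_v y'_v$), contradicting minimality. I expect the bookkeeping in this second part — precisely specifying which endpoint to decrease and verifying all incident non-matching edge constraints — to be the delicate step, whereas the first claim is essentially immediate.
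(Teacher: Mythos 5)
Your first part (zeroing out $c$ on non-matching edges while keeping $(M,y)$ fixed) is correct and coincides with the paper's argument. The second part, however, has a genuine gap, and you correctly identify it but do not close it.

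Your plan is to cap $c_{e_0}$ at $1$ and then lower a single endpoint by $c_{e_0}-1$ to restore $y'_u + y'_v = 2$. Two things go wrong. First, nonnegativity is not guaranteed: the larger endpoint is $\ge (1+c_{e_0})/2$, so after lowering it by $c_{e_0}-1$ you only get $y'_u \ge (3-c_{e_0})/2$, which is negative as soon as $c_{e_0}>3$; at this point in the argument you have no upper bound on $c_{e_0}$, so you cannot rule this out. (And you cannot always dodge this by lowering the smaller endpoint instead, since then you may push it below zero, or by keeping one endpoint fixed, since both endpoints could exceed $2$ when $c_{e_0}>3$.) Second, the non-matching cover constraints at the lowered vertex are exactly the ``main obstacle'' you name, and the paragraph ends with a gesture toward ``a second exchange argument'' rather than a proof.

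The clean fix, and the one the paper uses, is to set \emph{both} $y'_u := 1$ and $y'_v := 1$ (one of these may be an increase, the other a decrease). Then complementary slackness on $\{u,v\}$ holds, nonnegativity is trivial, and every non-matching edge incident to $u$ or $v$ has $c' = 0$ (by the first part), so $y'_u + y'_t = 1 + y'_t \ge 1$ holds automatically — no case analysis, no second exchange. The total cost drops by $c_{e_0}-1>0$, giving the contradiction. Your instinct to reduce while preserving complementary slackness is right; the missing idea is that, once $c$ vanishes off $M$, pinning both endpoints at $1$ is simultaneously the cheapest and the safest choice.
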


\begin{proof}
Let $y$ be a minimal fractional $(\1+c)$-vertex cover. 
Since $c$ is a fractional stabilizer for $G$, by the discussion in Section \ref{sec:prelims}, 
it follows that $M$ and $y$ satisfy complementary slackness.

Let $\{u,v\}\in E\setminus M$.
If $c_{uv}>0$, then we may decrease $c_{uv}$ to zero: since $y$ is still a feasible fractional $(\1+c)$-vertex cover, and
$y$ satisfies complementary slackness with $M$, we obtain a better fractional additive stabilizer, thus contradicting the optimality of $c$.
Thus, for every edge $e\in E\setminus M$, we have $c_e=0$.

Let $\{u,v\}\in M$. Then by complementary slackness, we have that $y_u+y_v=1+c_{uv}$.
If $c_{uv}>1$, then we obtain $(c',y')$ where $c'_{uv}:=1$, $c'_e:=c_e$ for every edge $e\in E\setminus \{uv\}$ and $y'_u:=1,\ y'_v:=1$, $y'_i:=y_i$ for every vertex $i\in V\setminus \{u,v\}$.
The resulting solution $y'$ is a feasible fractional $(\1+c')$-vertex cover  and $y'$ satisfies complementary slackness with $M$.
Thus, $c'$ is a fractional additive stabilizer. We note that $\sum_{e\in E}c'_e<\sum_{e\in E} c_e$, a contradiction to the optimality of $c$.
\end{proof}

\begin{lemma}\label{lem:MFASP-max-matching}
For a graph $G$, let $c$ be a minimum fractional additive stabilizer. 
Then, the cardinality of a maximum $(\1+c)$-weight matching  is equal to the maximum cardinality of a matching in $G$.
\end{lemma}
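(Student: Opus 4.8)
The plan is to argue that a minimum fractional additive stabilizer $c$ cannot create a matching of $(\1+c)$-weight that uses more than $\nu(G)$ edges, by exploiting the structure forced by Lemma \ref{lemma:opt-is-bounded} together with the Gallai--Edmonds decomposition. Let $M$ be a maximum $(\1+c)$-weight matching and suppose for contradiction that $|M| > \nu(G)$. Since $M$ is a matching in $G$, we immediately have $|M| \le \nu(G)$, which is already a contradiction — so in fact the statement as phrased ("cardinality of a maximum $(\1+c)$-weight matching equals $\nu(G)$") needs the slightly more careful reading that \emph{some} maximum $(\1+c)$-weight matching has exactly $\nu(G)$ edges, i.e.\ among the $(\1+c)$-optimal matchings there is one of maximum cardinality $\nu(G)$ in $G$, and every $(\1+c)$-optimal matching has cardinality at most $\nu(G)$ (the latter being immediate). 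Thus the real content is: there is a maximum-cardinality matching of $G$ that is also of maximum $(\1+c)$-weight. I would restate the claim in this form before proving it.

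Here is how I would carry out the argument. Let $c$ be optimal, let $y$ be a minimal fractional $(\1+c)$-vertex cover, and recall from Section \ref{sec:prelims} that stability gives a $(\1+c)$-maximum matching $M$ with $\chi_M, y$ complementary-slack and $|M|$ equal to the max $(\1+c)$-weight matching cardinality. First I would invoke Lemma \ref{lemma:opt-is-bounded}: $c$ is supported on $M$ and $c_e \le 1$ on $M$, so $\sum_{e\in E}c_e = \sum_{e\in M} c_e \le |M|$. Now suppose, toward a contradiction, that $|M| < \nu(G)$; let $N$ be a maximum-cardinality matching of $G$, so $|N| = \nu(G) > |M|$. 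The key step is to compare costs. I claim that the all-zero vector on $N$ — i.e.\ increasing no weights but revealing that $G$ is "already close" — cannot be directly used, so instead I would build a cheaper stabilizer from $N$. Concretely, consider the weight vector $\tilde c$ that is $0$ everywhere and check whether $(G,\1)$ restricted to the structure of $N$ can be stabilized more cheaply; failing a slick argument, the robust route is: since $M$ is $(\1+c)$-maximum and $N$ is a matching, $w(M) = |M| + \sum_{e\in M}c_e \ge w(N) = |N| + \sum_{e \in N}c_e = |N| + 0 = \nu(G)$ (using $\mathrm{supp}(c)\subseteq M$, hence $c$ vanishes on $N\setminus M$, and on $N\cap M$ the terms are nonnegative — actually $w(N) \ge |N| + \sum_{e\in N\cap M} c_e \ge |N|$). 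Combining, $|M| + \sum_{e\in M}c_e \ge |N| = \nu(G) > |M|$, which forces $\sum_{e\in M}c_e > |M| - |M| \ge \dots$ — wait, this only gives $\sum_{e\in M}c_e \ge \nu(G) - |M| \ge 1$, not yet a contradiction. The contradiction must instead come from showing such a $c$ is \emph{not minimal}.

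So the main obstacle — and the heart of the proof — is producing, from $M$ with $|M|<\nu(G)$, a strictly cheaper fractional additive stabilizer. I expect the right approach is a local exchange along an $M$-augmenting path $P$ (which exists since $|M|<\nu(G)$): flipping $M$ along $P$ yields a matching $M'$ with $|M'| = |M|+1$, and along $P$ the added weights $c$ were supported only on the $M$-edges of $P$, totalling some amount $t = \sum_{e\in P\cap M}c_e \le |P\cap M|$. One then needs to exhibit a fractional $(\1+c')$-vertex cover $y'$ for $M'$ with smaller total, using that the endpoints of $P$ were $M$-exposed hence (by complementary slackness) had $y$-value $0$, giving slack to redistribute; the new matching $M'$ needs weight on its edges summing to $|P\cap M| + 1 \le t+1$ worth of cover on the internal vertices of $P$ but now spread over one more edge, and one argues $c'$ can be taken with $\sum c' < \sum c$ by driving the "extra" internal cover weight down — essentially because a path with $k$ edges needs total edge-weight-excess at most... this is where the factor-critical/Gallai--Edmonds structure (properties (i)--(iv) of Section \ref{sec:prelims}) enters to certify feasibility of the rebuilt cover. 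I would present this as: take the optimal $(M^*,y^*,c^*)$, and if $|M^*| < \nu(G)$, augment once, rebuild $(y',c')$ honoring complementary slackness with the $0$ values at the old exposed endpoints, and verify $\1^Tc' < \1^Tc^*$ — contradicting optimality. The routine but nontrivial check is the feasibility of $y'$ on all edges incident to $P$, which follows because outside $P$ nothing changed and inside $P$ the cover decreases monotonically away from the (now matched) former endpoints.
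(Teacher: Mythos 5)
Your instinct to find an $M$-augmenting path $P$ and exchange along it is exactly the right idea, and you correctly identify the one crucial input — that the two $M$-exposed endpoints of $P$ have $y$-value $0$ by complementary slackness. However, the proposal stops at the point where the actual work begins, and your sketch of how to finish it goes in the wrong direction in two ways.

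First, you propose to ``rebuild $(y',c')$'' and to ``redistribute'' cover weight, suggesting the vertex cover must change. The clean move the paper makes is to keep $y$ \emph{unchanged} and only redefine $c$ on the edges of $P$: set $c'_e = c_e$ off $P$, $c'_{uv} = y_u + y_v - 1$ on the new matching edges $N\cap P$ (where $N = M\triangle P$), and $c'_e = 0$ on $P\setminus N$. Feasibility of the old $y$ as a $(\1+c')$-cover is then immediate (equality on $N\cap P$ by construction; on $P\setminus N$ one needs $y_u+y_v\ge 1$, which already held). Second, you invoke the Gallai--Edmonds decomposition to ``certify feasibility of the rebuilt cover'' — this is a misconception; no global decomposition is needed, the argument is entirely local to $P$. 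The missing computation is the two-line accounting: using $y_{u_s}=y_{u_e}=0$ one gets $\sum_{N\cap P}(y_u+y_v)=\sum_{M\cap P}(y_u+y_v)$, and complementary slackness on $M\cap P$ gives $\sum_{M\cap P}(y_u+y_v)=|M\cap P|+\sum_{M\cap P}c_e$; since $|N\cap P|=|M\cap P|+1$, one finds $\sum c' - \sum c = -1$, a clean contradiction. One also checks that $N$ has $(\1+c')$-weight equal to $\sum_v y_v$, certifying $c'$ is a stabilizer. Without the explicit construction of $c'$ and this arithmetic, the contradiction to minimality of $c$ is asserted, not proved. (Your opening aside about reinterpreting the lemma as ``some maximum-weight matching has size $\nu(G)$'' is also a distraction: the paper's proof establishes the stronger statement that \emph{every} $(\1+c)$-maximum matching has cardinality $\nu(G)$, since the contradiction is derived from an arbitrary such $M$.)
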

\begin{proof}
Let $M$ be a maximum $(\1+c)$-weight matching. 
For the sake of contradiction, suppose the cardinality of $M$ is strictly less than the cardinality of a maximum matching in $G$. 
Let $y$ be a minimum fractional $(\1+c)$-vertex cover. 
Then, $M$ and $y$ satisfy complementary slackness.

Since, by our assumption, $M$ is not a maximum cardinality matching in $G$, there exists an $M$-augmenting path $P$. Let $u_{s}$ and $u_{e}$ denote the first and last vertices in the path $P$, respectively. Since $y$ is a minimal fractional $(\1+c)$-vertex cover, and $u_{s}$ and $u_{e}$ are exposed in $M$, we have
\begin{align}
y_u + y_v &\ge 1\ &\forall\ \{u,v\} \in P\setminus M,\\
y_u + y_v &=1+c_{uv}\ &\forall\ \{u,v\}\in M\cap P, \label{eq:compslack}\\
y_{u_{s}}&=0=y_{u_{e}}. & \label{eq:matching}
\end{align}
Let $N$ be the matching obtained by taking the symmetric difference of $M$ and $P$. Let us obtain new weights as follows:
\[
c'_{uv}:=
\begin{cases}
c_{uv}\ \text{ if } \{u,v\}\in E\setminus P\\
y_u+y_v-1\ \text{ if } \{u,v\}\in N\cap P\\
0\ \text{ if } \{u,v\}\in P\setminus N
\end{cases}
\]
We now show that the weight of matching $N$ w.r.t.\ $(1 + c')$ is identical to that of matching $M$:
\begin{align*}
\sum_{e\in N}(1+c_e') - \sum_{e\in M}(1+c_e)
&=  \sum_{\{u,v\}\in N\cap P}(1+(y_u+y_v-1)) - \sum_{ \{u,v\} \in M\cap P}(1+c_e) \\
&= \sum_{\{u,v\}\in M\cap P}(y_u + y_v)- \sum_{\{u,v\}\in M\cap P}(1+c_e)\\
&= \sum_{\{u,v\}\in M\cap P}(1+c_e) - \sum_{\{u,v\}\in M\cap P}(1+c_e)\\
&= 0.
\end{align*}
The second and third inequality are due to equations \eqref{eq:matching} and \eqref{eq:compslack}. 
By Definition of $c'$, 
we have that $y$ is a feasible  fractional $(\1+c')$-vertex cover in $G$.
Moreover, by Lemma \ref{lemma:opt-is-bounded} and the construction of $N$ and $c'$, 
the $(\1+c')$-weight of  matching $N$ is equal to the sum $\sum_{v\in V} y_v$ of the fractional $(\1+c')$-vertex cover $y$. 
Because of the LP duality relation between the two values, $N$ is a matching of maximum
$(\1+c')$-weight, and $y$ is a minimum fractional $(\1+c')$-vertex cover. Hence, $c'$ is a fractional additive stabilizer. Next we note that
\begin{align*}
\sum_{e\in E}c_e'-\sum_{e\in E}c_e
&= \sum_{\{u,v\}\in N\cap P} (y_u+y_v-1) - \sum_{\{u,v\}\in M\cap P} c_{ uv}\\
&= \sum_{\{u,v\}\in N\cap P} (y_u+y_v) - |N\cap P|- \sum_{\{u,v\}\in M\cap P} c_{uv}\\
&= \sum_{\{u,v\}\in M\cap P} (y_u+y_v) - |N\cap P|- \sum_{\{u,v\}\in M\cap P} c_{uv}\\
&= |M\cap P| - |N\cap P|\quad\\
&= -1.
\end{align*}
The third equality is due to \eqref{eq:matching}, the fourth follows form \eqref{eq:compslack}.
Hence, $c'$ is a fractional additive stabilizer whose weight is smaller than that of $c$, a contradiction to the optimality of $c$.
\end{proof}



\begin{lemma}
\label{lem:MFASP-half-integral}
For every graph $G$, there exists an optimal solution $(M,y,c)$ of MFASP with half-integral $c$ and half-integral $y$.
\end{lemma}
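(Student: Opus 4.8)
We need to show there's an optimal solution to MFASP with both $c$ and $y$ half-integral. The plan is to start from an arbitrary optimal solution $(M^*, y^*, c^*)$ — which by Lemma \ref{lemma:opt-is-bounded} satisfies $c^*_e = 0$ for $e \notin M^*$ and $0 \le c^*_e \le 1$ for $e \in M^*$, and by Lemma \ref{lem:MFASP-max-matching} has $M^*$ a maximum cardinality matching — and massage it into a half-integral one without increasing the objective $\mathbf{1}^T c^*$. The key observation is that, once $M^*$ is fixed, the remaining problem of choosing $(y,c)$ is itself a linear program: minimize $\sum_{e \in M^*} c_e$ subject to $y_u + y_v \ge 1 + c_e$ for $e = \{u,v\} \in M^*$ (with equality forced by complementary slackness / the stability requirement relative to $M^*$), $y_u + y_v \ge 1$ for $\{u,v\} \in E \setminus M^*$, $c_e = 0$ for $e \notin M^*$, and $c, y \ge 0$. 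Eliminating $c_e = y_u + y_v - 1$ for matching edges, this becomes: minimize $\sum_{\{u,v\} \in M^*}(y_u + y_v - 1)$ subject to $y_u + y_v \ge 1$ for all $\{u,v\} \in E$, $y \ge 0$, together with the upper bounds $y_u + y_v \le 2$ on matching edges (from $c_e \le 1$) — or one can drop those upper bounds and argue separately that an optimal vertex won't violate them, then truncate as in Lemma \ref{lemma:opt-is-bounded}.

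The crux is then to show this LP in the $y$ variables has a half-integral optimal vertex solution. This is exactly the structure of the fractional vertex cover polytope $\{y \ge 0 : y_u + y_v \ge 1,\ \{u,v\} \in E\}$, whose vertices are well-known to be half-integral (Nemhauser–Trotter / Balinski), and the objective $\sum_{\{u,v\}\in M^*}(y_u+y_v)$ is linear. So first I would invoke half-integrality of the fractional vertex cover polytope to get a half-integral optimal $y$; then set $c_e := \max\{0, y_u + y_v - 1\}$ on $M^*$ and $0$ elsewhere, which is automatically half-integral since $y$ is. One must check three things: (1) the resulting $(M^*, y, c)$ is feasible for MFASP — i.e., $M^*$ is a maximum $(\mathbf{1}+c)$-weight matching and $y$ is a minimum fractional $(\mathbf{1}+c)$-vertex cover, which follows because $\sum_{e \in M^*}(1+c_e) = \sum_v y_v$ by construction (using that $y$ tight on a perfect matching of $G[Z]$, matches the Tutte set appropriately, etc.) and LP duality; (2) the objective did not increase, which holds because the $y$-LP's optimum is a lower bound on any MFASP solution with matching $M^*$ and $M^*$ was optimal; (3) the upper bound $c_e \le 1$, handled by the truncation argument of Lemma \ref{lemma:opt-is-bounded} (replacing $y_u, y_v$ by $1$ when $y_u + y_v > 2$ on a matching edge preserves half-integrality and feasibility and only decreases cost).

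The main obstacle I anticipate is verifying feasibility cleanly — specifically, that after replacing $y^*$ by the half-integral vertex solution $y$ of the vertex cover LP, the pair $(M^*, y)$ still certifies stability of $(G, \mathbf{1}+c)$, i.e.\ that $\sum_{e \in M^*}(1+c_e) = \sum_v y_v$. This requires that $y$ be tight on every edge of $M^*$ (so that $\sum_{\{u,v\}\in M^*}(y_u+y_v) = \sum_{v \in V(M^*)} y_v$) and that $y_v = 0$ for every $M^*$-exposed vertex $v$. Neither is automatic for an arbitrary optimal $y$ of the vertex cover LP, so I would instead optimize the LP "minimize $\sum_{\{u,v\}\in M^*}(y_u + y_v)$" over the vertex cover polytope intersected with the constraints $y_v = 0$ for $v$ exposed by $M^*$ — a polytope which, being a face-like restriction of the Nemhauser–Trotter polytope obtained by fixing some coordinates to $0$, still has half-integral vertices. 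Then tightness on $M^*$-edges follows from optimality of the objective (any slack could be removed), exposed vertices get value $0$ by the added constraints, and $c$ inherits half-integrality. I would also double-check that such a $y$ exists at all, i.e.\ that the restricted polytope is nonempty — which it is, since $y^*$ itself lies in it by complementary slackness with $M^*$.
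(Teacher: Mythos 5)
Your proposal is correct and reaches the same destination as the paper, but via a more streamlined route that is worth contrasting. Both you and the paper fix a maximum matching $M^*$ (using Lemma~\ref{lem:MFASP-max-matching}) and then observe that the residual problem in $(y,c)$ is a linear program. The paper keeps $c$ as an explicit variable and proves half-integrality of $LP(G,M)$ from scratch: it builds the bipartite double cover $G'$, uses total unimodularity of the bipartite constraint matrix to get an integral optimum for $LP(G',M')$, and then averages the two copies (Claim~\ref{claim:optimality-of-transformed-solution}) to recover a half-integral solution of $LP(G,M)$. You instead eliminate $c$ via $c_e = y_u + y_v - 1$ on $M^*$, which turns the problem into minimizing a nonnegative linear functional over the fractional vertex cover polyhedron $\{y \ge 0 : y_u + y_v \ge 1\ \forall \{u,v\}\in E\}$ intersected with the face $\{y_v = 0 : v\ M^*\text{-exposed}\}$, and you invoke the Balinski/Nemhauser--Trotter half-integrality of that polyhedron. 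Your key observation — that fixing coordinates to zero carves out a face, and faces inherit half-integrality — is correct, and the restricted polyhedron is nonempty because $y^*$ itself lies in it (and no two $M^*$-exposed vertices are adjacent, since $M^*$ is maximum). So the two proofs have the same mathematical core (half-integrality ultimately comes from bipartite doubling either way), but yours outsources it to a standard result where the paper reproves it, which is cleaner and shorter.

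Two small clean-ups. First, your worry about ``tightness on $M^*$-edges'' dissolves once you simply \emph{define} $c_e := y_u + y_v - 1$ for $e = \{u,v\}\in M^*$ (the $\max\{0,\cdot\}$ is redundant since $y_u+y_v\ge 1$ is already a constraint); the equality $\sum_v y_v = \sum_{e\in M^*}(1+c_e)$ then follows directly from $y_v = 0$ on exposed vertices, which you correctly enforce. Second, the truncation step to enforce $c_e \le 1$ is unnecessary: the $c$ you recover is a minimum fractional additive stabilizer with $M^*$ a maximum $(\1+c)$-weight matching by the LP-duality argument, so Lemma~\ref{lemma:opt-is-bounded} already gives $c_e \le 1$ on $M^*$ for free.
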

\begin{proof}
Let $\bar{c}$ be a minimum fractional additive stabilizer. By Lemma \ref{lem:MFASP-max-matching},
we know that there exists a maximum matching in $G$ that is also a maximum $(\1+\bar{c})$-weight matching.
Let $M$ be such a matching. We consider the following linear program:
\begin{align}
\min\sum_{e\in M}c_e \tag{LP(G,M)}\\
y_u + y_v &= c_{uv}+1\ &\forall\ \{u,v\} \in M \notag\\
y_u + y_v &\ge 1\ &\forall\ \{u,v\}\in E\setminus M \notag\\
y_u&=0\ &\forall\ u\in V,\ u \text{ is exposed by }M \notag\\
c,y&\ge 0 \notag
\end{align}

If $(c,y)$ is an optimal solution of $LP(G,M)$, then $c$ gives a minimum fractional additive stabilizer for $G$. 
In order to show that $c$ is a fractional additive stabilizer, 
it is sufficient to find a fractional $(\1+c)$-vertex cover $y$ that satisfies complementary slackness conditions with $M$. 
But, by the constraints in $LP(G,M)$, it is clear that $y$ satisfies complementary slackness conditions with $M$. 
Furthermore, $c$ is a minimum fractional additive stabilizer, since otherwise, we could derive a contradiction to the optimality of $\bar{c}$. 
Thus, it is sufficient to show that there exists a half-integral optimal solution $(c,y)$ of $LP(G,M)$.

We observe that if $G$ is bipartite, then for every matching $\bar{M}$ in $G$, 
the extreme point solutions to $LP(G,\bar{M})$ are integral since the constraint matrix is totally unimodular and the right-hand side is integral. 

Now, suppose $G=(V,E)$ is non-bipartite.
We construct a bipartite graph $G'=(V_1\cup V_2,E')$ as follows: for each vertex $u \in V$, we introduce vertices $u_1\in V_1$, $u_2\in V_2$ and for each edge $\{u,v\}\in E$,
we introduce edges $\{u_1,v_2\}, \{u_2,v_1\}$ in $E'$.
For each matching edge $\{u,v\}\in M$, we include edges $\{u_1,v_2\}, \{u_2,v_1\}$ in $M'$.
Thus $M'$ is a matching in $G'$ that exposes $u_1$ and $u_2$ for every vertex $u\in V$ that is exposed by $M$.
Let $(c',y')$ be an integral optimal solution of $LP(G',M')$.
Let $(c,y)$ be obtained by setting $c_{uv}:=\nicefrac{1}{2}(c'_{u_1v_2}+c'_{u_2v_1})\ \forall\ \{u,v\}\in M$ and 
$y_u:=\nicefrac{1}{2}(y'_{u_1}+y'_{u_2})\ \forall\ u\in V$.
Clearly, $(c,y)$ is half-integral. 
The following claim
shows that $(c,y)$ is an optimum to $LP(G,M)$.
\end{proof}

\begin{claim}\label{claim:optimality-of-transformed-solution}
Let $(c',y')$ be an optimal solution of  $LP(G',M')$. Then $(c, y)$ obtained by setting
$c_{uv} := \nicefrac{1}{2}(c'_{u_1v_2}+c'_{u_2v_1})$ for all $\{u,v\}\in M$ and $y_u:=\nicefrac{1}{2}(y'_{u_1}+y'_{u_2})$ for all $u\in V$
is an optimal solution for $LP(G, M)$.
\end{claim}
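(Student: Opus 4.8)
The plan is to verify two things about the transformed pair $(c,y)$: first, that it is feasible for $\mathrm{LP}(G,M)$, and second, that its objective value equals the optimum of $\mathrm{LP}(G,M)$; since it is feasible, matching the optimum value forces optimality. Feasibility is mostly a routine unfolding of the construction. For a matching edge $\{u,v\}\in M$ we have the two edges $\{u_1,v_2\},\{u_2,v_1\}\in M'$, so the equality constraints of $\mathrm{LP}(G',M')$ give $y'_{u_1}+y'_{v_2}=c'_{u_1v_2}+1$ and $y'_{u_2}+y'_{v_1}=c'_{u_2v_1}+1$; averaging these two equations yields exactly $y_u+y_v=c_{uv}+1$. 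For a non-matching edge $\{u,v\}\in E\setminus M$ we similarly average the two inequalities $y'_{u_1}+y'_{v_2}\ge 1$ and $y'_{u_2}+y'_{v_1}\ge 1$ to get $y_u+y_v\ge 1$. If $u$ is exposed by $M$, then $u_1,u_2$ are exposed by $M'$, so $y'_{u_1}=y'_{u_2}=0$ and hence $y_u=0$. Nonnegativity of $c$ and $y$ is inherited from $c',y'$. So $(c,y)$ is feasible for $\mathrm{LP}(G,M)$, and its objective is $\sum_{e\in M}c_e=\tfrac12\sum_{\{u,v\}\in M}(c'_{u_1v_2}+c'_{u_2v_1})=\tfrac12\,\mathrm{opt}(\mathrm{LP}(G',M'))$.

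Next I would establish the reverse direction, namely that one can go from a feasible (in particular optimal) solution of $\mathrm{LP}(G,M)$ back to one of $\mathrm{LP}(G',M')$ of exactly twice the cost. Given any feasible $(c^0,y^0)$ for $\mathrm{LP}(G,M)$, define $c'_{u_1v_2}:=c'_{u_2v_1}:=c^0_{uv}$ for $\{u,v\}\in M$ and $y'_{u_i}:=y^0_u$ for $i\in\{1,2\}$ and all $u\in V$. A direct check shows this is feasible for $\mathrm{LP}(G',M')$: each $G'$-constraint is just a copy of the corresponding $G$-constraint, and exposed vertices map to exposed vertices. Its cost is $\sum_{e\in M'}c'_e=2\sum_{e\in M}c^0_e$. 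Consequently $\mathrm{opt}(\mathrm{LP}(G',M'))\le 2\,\mathrm{opt}(\mathrm{LP}(G,M))$, while the averaging construction of the first paragraph gives $\mathrm{opt}(\mathrm{LP}(G,M))\le \tfrac12\,\mathrm{opt}(\mathrm{LP}(G',M'))$. Combining, $\mathrm{opt}(\mathrm{LP}(G,M))=\tfrac12\,\mathrm{opt}(\mathrm{LP}(G',M'))$, which is precisely the objective value achieved by our feasible $(c,y)$. Hence $(c,y)$ is optimal for $\mathrm{LP}(G,M)$.

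I do not anticipate a serious obstacle here — the whole argument is a symmetry/averaging bookkeeping exercise typical for bipartite-double-cover reductions. The one point requiring a little care is making sure the correspondence between edges of $M$ and edges of $M'$, and between exposed vertices of $M$ and of $M'$, is exactly as in the construction preceding the claim, so that every constraint of one LP maps to a constraint (or average of constraints) of the other with no slack introduced or lost. In particular I should double-check that $M'$ as defined is indeed a matching in $G'$ (each $u_1$ lies in at most one edge $\{u_1,v_2\}$ of $M'$ because $u$ lies in at most one edge of $M$) and that no $G'$-edge outside $M'$ is inadvertently forced, which is immediate from the definition. With those checks in place the proof is complete.
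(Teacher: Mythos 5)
Your proof is correct and takes essentially the same approach as the paper: both rely on the two-way correspondence between feasible solutions of $LP(G,M)$ and $LP(G',M')$ with a factor-$2$ cost relation. The only cosmetic difference is that you phrase optimality as a direct argument (showing $\mathrm{opt}(LP(G,M)) = \tfrac12\,\mathrm{opt}(LP(G',M'))$), whereas the paper phrases it as a contradiction (a cheaper feasible $(\tilde c,\tilde y)$ would lift to a cheaper feasible solution of $LP(G',M')$, contradicting optimality of $(c',y')$); the underlying construction in both directions is identical.
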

\begin{proof}
The feasibility of the solution $(c,y)$ for $LP(G,M)$ is easy to verify.
We note that $\sum_{e\in M}c_e=\nicefrac{1}{2}\sum_{e\in M'}c'_e$. We will prove optimality.

Suppose $(c,y)$ is not optimal for $LP(G,M)$. Then there exist $(\tilde{c},\tilde{y})$  feasible for $LP(G,M)$
such that $\sum_{e\in M}\tilde{c}_e<\sum_{e\in M}c_e$.
Consider the solution $(\tilde{c}',\tilde{y}')$ obtained by setting $\tilde{c}'_{u_1v_2}=\tilde{c}'_{u_2v_1}=\tilde{c}_{uv}$
for every $\{u,v\}\in M$ and $\tilde{y}'_{u_1}=\tilde{y}'_{u_2}=\tilde{y}_u$ for every $u\in V$.
The resulting solution $(\tilde{c}',\tilde{y}')$ is feasible to $LP(G',M')$.
Moreover $\sum_{e\in M'}\tilde{c}_e'=2\sum_{e\in M}\tilde{c}_e<2\sum_{e\in M}c_e=\sum_{e\in M'}c'_e$,
a contradiction to the optimality of $(c',y')$.
\end{proof}

\begin{lemma}
\label{lemma:PositiveTutteSet}
For every graph $G$, there exists an optimal solution $(M,y,c)$ of MFASP with half-integral $c$ 
and half-integral $y$ with the support of $y$ containing the Tutte set.
\end{lemma}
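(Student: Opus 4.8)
The plan is to take an optimal half-integral solution $(M,y,c)$ as produced by Lemma~\ref{lem:MFASP-half-integral} and rewrite the cover $y$ so that it stays optimal and half-integral but additionally has $y_v>0$ for every $v\in Y$, all while keeping $M$ fixed. First I would make two cost-free normalizations: by the argument in Lemma~\ref{lemma:opt-is-bounded} we may assume $c_e=0$ for $e\notin M$, and we may also assume $y\le\1$ (if $y_v>1$ then $v$ is matched, and lowering $y_v$ to $1$ while lowering $c$ on the matching edge at $v$ accordingly keeps $y$ a feasible cover vanishing on the $M$-exposed vertices and strictly lowers $\sum_w y_w$, hence the cost $\1^Tc=\sum_w y_w-|M|$ read off the $LP(G,M)$ formulation, contradicting optimality). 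Thus $y\in\{0,\nicefrac12,1\}^V$; write $V_0=\{w:y_w=0\}$ and $V_1=\{w:y_w=1\}$. By Lemma~\ref{lem:MFASP-max-matching}, $M$ is a maximum matching of $G$, so the Gallai--Edmonds decomposition $V=X\cup Y\cup Z$ and its properties (i)--(iv) from Section~\ref{sec:prelims} are available.

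Next I would analyze a ``bad'' vertex $v\in Y$ with $y_v=0$. By property~(iv), $M$ matches $v$ into some component $C_v$ of $G[X]$, say to $u_v\in C_v$. Since $|C_v|$ is odd (property~(ii)), $u_v$ is matched outside $C_v$, no second $Y$-vertex is matched into $C_v$ (property~(iv)), and at most one vertex of $C_v$ is $M$-exposed (property~(iii)), a parity count shows $M$ perfectly matches $C_v\setminus\{u_v\}$ inside $C_v$. From $\{u_v,v\}\in M$, $y_v=0$ and $y\le\1$ we get $y_{u_v}=1$ and $c_{u_vv}=0$; likewise every $z\in C_v\cap V_0$ is $M$-matched inside $C_v$ to a partner $M(z)\in C_v\cap V_1$.

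The heart of the argument is a local surgery inside $C_v\cup\{v\}$. I would let $T_v\subseteq C_v\cap V_1$ and $S_v\subseteq C_v\cap V_0$ be the smallest sets with $u_v\in T_v$ that are closed under the rules: if $w\in T_v$ then $N(w)\cap C_v\cap V_0\subseteq S_v$, and if $z\in S_v$ then $M(z)\in T_v$. Since $u_v$ is the only vertex of $C_v$ that $M$ matches outside $C_v$, we get $T_v=\{u_v\}\cup\{M(z):z\in S_v\}$ with $z\mapsto M(z)$ injective, hence $|T_v|=|S_v|+1$. Now let $y'$ agree with $y$ except that $v$, every $z\in S_v$, and every $w\in T_v$ receive value $\nicefrac12$; doing this for all bad $v$ at once is legitimate because the sets $C_v\cup\{v\}$ are pairwise disjoint by property~(iv). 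Each bad $v$ changes $\sum_w y_w$ by $\tfrac12\bigl(1+|S_v|\bigr)-\tfrac12|T_v|=0$, so the cost is unchanged. It then remains to verify that $y'$ is a vertex cover vanishing on the $M$-exposed vertices, after which $c'_e:=y'_a+y'_b-1$ for $e=\{a,b\}\in M$ and $c'_e:=0$ otherwise is half-integral and in $[0,1]$, and $(M,y',c')$ is an optimal solution with $y'_v\ge\nicefrac12$ for every $v\in Y$. The exposed-vertex condition is immediate: the $M$-exposed set lies in $X$, avoids $Y$, and avoids each $C_v$ (which has a $Y$-vertex matched into it). The covering condition I would check edge by edge, the point being that any vertex whose value dropped to $\nicefrac12$ now sees only neighbours of value $\ge\nicefrac12$: inside $C_v$ this is exactly the closure rule $N(w)\cap C_v\cap V_0\subseteq S_v$; on an edge from $C_v$ to $Y$ the $Y$-endpoint is either bad (and was raised to $\nicefrac12$) or already had value $\ge\nicefrac12$; and the matched edge $\{u_v,v\}$ now has both ends $\nicefrac12$.

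The step I expect to be the main obstacle is precisely this covering check after the values of $T_v$ are lowered: naively lowering $u_v$ can uncover a tight edge $\{u_v,z\}$ with $z\in C_v$, $y_z=0$, which is why $z$ must be pulled into $S_v$ and $M(z)$ into $T_v$, cascading through the factor-critical component; the reason the cascade costs nothing is the identity $|T_v|=|S_v|+1$, which holds exactly because $u_v$ is the unique vertex of $C_v$ that $M$ matches outside of it. Once the cascade is set up correctly, the remaining verifications are routine case-checking over the three possible half-integral values.
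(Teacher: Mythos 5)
Your proof is correct, but it takes a genuinely different route from the paper's. The paper's argument is simpler: for each bad $v\in Y$ with $y_v=0$, it sets $y'_w=\nicefrac12$ for \emph{all} $w$ in the component $S_v$ matched to $v$ (together with $v$ itself) and sets $c'_e=0$ on all edges of $E(S_v)\cup\{v,s_v\}$; the cost bound is then immediate since $c'\le c$ pointwise, and feasibility and complementary slackness are short checks using only that $y_v=0$ forces $y_t\ge 1$ for neighbours $t$ of $v$ and that non-bad Tutte vertices already have $y\ge\nicefrac12$. You instead normalize to $y\le\1$, isolate a minimal ``cascade'' $\{v\}\cup S_v\cup T_v$ inside each $C_v$ closed under the reachability rules you describe, re-balance $y$ to $\nicefrac12$ only there, and use the counting identity $|T_v|=|S_v|+1$ to argue the sum $\sum_w y_w$ (hence the cost $\1^Tc$) is preserved. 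Both approaches are valid; yours is more surgical but the extra machinery is not needed here. In fact, your cascade preserves tight-zero structure exactly — $c_{u_vv}=0$ and $c_{zM(z)}=0$ for all $z\in S_v$ — so $c'=c$ on $M$ and the counting argument could have been bypassed entirely. The one thing worth noting positively about your version is that it explicitly establishes $y\le\1$, a normalization the paper implicitly relies on in places and does not spell out.
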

\begin{proof}
Let the Gallai-Edmonds decomposition of $G$ be given by $V = X \cup Y \cup Z$. Let $c$ be a half-integral minimum fractional additive stabilizer for $G$. 
Let $M$ be a maximum $(\1+c)$-weight matching and $y$ be a half-integral minimum fractional $(\1+c)$-vertex cover 
(such a $y$ and $c$ exist by Lemma \ref{lem:MFASP-half-integral}). 
Suppose that $y_v = 0$ for some $v \in Y$. 
We will construct a half-integral fractional additive stabilizer $c'$ without increasing the cost and a fractional $(\1+c')$-vertex cover $y'$ 
that satisfies complementary slackness with $M$ and has $y'_w>0$ for each node $w$ of the Tutte set. 

Since $M$ is maximum, every node of $Y$ is matched. For $v \in Y$, we denote by $S_v$ the factor-critical component in $G[X]$ which is matched to $v$ and by $s_v$ the vertex matched to $v$. Let $Y' := \{v \in Y : y_v = 0\}$. We set $c'_e := 0 \ \forall e\in \bigcup_{v \in Y'} (E(S_v) \cup \{v, s_v\})$ and $c'_e := c_e$ otherwise. It is clear that $c'$ is half-integral and the cost of $c'$ cannot be more than that of $c$.

We define 
\[
y'_w := \begin{cases}
	1/2,  & \text{if } w \in Y' \text{ or } w \in \bigcup_{v \in Y'} V(S_v),\\
	y_w & \text{else}.
	\end{cases}
\]
By definition, $y'$ satisfies the covering constraints in $\tau_f(G,\1+c')$ for edges in  
$E[\bigcup_{v \in Y'}(S_v) \cup Y']$. For other edges $\{v, t\}$ incident to $v \in Y' \bigcup _{v \in Y'}(S_v)$
 either $y'_t = y_t = 1$ (if $v \in Y'$) or $y'_t \ge  \nicefrac{1}{2}$ (if $t \in Y \setminus Y'$).  
Therefore $y'$ is a fractional $(\1+c')$-vertex cover. Finally it is clear that $y'$ satisfies complementary slackness with $M$: on every matching edge that is not adjacent to a vertex $v\in Y'$, it follows since $y'$ takes the same values on the end vertices as $y$; for a matching edge $\{u,v\}$ adjacent to a vertex $v\in Y'$, by definition of $y'$ and $c'$, it follows that $y_u+y_v=1+c_{uv}$. 
\end{proof}


Using Theorem \ref{thm:opt-properties}, we will always use
and construct solutions where $y_v \ge \nicefrac{1}{2}$ for a vertex
$v$ in the Tutte set in the remaining paper. Therefore, we assume in the subsequent sections, that
$Z = \emptyset$ in the Gallai-Edmonds decomposition of a graph. If
that were not the case, we would first consider the graph without $Z$
and then extend the stabilizer using a perfect matching on $Z$ without
additional cost. This is done by setting $c_e=0$ for every edge $e=\{u,v\}\in E[Z]\cup \delta(Z)$ and $y_v = \nicefrac{1}{2}$ for all $v \in Z$.

We now use the structural insights from Theorem \ref{thm:opt-properties} to describe the behaviour of feasible solutions on the factor-critical components of the Gallai-Edmonds decomposition. 

\begin{restatable}{lemma}{lemmaFactorCriticalNotFree}\label{lemma:FactorCriticalNotForFree}
Let $G$ be a graph with Gallai-Edmonds decomposition $V(G) = X\cup Y \cup Z$ and $(M, y, c)$ be a feasible solution for MFASP in 
$G$ fulfilling the properties (a), (b), (c) and (d) of Theorem \ref{thm:opt-properties}. 
Let $K$ be a non-trivial component in $G[X]$ with a vertex $u$ such that $y_{u}=0$. 
If $K$ has a vertex $u$ exposed by $M$, then $\sum_{e\in E(K)} c_e  \geq 1$.
On the other hand, if $K$ is matched to $Y$ by an edge $e' = \{v, w\}$ with $w \in Y$ and $y_w \ge \nicefrac{1}{2}$, 
then $\sum_{e\in E(K)} c_e + c_{e'} \geq y_w$. 
\end{restatable}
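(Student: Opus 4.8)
The plan is to exploit the fact that $K$ is factor-critical together with the complementary slackness structure guaranteed by properties (a)--(d). Fix the non-trivial component $K$ with vertex $u$ satisfying $y_u = 0$. First I would handle the case where $K$ contains an $M$-exposed vertex $u$. By Theorem~\ref{thm:opt-properties}(i)(b), $M$ restricted to $K$ is a near-perfect matching of $K$ missing exactly $u$ (properties of the Gallai--Edmonds decomposition: every maximum matching exposes at most one vertex per component of $G[X]$, and here that vertex is $u$). Since $K$ is non-trivial, $u$ has a neighbour $u'$ in $K$, and the matching edge $\{u',u''\} \in M$ covering $u'$ lies in $E(K)$. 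Complementary slackness (property stated in Section~\ref{sec:prelims}) gives $y_{u'} = 0$ (since $u$ is exposed, the covering constraint on $\{u,u'\}$ forces $y_{u'} \ge 1$... wait, this needs care): actually since $y_u = 0$ and the edge $\{u,u'\}$ must be covered, $y_{u'} \ge 1$. Then on the matching edge $\{u',u''\}$ we have $1 + c_{u'u''} = y_{u'} + y_{u''} \ge 1 + 0$, so $c_{u'u''} \ge y_{u''}$, which is not yet enough. The key is to walk along an alternating path: because $K$ is factor-critical and $2$-connected-ish in the relevant sense, I would trace an $M$-alternating path from $u$ and show the weights $c$ must "pay" for the gap. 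More cleanly: consider the near-perfect matching $M_K = M \cap E(K)$ exposing $u$, and note $|M_K| = (|V(K)|-1)/2$. The total $y$-value on $V(K)$ equals $\sum_{e \in M_K}(1 + c_e) = |M_K| + \sum_{e \in E(K)} c_e$ (using $c_e = 0$ off $M$, property (a)). On the other hand, since $K$ is factor-critical, there is a near-perfect matching $M'_K$ of $K$ exposing $u$ as well, but I can also use that deleting any *other* vertex $t \ne u$ leaves a perfect matching $P_t$ of $K - t$; the constraint $\sum_{v \in V(K)} y_v \ge w(P_t) = |P_t| = (|V(K)|-1)/2$ combined with $y_u = 0$ would give a bound, but that is the same bound. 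The actual leverage: pick $t$ to be a neighbour of $u$; then $y_t \ge 1$ (to cover $\{u,t\}$), and $P_t$ covers all of $V(K) \setminus \{t\}$ including $u$, so some edge of $P_t$ covers $u$, forcing the other endpoint of that edge to have $y$-value $\ge 1$ as well, and $P_t$-edges need $y$-endpoints summing to $\ge 1$ each, giving $\sum_{v} y_v \ge 1 + (|M_K| - 1) \cdot 1 + \text{(correction)}$; matching this against $|M_K| + \sum c_e$ yields $\sum_{e \in E(K)} c_e \ge 1$. I would make this precise by choosing the right near-perfect matching and summing the edge-covering inequalities carefully.

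For the second case, where $K$ is matched to $Y$ via $e' = \{v,w\}$ with $w \in Y$, $y_w \ge \nicefrac{1}{2}$, and there is $u \in K$ with $y_u = 0$: here no vertex of $K$ is $M$-exposed, so $M$ contains a perfect matching of $K \cup \{w\}$ restricted appropriately — precisely, $s_v := v$ is matched to $w$ and the remaining $|V(K)| - 1$ vertices of $K$ are perfectly matched inside $K$ by $M \cap E(K)$. The argument mirrors the first case: $\sum_{t \in V(K)} y_t = \sum_{e \in (M \cap E(K))} (1 + c_e) + (y_v - c_{e'}) $ — more carefully, summing $y_v + y_w = 1 + c_{e'}$ and $y_a + y_b = 1 + c_e$ over the internal matching edges, then using $y_u = 0$ and the edge-covering constraint on an edge of $K$ incident to $u$ to force a neighbour of $u$ to have $y$-value $\ge 1$, one extracts $\sum_{e \in E(K)} c_e + c_{e'} \ge y_w$. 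I would set this up as: let $t$ be a neighbour of $u$ in $K$ (exists since $K$ non-trivial); since $K$ is factor-critical, $K - t$ has a perfect matching $P$, and $V(K) \setminus \{t\}$ is covered by $P$. Then $\sum_{a \in V(K)} y_a \ge y_t + w(P) \ge 1 + (|V(K)| - 1)/2$, whereas from the actual matching $M$ we also have $\sum_{a \in V(K)} y_a = (1 + c_{e'} - y_w) + \sum_{e \in M \cap E(K)}(1 + c_e)$ where $|M \cap E(K)| = (|V(K)|-1)/2$. Equating and rearranging gives $\sum_{e \in E(K)} c_e + c_{e'} \ge y_w$, using $y_w \le 1$ where needed; I would double-check the direction of every inequality and whether $y_w \le 1$ (which follows from property (d), $y \in \{0, \nicefrac12, 1\}^{|V|}$) is actually invoked.

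The main obstacle I anticipate is the first step in each case: correctly extracting the "$+1$" (resp. "$y_w$") surplus from the factor-critical structure. The naive bound $\sum_v y_v \ge w(P_t) = (|V(K)|-1)/2$ only matches $|M \cap E(K)|$ exactly and gives $\sum c_e \ge 0$, which is vacuous. The extra unit must come from the interaction between $y_u = 0$ and a well-chosen perfect matching of $K - t$ for a neighbour $t$ of $u$ with $y_t \ge 1$: the point is that $w(P_t) = (|V(K)|-1)/2$ counts the matching edges of $P_t$, but the vertex $t$ itself contributes $y_t \ge 1$ on top, and the edge of $P_t$ covering $u$ is "cheap" on the $u$-side (contributing $0$) forcing its mate to contribute $\ge 1$; reconciling the two expressions for $\sum_{a \in V(K)} y_a$ — one from $M$, one from $P_t$ — is where the inequality is born, and getting the bookkeeping exactly right (which matching, which edge covers $u$, whether endpoints are double-counted) is the delicate part. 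I expect to need to invoke factor-criticality of $K$ twice (once for the structure of $M \cap E(K)$ via the Gallai--Edmonds properties, once to produce $P_t$) and to use half-integrality of $y$ to pin down $y_t \in \{\nicefrac12, 1\}$ and conclude $y_t = 1$ from the covering constraint on $\{u,t\}$.
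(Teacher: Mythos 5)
Your proposal is correct, and it takes a genuinely different route than the paper. The paper relies on the ear-decomposition characterization of factor-critical graphs: for the exposed case it locates an $M$-blossom (odd $M$-alternating circuit) through $u$ and sums the covering inequalities and complementary-slackness equalities around the circuit; for the matched case it splits on $y_v\in\{0,\nicefrac12,1\}$ and, in the $y_v=\nicefrac12$ subcase, traces an odd $M$-alternating path from $u$ to $v$ and then invokes half-integrality of $c$ to promote $\sum_e c_e>0$ to $\sum_e c_e\geq\nicefrac12$. Your argument instead uses the bare definition of factor-criticality: pick a neighbour $t$ of $u$ in $K$ (exists since $K$ is non-trivial), note $y_t\geq 1$ from the covering constraint on $\{u,t\}$ together with $y_u=0$ and $c_{ut}=0$ off $M$, and take a perfect matching $P_t$ of $K-t$. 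Summing covering inequalities over $P_t$ gives $\sum_{a\in V(K)}y_a\geq y_t+|P_t|\geq 1+\frac{|V(K)|-1}{2}$, while complementary slackness on $M$ gives $\sum_{a\in V(K)}y_a=\frac{|V(K)|-1}{2}+\sum_{e\in E(K)}c_e$ in the exposed case and $\sum_{a\in V(K)}y_a=(1+c_{e'}-y_w)+\frac{|V(K)|-1}{2}+\sum_{e\in E(K)}c_e$ in the matched case; comparing the two expressions yields $\sum_{e\in E(K)}c_e\geq 1$, respectively $\sum_{e\in E(K)}c_e+c_{e'}\geq y_w+(y_t-1)\geq y_w$, in one stroke. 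This buys you a single unified computation covering both cases, no ear-decomposition machinery, no case analysis on $y_v$, and no reliance on half-integrality of $c$. Your draft wobbles on the bookkeeping in the exposed case (the intermediate ``$1+(|M_K|-1)\cdot 1+(\text{correction})$'' is not right), but the clean bound $\sum_{a\in V(K)}y_a\geq y_t+|P_t|$ that you state for the matched case applies verbatim to the exposed case as well and closes that gap; the extra observation about the $P_t$-edge covering $u$ having an expensive mate is true but unnecessary.
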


\begin{proof}

In this proof, we use an equivalent definition of factor-critical graphs: 
A graph is factor-critical if and only if it has an odd ear-decomposition. 
Furthermore, the initial vertex of the ear-decomposition can be chosen arbitrarily \cite{Lovasz72}. 
 An ear-decomposition of a graph $G$ is a sequence $r, P_1, \ldots, P_k$ with $G = (\{r\}, \emptyset) + P_1 + \ldots + P_k$
 such that $P_i$ is either a path where exactly the endpoints belong to $\{r\} \cup V(P_1) \cup \ldots V(P_{i-1})$ or a circuit 
 where exactly one of its vertices belongs to $\{r\} \cup V(P_1) \cup \ldots V(P_{i-1})$. An ear-decomposition is called odd if all $P_i$ have odd length.

Moreover, (the proof of the above equivalence implies that) for a maximum matching $M$ in a factor-critical graph $G$, an odd ear-decomposition can be chosen
such that the exposed vertex is the initial vertex $r$, 
each path $P_i$ is $M$-alternating such that the first and last edge are not part of $M$ 
and each circuit $P_i$ contains $\nicefrac{| E(P_i) - 1|}{2}$ matching edges such that the 
vertex in  $\{r\} \cup V(P_1) \cup \ldots V(P_{i-1})$ is not matched. 

If $u$ is exposed by $M$, then $y_u=0$. 
Since $K$ is factor-critical, there exists an $M$-blossom through $u$, i.e.\ a circuit $C$ of odd length where all vertices except $u$ are 
adjacent to an edge in $E(C) \cap M$ (follows from the above statement). Let the vertices of $C$ be $u_0 = u, u_1, \ldots, u_{2t}, u_{2t + 1} = u$. 
Now consider the optimal vertex cover values $y_{u_0},\ldots,y_{u_{2t}}$ for the vertices in $C$. By definition of the vertex cover, $y_{u_i}+y_{u_{i+1}}\ge 1$ for every $i=0, \ldots ,2t$ and, in particular, $y_{u_1} \geq 1$ and $y_{u_{2t}} \geq 1$.
Furthermore, the inequalities for the matching edges are tight and thus, $1+c_{\{u_i,u_{i+1}\}} = y_{u_i}+y_{u_{i+1}}$ for every $i= 1,3,5, \ldots ,2t-1$. 
Therefore, summing up, we have 
\begin{align*}
t + \sum_{e\in E(K)} c_e &\geq t + \sum_{i=1,3,5,\ldots,2t-1} c_{\{u_i,u_{i+1}\}}
= \sum_{i=1}^{2t} y_{u_i}
\geq 1+ \left(\sum_{i=2}^{2t-1} y_{u_i}\right) + 1  \\
&= 2 + \sum_{i=2,4,\ldots,2t-2} (y_{u_i} + y_{u_{i+1}})
\geq 2 + (t-1) = t+1, 
\end{align*}
 which proves the first statement.
 
Now, let us consider the case where $K$ is matched. 
Remember that we have $y_w \in \{\nicefrac{1}{2}, 1\}$. 
If $y_v = 0$, then the proof is identical to that of the first statement. 
If $y_v = 1$, clearly, $c_{e'} = y_w$. Therefore, we may assume that $y_v = \nicefrac{1}{2}$
and thus $c_{e'} = y_w - \nicefrac{1}{2}$. It remains to show that $\sum_{e\in E(K)}c_e\ge 1/2$. 

Since $K$ is factor-critical, there exists a path from $u$ to $v$ of odd length in $K$, which is $M$-alternating, in particular, 
the edges incident to $u$ and $v$ are non-matching edges. 
(The existence of such a path follows from the fact that it is possible to construct an odd ear-decomposition of $K$ with 
initial vertex $v$ such that each ear is an $M$-alternating path or circuit where the first and last edge are not matching edges.) 
Let $u = v_0, v_1, \ldots, v_{2t + 1} = v$ be such a path $P$. 
Suppose for the sake of contradiction $c_e = 0$ for all $e \in E(K)\supseteq P$. Since $y$ is a vertex cover and the matching edges are tight w.r.t.\ $y$, 
it follows that $y_{v_i} = 1$ for odd $i \leq 2t $ and $y_{v_i} = 0$ for even $i$. 
But that implies $y_{v_{2t}} + y_v = y_v = \nicefrac{1}{2}$, a contradiction. Hence, $\sum_{e\in E(K)}c_e>0$. 
Since $c$ is half-integral, it follows that $\sum_{e\in E(P)}c_e \geq \nicefrac{1}{2}$. 
\end{proof}

This directly implies the following statement: 

\begin{proposition}\label{lemma:nontrivialproperties}
  Let $G$ be a graph with Gallai-Edmonds decomposition
  $V(G) = X\cup Y \cup Z$ and $(M, y,c)$ be a (not necessarily
  optimum) solution for MFASP in $G$ fulfilling the properties (a), (b), (c) and (d) of
  Theorem \ref{thm:opt-properties} such that $(y, c) = \text{argmin}\{1^Tc:(M, y, c) \text{ is feasible for MFASP}\} $. 
  Let $K$ be a non-trivial factor-critical component in
  $G[X]$.  If $K$ is matched by $M$, then $y_v=1/2$ for every
  $v\in V(K)$ and $c(e)=0$ for every $e\in K$.
\end{proposition}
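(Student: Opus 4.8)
The plan is to combine a local exchange argument with the second inequality of Lemma~\ref{lemma:FactorCriticalNotForFree}. Write $n_K := |V(K)|$, which is odd since $K$ is factor-critical. Because $K$ is matched by $M$, Gallai--Edmonds property~(iv) implies that exactly one vertex $v\in V(K)$ is matched to $Y$, say via $e' = \{v,w\}$ with $w\in Y$; by property~(d), $y_w\in\{\nicefrac{1}{2},1\}$. Property~(iii) together with a parity count (there are no edges from $X$ to $Z$ and none between distinct components of $G[X]$, so every vertex of $V(K)\setminus\{v\}$ is matched inside $K$ or exposed, and at most one can be exposed) shows that $M\cap E(K)$ is a perfect matching of $K-v$, hence has $\tfrac{n_K-1}{2}$ edges partitioning $V(K)\setminus\{v\}$. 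Summing the tight constraints $y_a+y_b = 1+c_{ab}$ over these edges and using $c_e=0$ for $e\in E(K)\setminus M$ (Lemma~\ref{lemma:opt-is-bounded}) gives $\sum_{e\in E(K)}c_e = \sum_{u\in V(K)\setminus\{v\}} y_u - \tfrac{n_K-1}{2}$; adding $c_{e'} = y_v+y_w-1$ yields the identity $\sum_{e\in E(K)}c_e + c_{e'} = \sum_{u\in V(K)} y_u - \tfrac{n_K+1}{2} + y_w$.

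Next I would introduce the modified solution $(y',c')$ that agrees with $(y,c)$ outside $V(K)\cup E(K)\cup\{e'\}$ and sets $y'_u := \nicefrac{1}{2}$ for all $u\in V(K)$, $c'_e := 0$ for $e\in E(K)$, and $c'_{e'} := y_w - \nicefrac{1}{2}\in[0,1]$. I would verify that $(M,y',c')$ is again feasible for MFASP: the covering constraints inside $K$ and on $e'$ hold, with equality on the matching edges; every other edge leaving $V(K)$ goes to some $w'\in Y$, and there $y'_u + y'_{w'} = \nicefrac{1}{2} + y_{w'}\ge 1$ by property~(d); $M$-exposed vertices still have $y'=0$; and since $y'$ is a feasible fractional $(\1+c')$-vertex cover that is tight on all of $M$, weak LP duality forces $M$ to be a maximum $(\1+c')$-weight matching and $y'$ an optimal cover, so $(M,y',c')$ is feasible. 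Using the identity above, the cost change is exactly $\1^Tc' - \1^Tc = \tfrac{n_K}{2} - \sum_{u\in V(K)} y_u$. Since $(y,c)$ minimizes $\1^Tc$ among feasible solutions with matching $M$, this is $\ge 0$, i.e. $\sum_{u\in V(K)} y_u \le \nicefrac{n_K}{2}$.

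Finally I would rule out $y_u=0$ on $V(K)$: if some $u\in V(K)$ had $y_u=0$, the hypotheses of the second part of Lemma~\ref{lemma:FactorCriticalNotForFree} are satisfied ($K$ non-trivial, matched to $Y$ by $e'$ with $y_w\ge\nicefrac{1}{2}$, and $u$ a zero vertex of $K$), so $\sum_{e\in E(K)}c_e + c_{e'}\ge y_w$; plugging this into the identity gives $\sum_{u\in V(K)}y_u \ge \tfrac{n_K+1}{2}$, contradicting $\sum_{u\in V(K)}y_u \le \nicefrac{n_K}{2}$. Hence $y_u\in\{\nicefrac{1}{2},1\}$ for every $u\in V(K)$ by property~(d), and then $\sum_{u\in V(K)} y_u\le\nicefrac{n_K}{2}$ forces $y_u = \nicefrac{1}{2}$ for all $u\in V(K)$. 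Consequently $c_e = y_a + y_b - 1 = 0$ for every matching edge $\{a,b\}\in M\cap E(K)$, and $c_e=0$ for the non-matching edges of $K$ by Lemma~\ref{lemma:opt-is-bounded}, so $c\equiv 0$ on $E(K)$, which is the claim.

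The main obstacle is the feasibility check for $(M,y',c')$, in particular confirming that $M$ remains a maximum $(\1+c')$-weight matching; this hinges on the Gallai--Edmonds structure (no edges from $K$ to $Z$ or to other $X$-components) and on property~(d) guaranteeing $y_{w'}\ge\nicefrac{1}{2}$ throughout the Tutte set. Everything else is bookkeeping with the cost identity and a single invocation of Lemma~\ref{lemma:FactorCriticalNotForFree}.
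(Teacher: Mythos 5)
Your proof is correct, and it makes explicit an argument the paper leaves implicit: the paper states the proposition ``directly'' follows from Lemma~\ref{lemma:FactorCriticalNotForFree} and gives no further detail. The route you take is the natural way to fill that gap, and it uses the same key lemma. Two small remarks. First, the reference to Lemma~\ref{lemma:opt-is-bounded} for $c_e=0$ on $E(K)\setminus M$ is technically a mismatch: that lemma is phrased for globally optimal stabilizers, while your $(M,y,c)$ is not assumed optimal; you should instead invoke property~(a) of Theorem~\ref{thm:opt-properties}, which is part of the hypothesis of the proposition and gives exactly the same conclusion. Second, the ``parity count'' that $M\cap E(K)$ is a perfect matching of $K-v$ can be tightened by noting that the Tutte--Berge deficiency equals the number of $G[X]$-components minus $|Y|$, so a component matched to $Y$ has no exposed vertex; this is what you are using and it is correct, just worth stating cleanly. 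The exchange step (replacing $y$ by the all-$\nicefrac{1}{2}$ cover on $V(K)$, resetting $c$ on $E(K)\cup\{e'\}$, and appealing to the $\text{argmin}$ hypothesis) is sound: feasibility of $(M,y',c')$ is preserved because every edge leaving $V(K)$ lands in $Y$ where $y'_{w'}=y_{w'}\ge\nicefrac{1}{2}$ by property~(d), and $y'$ is tight on all of $M$, so weak duality closes the feasibility check. The cost identity $\1^Tc'-\1^Tc=\nicefrac{n_K}{2}-\sum_{u\in V(K)}y_u$ and the contradiction with the lemma's lower bound $\sum_{u\in V(K)}y_u\ge\nicefrac{(n_K+1)}{2}$ are both correct, and half-integrality with the sum constraint then forces $y\equiv\nicefrac{1}{2}$ on $V(K)$ and $c\equiv 0$ on $E(K)$. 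In short: correct, same key lemma, and a useful explicit rendering of a step the paper asserts without proof.
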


\section[Hardness]{Inapproximability}\label{sec:hardness}
In this section, we will show that MFASP is hard to approximate in general graphs. 
We show the first part of Theorem \ref{thm:hardness} in Section \ref{sec:dksg} and the second part in Section \ref{sec:set_cover}. 

\subsection{Reduction from Densest k-Subgraph}
\label{sec:dksg}
In this subsection, we show that MFASP is at least as hard as the Densest $k$-Subgraph Problem in a certain approximation preserving sense. 
In Theorem \ref{prp:MkEChardness}, we show that a polynomial time $f$-approximation algorithm for the MFASP would imply a polynomial time $2f$-approximation for the Minimum $k$-Edge Coverage Problem. We show that Theorem \ref{prp:MkEChardness} implies a strong relation to the 
Densest $k$-Subgraph Problem at the end of the subsection.

We recall the two problems of relevance: Given a graph and a parameter $k$, the Minimum $k$-Edge Coverage Problem (M$k$EC)
is to find a minimum number of vertices whose induced subgraph has at least $k$ edges. The Densest $k$-Subgraph Problem asks for $k$ vertices with a
maximum number of induced edges.

\begin{theorem}\label{prp:MkEChardness}
If there is a polynomial time $f$-approximation algorithm for the MFASP, then there is a polynomial
time $2f$-approximation for M$k$EC.
\end{theorem}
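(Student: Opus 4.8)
The plan is to build, from an M$k$EC instance $(H,k)$ with $H=(V_H,E_H)$, an MFASP instance $G$ whose optimal stabilizer cost is tightly controlled by the minimum number of vertices needed to induce $k$ edges in $H$. The natural construction is to make $H$ itself sit inside a single large factor-critical (or near-factor-critical) component of $G[X]$, so that by Lemma~\ref{lemma:FactorCriticalNotForFree} and Proposition~\ref{lemma:nontrivialproperties} the cost of stabilizing that component is governed by how the selected vertices of $H$ interact with the Gallai-Edmonds structure. Concretely, I would attach gadgets to $H$ so that $V_H$ becomes (part of) the inessential set $X$, with a Tutte set $Y$ whose vertices force, via property (i)(a) and half-integrality (properties (c),(d)), that a matching edge between $Y$ and the component carries cost $\tfrac12$, while every internal edge of the component can be left at cost $0$ exactly when its two endpoints are "covered'' consistently. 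The key design goal is that a feasible half-integral solution of cost roughly $t/2$ corresponds to a choice of $t$ vertices of $H$ whose induced subgraph contains $\ge k$ edges: the uncovered ($y_v=0$) vertices of $H$ are the ones "paid for'', and the edges induced among covered vertices are exactly the edges whose covering constraint is already satisfied for free.

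The main steps, in order: (1) Describe the gadget reduction and argue that the Gallai-Edmonds decomposition of $G$ is the intended one; here I would use factor-criticality facts from the Preliminaries (components of $G[X]$ are factor-critical, each maximum matching matches $Y$ into distinct components). (2) Soundness: given a vertex set $S\subseteq V_H$ with $|S|$ small and $e(S)\ge k$, explicitly construct a feasible $(M,y,c)$ — set $y_v=0$ on $V_H\setminus S$-type vertices, $y_v=1$ or $\tfrac12$ appropriately elsewhere, $c$ supported on the forced matching edges — and bound $\mathbf{1}^Tc$ by (a constant times) $|S|$, invoking Lemma~\ref{lemma:opt-is-bounded} to keep all $c_e\le 1$. (3) Completeness: given any feasible MFASP solution of cost $C$, apply Theorem~\ref{thm:opt-properties} to pass to a half-integral solution with the Tutte set in the support at no greater cost, then use Lemma~\ref{lemma:FactorCriticalNotForFree} to read off a vertex set $S$ (the "uncovered'' vertices, i.e. those with $y_v=0$, inside the copy of $H$, together with endpoints of the half-cost matching edges) with $|S| = O(C)$ and $e(S)\ge k$. (4) Combine: an $f$-approximate MFASP solution yields a vertex set of size $O(f)\cdot\mathrm{OPT}_{\mathrm{M}k\mathrm{EC}}$ inducing $\ge k$ edges, and chasing the constants through the gadget gives the factor $2f$.

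The main obstacle I expect is the completeness direction, specifically making the accounting tight enough to lose only a factor $2$ rather than an unspecified constant. The subtlety is that a half-integral solution can "spread'' weight $\tfrac12$ over many edges inside the component instead of concentrating it on the $Y$-to-component matching edges, so I need Proposition~\ref{lemma:nontrivialproperties} (a matched non-trivial component forces $y\equiv\tfrac12$ and $c\equiv0$ on it) and the ear-decomposition argument inside Lemma~\ref{lemma:FactorCriticalNotForFree} to show that every "unit'' of cost genuinely buys the removal of at most one vertex from the covered set — equivalently, that $e(S)\ge k$ cannot be achieved with fewer than $\mathrm{OPT}/(\text{const})$ uncovered vertices. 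Getting the gadget so that this correspondence is exact (no off-by-constant slack from boundary edges $\delta(K)$ or from trivial components) is where the care lies; the factor $2$ in the statement is exactly the price of the half-integrality ($\tfrac12$ per vertex versus $1$ per vertex in M$k$EC).
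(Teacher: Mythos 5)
Your construction goes in the opposite direction from the paper's, and this reversal is not incidental --- it breaks the reduction. The paper encodes the \emph{edges} of the M$k$EC instance $H$ as disjoint triangle components in $G[X]$ and the \emph{vertices} of $H$ as (copies of) the Tutte set $Y$: the Gallai--Edmonds structure then forces a maximum matching to leave exactly $k$ triangles exposed, each exposed triangle costs $1$ by Lemma~\ref{lemma:FactorCriticalNotForFree}, and each Tutte vertex adjacent to an exposed triangle must get $y=1$, costing an extra $\tfrac12$ along its matching edge into the clique of a matched triangle. Multiplying $Y$ into $q$ identical copies makes the per-vertex cost $\tfrac{q}{2}$ dominate the constant $k$, which is exactly where the clean factor $2$ comes from. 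By contrast, your plan places $H$ itself inside a single factor-critical component of $G[X]$. Since each component is matched to at most one Tutte vertex, that component contributes at most one $\delta(Y)$-edge, so there is no way to make the cost scale linearly with the number of selected vertices of $H$. Nor is there any mechanism to enforce ``the $y=0$ vertices span at least $k$ edges'': the LP for an exposed factor-critical component (cf.\ the $\ell_{K,w}$ LPs in Section~\ref{sec:sqrt-n-approx}) constrains $\{v:y_v=0\}$ to be an independent set and is otherwise a vertex-cover-type objective; nothing in it counts induced edges, so $k$ never enters.

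Beyond this structural mismatch, the proposal does not actually specify the gadget --- step (1) asserts ``attach gadgets to $H$'' without saying what they are, so feasibility, the claimed Gallai--Edmonds decomposition, and the cost correspondence cannot be checked. And you explicitly flag the completeness accounting (the factor $2$ vs.\ an unspecified constant) as an unresolved obstacle, which is precisely the part the paper resolves via the $q$-fold replication of $Y'$ and the choice $q=\max\{k,\max_v|\delta(v)|\}$. As written, this is a sketch of an approach that would need a fundamentally different (and still unspecified) gadget to capture the M$k$EC constraint, not a proof of the theorem.
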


\begin{proof}
  Let $G = (V, E)$ and $k < |E|$ be an instance of M$k$EC. Our goal is
  to find a subset of $k$ edges spanning a minimum number of nodes.
  We construct an instance $\hat{G}$ of MFASP
  whose Gallai-Edmonds
  decomposition (GED) has a specific form.  By Theorem
  \ref{thm:opt-properties}, it is sufficient to consider maximum
  matchings for MFASP.  $\hat{G}$ will encode the problem of picking
  $k$ edges for M$k$EC as the problem of identifying $k$
  factor-critical components in the GED that are to be exposed by the
  matching in a solution for MFASP. An illustration can be found in Figure \ref{fig:MkEChardness}.

  Let $Y'$ be a copy of vertex set $V$; we will later show that $Y'$
  is a part of the Tutte set $Y$ of the GED of the constructed graph.
  Furthermore, for each edge $e = \{v, w\} \in E$, we add a triangle,
  and we let $\Delta$ denote the collection of these triangles.  We
  will later show that each triangle will form a component of $X$ in
  the GED.  We connect each node of a triangle corresponding to an
  edge $\{v, w\} \in E$ to the vertices $v$ and $w$ in $Y'$.  As any
  maximum matching matches each vertex of the Tutte set to a distinct
  factor-critical component, we modify the instance such that the
  number of triangles is exactly $|V| + k$.  To achieve this, we
  either add vertices to $Y'$ which are connected to all vertices of
  all triangles or we add triangles that are connected to all vertices
  in $Y'$.

  While M$k$EC allows choosing any collection of $k$ edges,
  there may exist a collection of $k$ triangles in our current graph such
  that the remaining triangles cannot be matched perfectly to $Y'$.
  To remedy the situation, we add $q - 1$ copies
  $Y_2, \ldots Y_q$ of $Y'$, where $q$ will be chosen later, and connect each vertex of $Y_i$
  ($i \in 2, \ldots q$) with the same nodes as the corresponding node
  in the ``original'' set $Y'$, which can be seen as $Y_1$.  We will
  later show that all these copies belong to the Tutte set $Y$ of the
  Gallai-Edmonds decomposition.  Moreover, we add $|Y'|*(q - 1)$
  triangles and connect all their vertices to all vertices in
  $Y'' := Y_1\cup \ldots \cup Y_q$.  Call this set of newly added
  triangles $C'$.

      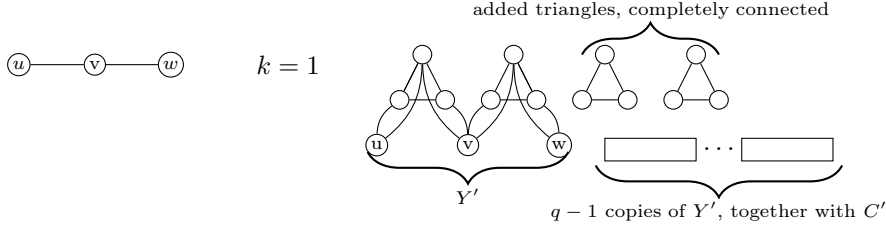
\begin{figure}[tb]

     \tikzstyle{vertex}=[circle, draw,
                        inner sep=1pt, minimum width=7pt]
    \tikzstyle{edge} = [draw,thick]
    \centering
\begin{subfigure} [M$k$EC instance $G$]{0.35\linewidth}
{
      \centering
    \begin{tikzpicture} [scale = 0.5]

    \foreach \x/\nr/\name in {{0/1/u}, {4/3/w}}
        \node[vertex] (v\nr) at (\x, 0){\scriptsize $\name$};

     \node[vertex] (v2) at (2, 0){\scriptsize v};

    \foreach \i/\j in {{2/3}, {1/2}}
    {
        \draw (v\i) edge (v\j);
    }

     \draw [white] (2,-2) circle (1mm);
     \draw (6, 0) node[anchor = west]{$k = 1$};
    \end{tikzpicture}
} \end{subfigure}
\quad
\begin{subfigure} [constructed $\hat{G}$]{0.6\linewidth}
{
      \centering
    \begin{tikzpicture}[scale = 0.3]
    \foreach \nr/\coord/\name in {{1/0/u}, {2/4/v}, {3/8/w}}
    {
        \node[vertex] (v\nr) at (\coord, 0){\scriptsize \name};
    }

    \foreach \i/\j/\coord in {{v/w/6}, {u/v/2}, {w/a/10}, {x/a/14}}
    {

        \node[vertex] (r\i1) at (\coord - 1, 2){};
        \node[vertex] (r\i2) at (\coord + 1 , 2){};
         \node[vertex] (r\i3) at (\coord , 4){};
         \foreach \k/\l in {{1/2}, {2/3}, {3/1}}
         {
         \draw (r\i\k) edge (r\i\l);
         }
    }

    \foreach \i/\j in {{v1/ru1}, {v2/ru3}, {v2/rv1}, {v3/rv3}}
    {
        \draw (\i) edge [bend left = 25] (\j);
    }

    \foreach \i/\j in {{v1/ru3}, {v2/ru2}, {v2/rv3}, {v3/rv2}}
    {
        \draw (\i) edge [bend right = 25](\j);
    }

    \draw (ru1) edge[] node[left]{}(ru3);
    \draw (rv2) edge[] node[right]{}(rv3);

    \draw[thick, decorate,decoration={brace,amplitude=12pt}] (8.4,-0.3) -- (-0.4,-0.3) node[midway, below ,yshift=-10pt,]{\scriptsize $Y'$};

    \draw[thick, decorate,decoration={brace,amplitude=12pt}] (9,4) -- (15,4) node[midway, above ,yshift=10pt,]{\scriptsize added triangles, completely connected};

    \draw (10,-0.7) rectangle (14, 0.3);

    \draw (15, 0.3) node[anchor = north]{$\ldots$};

     \draw (16,-0.7) rectangle (20, 0.3);

    \draw[thick, decorate,decoration={brace,amplitude=12pt}] (20.4,-0.9) -- (9.6,-0.9) node[midway, below ,yshift=-10pt,]{\scriptsize $q - 1$ copies of $Y'$, together with $C'$};

    \end{tikzpicture}
} \end{subfigure}

\caption{Instance of M$k$EC and schema of a corresponding MFASP instance $\hat{G}$}
\label{fig:MkEChardness}
\end{figure}

The following two lemmas describe the relevant structure of the construction.


\begin{lemma}\label{lemma:perfectrestmatching}
 Let $q \ge \max_{v \in V(G)} |\delta(v)|$.
 Then for any choice of $k$ triangles, there is a perfect matching between $Y''$ and the triangles that were not chosen.
\end{lemma}

\begin{proof}
  Let ${\cal E}$ be the set of $k$ triangles that we wish to expose.
  We construct a matching $M_{\cal E}$ that exposes precisely ${\cal
    E}$. For each triangle in $\Delta\setminus {\cal E}$ corresponding
  to some edge $\{u,v\} \in E$ match the triangle to a currently
  exposed copy of $u$. Note that $q$ is at least the maximum degree in
  $G$, and hence this process matches all triangles in
  $\Delta\setminus {\cal E}$.

  Let $\widetilde{\Delta}$ be the collection of triangles not in $\Delta\cup {\cal E}$, and let
  $\widetilde{Y''}$ be the collection of $M_{\cal E}$ exposed vertices in
  $Y''$. Clearly, $|\widetilde{\Delta}|=|\widetilde{Y''}|$, and the graph
  induced by the edges between $\widetilde{Y''}$ and the vertices of
  triangles in $\widetilde{\Delta}$ is complete bipartite. Thus, picking
  any $\widetilde{Y''}$-perfect matching in this graph and adding its edges to $M_{\cal E}$
  yields the desired matching exposing ${\cal E}$.
\end{proof}

\begin{lemma}\label{lemma:MkECGEdecomposition}
 The Gallai-Edmonds decomposition of $\hat{G}$ is given by $Y = Y''$, $X = V(\hat{G})\setminus Y''$, $Z = \emptyset$.
\end{lemma}

\begin{proof}
In any graph $G=(V,E)$, the size of a maximum matching can be characterized by the \emph{Tutte-Berge formula} \cite{Berge1958}:
$$2\nu(G) = |V| - \max_{W \subseteq V} (q_G(W) - |W|)$$ 
where $q_G(W)$ denotes the number of
components with an odd number of nodes in $G[V\setminus{W}].$

By plugging $W = Y''$ into the Tutte-Berge formula, we see that a maximum matching has size at most $\frac{V(\hat{G}) - k}{2} = 2|Y''| + k$. A matching of size $2|Y''|+k$ exists by Lemma \ref{lemma:perfectrestmatching}.

By Lemma \ref{lemma:perfectrestmatching}, every vertex $v \in  V(\hat{G})\setminus  Y''$ is inessential, that is, there exists a maximum matching in $\hat{G}$ exposing $v$.
Now, suppose there was a maximum matching $M$ exposing a vertex $v \in Y''$.
We know $|M| = 2|Y''| + k$, but any matching can contain at most $|Y''| + k$ edges of $E[V(\hat{G})\setminus (Y'')]$, as that is the number of triangles.
All other edges have one endpoint in $Y''$. Thus, if $M$ exposes $v$, then $|M| < |Y''| + k + |Y''|$, which is a contradiction.
\end{proof}

Together, Lemmas \ref{lemma:perfectrestmatching} and \ref{lemma:MkECGEdecomposition}  imply that for any choice of
$k$ unmatched factor-critical components, there is a maximum matching exposing exactly one vertex in these $k$ components
and conversely, every maximum matching is of this form.
We have shown in Theorem \ref{thm:opt-properties} that it suffices to consider stabilizers $(M, y, c)$ where $M$ is a maximum matching,
$y$ and $c$ are half-integral and $y$ is positive on the Tutte set.
Once the set of unmatched components is fixed, we can see how to obtain an optimal stabilizer for this situation:
Start with a matching between the matched components and $Y''$ and extend it arbitrarily to a maximum matching $M$.
Let $K \subset V(\hat{G})$ be the set of vertices in triangles not matched to $Y''$.
We set $c_e = 1$ for the matching edge in each of these triangles, $y_v = 1$ for both matched vertices within the triangle and $y_v = 0$ for the remaining vertices in $K$.
For $v \in Y''$, we set $y_v = 1$ if $v \in N(K)$ and $y_v = \nicefrac{1}{2}$ otherwise.
By Proposition \ref{lemma:nontrivialproperties}, $y_v = \nicefrac{1}{2}$ is then optimal for all vertices $v$ in matched triangles.
Consequently, $c(e) = \nicefrac{1}{2}$ if $e \in M \cap \delta(N(K))$ and $c(e) = 0$ for the remaining edges.
In total, $\1^Tc = k + \nicefrac{1}{2}|N(K)|$.

If the unmatched components correspond to a set $E'$ of edges in $G$,
then $N(K) \cap Y_1$ corresponds exactly to the vertices in $G$ spanned by $E'$.
Consequently, the cost of the stabilizer consists of $k$ (for the unmatched triangles) and $q$ times
the number of spanned vertices in $G$ as the neighbourhoods of the copies of $Y'$ are identical.
Suppose a component that does not correspond to an edge in $G$ is unmatched. Then, $y_v = 1$ for all $v \in Y''$
and therefore $c(E) = k + \nicefrac{1}{2}|Y''|$ .
Thus, we can modify the solution by choosing to expose components corresponding to edges instead without increasing the cost.
W.l.o.g.\ we modify any solution to MFASP this way. Then, we have the following Lemma:

\begin{lemma}\label{lemma:mkec-mfasp}
$G$ has a solution of M$k$EC of size at most $x$ if and only if $\hat{G}$ has a MFASP of cost at most $k+qx/2$.
\end{lemma}

We next show that Lemma \ref{lemma:mkec-mfasp} yields a factor-preserving hardness. If $k > 0$, then we have $x \ge 2$. Moreover, set $q = max\{k, \max_{v \in G} |\delta(v)|\}$.
Let $x^*$ be the value of an optimal solution for M$k$EC, then the optimal value of MFASP is $k + \frac{qx^*}{2}$.
Suppose there was an $f$-approximation for MFASP, this would yield a stabilizer solution of cost $k + \frac{qx}{2} \le f( k + \frac{qx^*}{2})$ for some $x$. We observe that $\frac{qx}{2} \le (f - 1)k + f\frac{qx^*}{2} \le qf(1 + \frac{x^*}{2}) \le qfx^*$. Therefore, we have a $2f$-approximate solution of M$k$EC
which proves Theorem \ref{prp:MkEChardness}.
\end{proof}

We now show the first part of Theorem \ref{thm:hardness}. While using Theorem \ref{prp:MkEChardness} to derive hardness of approximation for MFASP, we have to be careful if we want to set $f$ to be a value that depends on the input size: If $G$ is the input for M$k$EC with $n = |V(G)|$, 
then the number of vertices in our construction of the MFASP is bounded by $\max\{4n^3 + n^2, 7n^2\} \leq 7n^3$. For $\hat{n}$ being the number of vertices in an MFASP instance, we can conclude that an approximation algorithm with approximation factor $O(\hat{n}^{\frac{1}{24}-\epsilon})$, for $\epsilon>0$ would lead to a $O(n^{1/8-3\epsilon})$ approximation algorithm for M$k$EC, where $n$ is the number of vertices of the M$k$EC instance. This would lead to an algorithm with approximation factor $O(n^{1/4-6\epsilon})$ for M$k$DS according to \cite{HajiJain06}. 

\subsection{Reduction from Set Cover}
\label{sec:set_cover}


While the Densest-$k$-subgraph problem is believed to be difficult, there are no strong inapproximability results known.
In this subsection, we 
show Set-Cover-hardness for MFASP, which leads to a stronger inapproximability result.

We exploit a different aspect of MFASP for this reduction: We could look at MFASP as a problem consisting of two subproblems:
How to choose the matched factor-critical components and, having fixed those, how to choose the matching within the unmatched components
and thus decide the $y$-values.
In the previous reduction, the difficulty was completely encoded in the first subproblem. Once we chose the matched components,
the second problem was easy.
In the following reduction, we will consider a construction, where the matched components are the same for any reasonably good solution and the
difficulty lies in the second subproblem.

\begin{theorem}
 \label{thm:MFASP-SC-hardness}
 If there is a polynomial time $f$-approximation algorithm for MFASP,
 then there is a polynomial time $2f$-approximation algorithm for Set Cover.
\end{theorem}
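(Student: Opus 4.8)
The plan is to reduce from Set Cover. Let an instance be given by a universe $U=\{a_1,\dots,a_n\}$ and sets $S_1,\dots,S_m\subseteq U$; the goal is to pick the fewest sets whose union is $U$. Following the blueprint hinted at just before the statement, I would build an MFASP instance $\hat G$ whose Gallai-Edmonds decomposition is rigidly determined, so that in every reasonable solution the \emph{same} factor-critical components are left unmatched, and the only freedom is in how the matching inside those unmatched components is completed — which in turn fixes the $y$-values on the Tutte set. Concretely: create a vertex $b_j$ (or a small odd gadget, e.g.\ a triangle) for each set $S_j$, a non-trivial factor-critical component $K_i$ for each element $a_i$, and wire $K_i$ to those $b_j$ with $a_i\in S_j$, padding with enough dummy components and Tutte vertices (exactly as in the M$k$EC construction, using Lemmas~\ref{lemma:perfectrestmatching} and~\ref{lemma:MkECGEdecomposition}) so that every maximum matching exposes a prescribed set of element-components and the $b_j$'s all lie in the Tutte set $Y$. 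The element-components $K_i$ are the ones forced to be exposed.

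The key step is the cost accounting, driven by Theorem~\ref{thm:opt-properties} and Lemma~\ref{lemma:FactorCriticalNotForFree}. By Theorem~\ref{thm:opt-properties} it suffices to consider solutions $(M,y,c)$ with $M$ maximum, $y,c$ half-integral, and $y>0$ on the Tutte set. An exposed non-trivial component $K_i$ costs at least $1$ inside itself (Lemma~\ref{lemma:FactorCriticalNotForFree}, first part), and on each edge joining $K_i$ to a Tutte vertex $w$ the cover forces $c\ge y_w-\nicefrac12$; the cheapest consistent assignment sets $y_w=\nicefrac12$ except for Tutte vertices $w$ that must ``pay'' for some exposed component, where $y_w=1$ and hence every matching edge at $w$ carries $c=\nicefrac12$. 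The combinatorial heart is: to discharge the ``$+\nicefrac12$ of cover value'' demanded around each exposed element-component $K_i$, some adjacent $b_j$ must be raised to $y_{b_j}=1$, and $b_j$ can simultaneously serve all element-components of elements in $S_j$. Thus the incremental cost beyond the forced constant is (up to a uniform scaling factor obtained by taking many parallel copies of the $b_j$'s, exactly as the $q$ copies in the M$k$EC reduction) proportional to the number of raised $b_j$'s, i.e.\ to the size of a set cover of $U$. I would make this exact with a lemma analogous to Lemma~\ref{lemma:mkec-mfasp}: $\mathrm{OPT}_{\mathrm{MFASP}}(\hat G)=A+B\cdot\mathrm{OPT}_{\mathrm{SetCover}}$ for explicit constants $A,B$ computable from the instance, and $B$ chosen large relative to $A$ so the affine shift only costs a factor $2$.

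Finally, translate approximation guarantees: given an $f$-approximate MFASP solution of cost $A+Bx\le f(A+Bx^*)$, and using $x^*\ge 1$ together with $B\ge A$ (arrange this by replicating the set-gadgets enough times), one gets $Bx\le (f-1)A+fBx^*\le 2fBx^*$, so $x\le 2fx^*$, yielding a $2f$-approximation for Set Cover. Combined with the known $(1-o(1))\ln n$ inapproximability of Set Cover unless $\mathrm{P}=\mathrm{NP}$, this gives the $o(\log|V|)$ lower bound for MFASP, completing the second half of Theorem~\ref{thm:hardness}. I expect the main obstacle to be verifying that no ``clever'' MFASP solution circumvents the set-cover structure — e.g.\ exposing dummy components, or exploiting half-integral $y$-values in an unanticipated pattern, to pay less than one raised $b_j$ per uncovered element. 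Handling this needs a careful case analysis (mirroring the ``W.l.o.g.\ we modify any solution this way'' argument in the M$k$EC proof) showing every solution can be massaged, without increasing cost, into the canonical form where exposed components are exactly the element-components and the raised Tutte vertices form a set cover.
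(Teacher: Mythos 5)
Your high-level strategy matches the paper's: encode elements as factor-critical components in $G[X]$, encode sets as Tutte vertices, force the element-components to be exposed, and read a set cover off the vertices $b_j$ with $y_{b_j}=1$. The cost accounting you sketch, $\mathrm{OPT}_{\mathrm{MFASP}}=A+B\cdot\mathrm{OPT}_{\mathrm{SetCover}}$ with amplification by replicating the $b_j$'s, also parallels Claims~\ref{claim:set-cover-to-mfasp} and \ref{claim:mfasp-to-set-cover} and is sound once the structural setup is in place.

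The gap is in the mechanism you propose for forcing every element-component to be exposed. You suggest padding ``with enough dummy components and Tutte vertices exactly as in the M$k$EC construction''; but that style of padding only controls \emph{how many} components a maximum matching exposes (via the Tutte--Berge count), not \emph{which} ones. As long as the $b_j$'s remain available as matching partners, a solution can match an element-component $K_i$ to some $b_j$ and expose a dummy instead, bypassing the set-cover structure entirely. You flag this worry at the end but do not resolve it. The paper handles it with a different gadget: each Tutte vertex $S_j^i$ is given a private clique $C_j^i$ of size $2N+1$ (with $N>(nm)^2$), attached by a single edge $\{S_j^i,c_j^i\}$. The bridge structure guarantees $S_j^i$ lies in the Tutte set (factor-critical components are $2$-edge-connected), and the clique size makes any solution that fails to match $S_j^i$ into $C_j^i$ cost at least $N$, far above the trivial upper bound $nm/2+n$ from Claim~\ref{claim:set-cover-to-mfasp} with $T=[m]$. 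This pins every Tutte vertex to its own clique and forces all $n$ element-cycles $Q_i$ to be exposed. Your proposal lacks any analogue of this, so the ``canonical form'' you need (exposed components are exactly the element-components) is never established. A minor related inconsistency: you suggest representing each set by ``a small odd gadget, e.g.\ a triangle,'' but such a gadget would be a factor-critical component in $X$, whereas you need the $b_j$'s to sit in $Y$; the paper's $S_j^i$ are bare vertices in the Tutte set, stabilized there by the clique bridge.
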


\begin{proof}
 Let $(\mathcal{S},\mathcal{X})$ be an instance of the Set Cover problem with sets $\mathcal{S}=\{S_1,\dots,S_m\}$ and elements $\mathcal{X}=\{x_1,\dots,x_n\}$. Our goal is to choose a minimum number of sets whose union contains all elements $x_i$. Let the frequency of element $x_i$ be $F_i = |\{ S_j : x_i \in S_j \}|$. Without loss of generality, $F_i > 1$. Otherwise, the only set containing an item $x_i$ has to be part of any solution, so it suffices to consider instances with $F_i>1$ for all $i\in [n]$.

We construct a graph $\hat{G}$ with a specific Gallai-Edmonds decomposition. 
Our goal will be to decide whether a set is included in a set cover based based on the $y$-values of the Tutte set. 
For each set $S_j$, create $n$ vertices $S_j^1, \ldots, S_j^n$. Let $Y'=\{S_j^i: i\in [n],j\in[m]\}$. This will be our Tutte set.
For each $S_j^i$ create a clique $C_j^i$ of size $2N+1$ with $N>(nm)^2$ with a designated vertex $c_j^i$ and add an edge $\{S_j^i, c_j^i\}$.
The purpose of these large cliques is ensuring that every vertex in $Y'$ is matched to its clique, thus exposing the factor-critical components we construct next:
For each element $x_i$ with $F_i$ odd, construct an odd cycle $Q_i$ consisting of $F_i$ vertices $x_i^1,\ldots, x_i^{F_i}$.
For each element with $F_i$ even, construct an odd cycle $Q_i$ consisting of $F_i+1$ vertices $x_i^1,\ldots, x_i^{F_i+1}$,
where the vertex $x_i^{F_i+1}$ is a dummy vertex.
Let $\hat{S}_{(1,i)},\ldots, \hat{S}_{(F_i,i)}$ denote the sets in $\mathcal{S}$ containing $x_i$ (choose the order arbitrarily).
Consider the $n$ copies of the corresponding vertices in $Y'$ and
add edges $\{x_i^k, \hat{S}_{(k,i)}^{\ell}\}$ $\forall\ \ell\in [n]$, $\forall\ k\in [F_i]$ $\forall\ i\in [n]$.
I.e.\ add an edge between the $k$-th vertex for $x_i$ and all copies of the $k$-th set in the list.
For every $i\in [n]$ with $F_i$ even, add edges between $x_i^{F_i+1}$ and all vertices in $Y'$.
Let the resulting graph be $\hat{G}=(\hat{V},\hat{E})$. (See figure \ref{fig:SetCoverHardness}).

     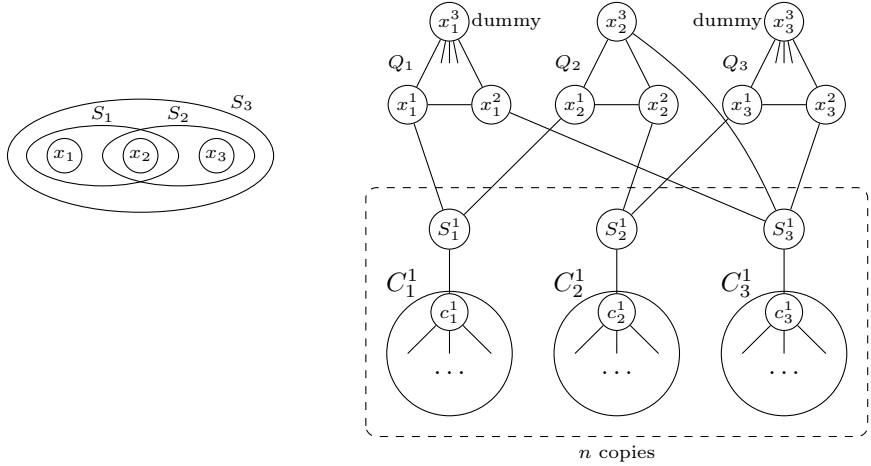
\begin{figure}[tb]

      \tikzstyle{vertex}=[circle, draw,
                         inner sep=1pt, minimum width=13pt]
     \tikzstyle{edge} = [draw,thick]
     \centering
 \begin{subfigure}[set cover instance]{0.35\linewidth}
 {
       \centering
     \begin{tikzpicture}[scale = 0.5]

     \foreach \x/\nr/\name in {{0/1/1}, {2/2/2}, {4/3/3}}
         \node[vertex] (v\nr) at (\x, 0){\scriptsize $x_\name$};

    	\draw (1, 0) ellipse  (2cm and 0.8cm) node[below, yshift = 22 pt]{\scriptsize $S_1$} ;
 	 \draw (3, 0) ellipse (2cm and 0.8cm) node[below, yshift = 22 pt]{\scriptsize $S_2$};
	 \draw (2, 0) ellipse (3.5cm and 1.5cm) node[right, xshift = 30 pt, yshift = 20 pt]{\scriptsize $S_3$};

      \draw [white] (2,-6) circle (1mm);

     \end{tikzpicture}
 }\end{subfigure}
 \quad
 \begin{subfigure}[constructed $\hat{G}$]{0.6\linewidth}
 {
       \centering
     \begin{tikzpicture}[scale = 0.55]

     \foreach \nr/\coord/\name in {{1/0/1}, {2/4/2}, {3/8/3}}
     {
         \node[vertex] (v\nr) at (\coord, 0){{\scriptsize $S_\name^1$}};
         \node[vertex] (c\nr) at (\coord, -2){{\scriptsize $c_{\name}^1$}};
         \draw (v\nr) edge (c\nr);
         \draw (c\nr) edge (\coord -1 , -3);
         \draw (c\nr) edge (\coord  , -3);
         \draw (c\nr) edge (\coord +1 , -3);
         \draw (\coord, -3.2) node[anchor = north]{$\ldots$};
         \draw [] (\coord,-3) circle (1.5 cm);
         \draw (\coord - 0.5, -1.3) node[anchor = east]{$C_{\name}^1$};
     }

     \draw [dashed, rounded corners] (-2, -5) rectangle (10, 1);
     \draw (4, -5) node[below]{\scriptsize $n$ copies};

%
     \foreach \i/\coord in {{1/0}, {2/4}, {3/8}}
     {
        \node[vertex] (r\i1) at (\coord - 1, 3){\scriptsize $x_\i^1$};
         \node[vertex] (r\i2) at (\coord + 1 , 3){\scriptsize $x_\i^2$};
          \node[vertex] (r\i3) at (\coord , 5){\scriptsize $x_\i^3$};
          \foreach \k/\l in {{1/2}, {2/3}, {3/1}}
          {
          \draw (r\i\k) edge (r\i\l);
          }
          \draw (\coord - 0.6, 4) node[anchor = east]{\scriptsize $Q_\i$};
     }
      \foreach \i/\j in {{r11/v1}, {r12/v3}, {r21/v1}, {r22/v2}, {r31/v2}, {r32/v3}}
      		\draw (\i) edge (\j);

        \foreach \i/\j in {{r23/v3}}
       		\draw (\i) edge [bend left = 15] (\j);

      \draw (0.3, 5) node[right]{\scriptsize dummy};
        \draw (7.7, 5) node[left]{\scriptsize dummy};

	  \draw (r13) edge (- 0.2 , 4);
          \draw (r13) edge (0  , 4);
          \draw (r13) edge (0.2 , 4);

          \draw (r33) edge (7.8 , 4);
          \draw (r33) edge (8 , 4);
          \draw (r33) edge (8.2 , 4);


     \end{tikzpicture}
 }\end{subfigure}

 \caption{Set Cover instance and constructed MFASP instance $\hat{G}$}
 \label{fig:SetCoverHardness}
 \end{figure}

We now analyze the structure of the instance we built:
 
\begin{restatable}{claim}{claimGED}
The Gallai-Edmonds decomposition of $G$ is $X\cup Y\cup Z$ where $Z=\emptyset$, $Y=Y'$ and $X=\hat{V}\setminus Y'$.
\end{restatable}
\begin{proof}
 Using the Tutte-Berge formula for the set
 $W = \{S_j^i: 1 \leq j \leq m, 1 \leq i \leq n  \}$,
 we see that a maximum matching has size at most $nm(N + 1) +  \sum_{i = 1}^n \lfloor F_i/2\rfloor$. 
 Clearly, a matching of this size exposing a vertex $v$ can be constructed for any $v \in C_j^i$ or $v\in Q_i$. 
 Moreover, a matching of this size cannot be constructed by exposing a vertex $v=S_j^i$: If so, such a vertex $S_j^i$ 
 would belong to a factor-critical component $K$ in the Gallai-Edmonds decomposition and $K$ also contains $C_j^i$. 
 But factor-critical graphs are 2-edge connected, and removing the edge $(S_j^i, c_j^i)$ would separate the graph $K$ into two components, a contradiction.
\end{proof}

\begin{restatable}{claim}{claimSetCoverMFASP}\label{claim:set-cover-to-mfasp}
Let $T\subseteq[m]$ denote the indices corresponding to a set cover.
Then there exists a feasible solution to the MFASP instance $\hat{G}$ whose stabilizer cost is at most $n(1+|T|/2)$.
\end{restatable}
\begin{proof}
For each $i\in [n]$, let $k_i$ denote an arbitrarily chosen index in $T$ such that the set $S_{k_i}$ contains the element $x_i$.

Consider a matching $\bar{M}$ obtained by matching $S_j^i$ with $c_j^i$ $\forall j\in[m],i\in[n]$,
picking a perfect matching of the rest of the clique vertices $V(C_j^i)\setminus \{c_j^i\}$,
exposing $x_i^{k_i}$ and picking a perfect matching of the rest of the vertices in each odd cycle $Q_i$.

Obtain a fractional vertex cover solution $\bar{y}$ as follows: For every $i\in [n]$,
let $\bar{y}_{S_j^i}=1$ if $j\in T$ and $\bar{y}_{S_j^i}=\nicefrac{1}{2}$ if $j\in [m]\setminus T$.
For each $i\in [n]$, set $\bar{y}_{x_{k_i}}=0$, $\bar{y}_{x_{k}}=1$
for the two vertices $x_k$ in $Q_i$ that are adjacent to $x_{k_i}$ and $\bar{y}_{x_{k}}=\nicefrac{1}{2}$ for the other vertices in $Q_i$.

Obtain the solution $\bar{c}$ as $\bar{c}_{uv}=\bar{y}_u+\bar{y}_v-1$ for every $uv\in \bar{M}$.
Then the solution $(\bar{M},\bar{y},\bar{c})$ is a feasible solution to the MFASP instance $\hat{G}$.
Moreover, the cost of the stabilizer $\1^T\bar{c}$ is $n(1+|T|/2)$.
\end{proof}

Let $(M,y,c)$ be an $f$-approximate feasible solution to the MFASP instance $\hat{G}$. We now can assume the following properties. 
If these are not fulfilled, we can change the solution without increasing the cost.   
\begin{restatable}{claim}{claimStructureSol}\label{claim:structure-of-mfasp-soln}
We can assume the following properties.
\begin{enumerate}
\item $M$ matches $S_j^i$ to $c_j^i$ for every $j\in [m],\ i\in [n]$ and $y_v=\nicefrac{1}{2}$ for every $v\in V(C_j^i)$.
\item $y_{S_j^i}=y_{S_j^1}$ for every $i\in [n]$, $j\in [m]$.
\end{enumerate}
\end{restatable}
\begin{proof}
We split the proof into two parts and show both properties separately.
\begin{enumerate}
\item If this was not the case, then there is at least one clique $C_j^i$ with a vertex $v$ with $y_v = 0$. 
Thus, $y_w = 1$ for all $w \in V(C_j^i) \setminus v$. 
By the complementary slackness condition given in Section \ref{sec:prelims}, we have $y_j+y_k=1+c_{jk}$ for each matching edge $\{j,k\}\in M$. 
Thus, $\sum_{e \in M \cap E(C_j^i)} c_e = |M \cap E(C_j^i)| = N$. However, we note that $T=[m]$ is a feasible set cover and 
by Claim \ref{claim:set-cover-to-mfasp}, this gives a feasible stabilizer of cost at most $nm/2+n$.
\item If $y_{S_j^i}\neq y_{S_j^1}$, then consider the block $i_0$ such that $\sum_{j\in [m]}y_{S_j^i}$ is minimum. 
Since the neighborhood of $\{S_1^{i_0},\ldots,S_m^{i_0}\}$ is identical to the neighborhood of all other blocks, 
we may copy the same vertex cover values $y$ for all other blocks and obtain a stabilizer with non-increasing cost.
\end{enumerate}
\end{proof}

Therefore, the set of $M$-exposed vertices contains exactly one vertex in each odd cycle $Q_i$.
Moreover, we can assume an exposed vertex is not a dummy vertex $x_i^{F_i + 1}$.
Otherwise, we could change that without increasing the cost of the stabilizer.

\begin{claim}\label{claim:mfasp-to-set-cover}
Let $X$ be the $M$-exposed vertices. Let $P:=\{j\in [m]:S_j^1\in N(X)\}$.
Then $\{S_j:j\in P\}$ is a set cover of cardinality at most $2f|P^*|$, where $P^*$ is the set of indices corresponding to the optimal set cover.
\end{claim}
\begin{proof}
We first show that $P$ is indeed a set cover. Consider an element $x_i$. By Claim \ref{claim:structure-of-mfasp-soln},
there exists $k\in [F_i]$ such that $x_i^k$ is $M$-exposed. There exists a set $S_r^1$ adjacent to $x_i^k$. Hence $r\in P$ and thus $S_r$ covers $x_i$.

It remains to bound the cardinality of $P$. Since $M$ exposes exactly one vertex in each odd cycle $Q_i$, by complementary slackness conditions,
$\sum_{e\in Q_i}c_e\ge 1$. Thus, $\sum_{i\in [n]}\sum_{e\in Q_i}c_e\ge n$. Let $r\in P$.
So $S_r^1$ is adjacent to an $M$-exposed vertex in $Q_i$ and hence $y_{S_r^1}\ge 1$.
By Claim \ref{claim:structure-of-mfasp-soln}, we have that $y_{S_r^i}\ge 1$ for every $i\in [n]$.
Since $y_v=\nicefrac{1}{2}$ for every $v\in V(C_j^i)$ (using Proposition \ref{lemma:nontrivialproperties}),
we have that $c_{S_r^i,c_r^i}\ge \nicefrac{1}{2}$. Thus, $\sum_{r\in P}\sum_{i\in [n]}c_{S_r^i,c_r^i}\ge |P|n/2$.
Thus, the cost of the stabilizer $\1^Tc\ge n(1+|P|/2)$. Hence, $|P|\le 2(\1^Tc/n-1)\le 2(f\1^Tc^*/n-1)$.

By Claim \ref{claim:set-cover-to-mfasp}, we have that the cost of the optimal stabilizer $\1^Tc^*$ is at most $n(1+|P^*|/2)$.
Thus, $|P|\le f|P^*|+2(f-1)\le 2f|P^*|$.
\end{proof}
Theorem \ref{thm:MFASP-SC-hardness} follows by Claim \ref{claim:mfasp-to-set-cover}.
\end{proof}

We now show that Theorem~\ref{thm:MFASP-SC-hardness} implies the second part of Theorem~\ref{thm:hardness}. Note that Dinur and Steurer \cite{DiStSetCoverHard} showed that there is no $(\log(n)-\epsilon)$-approximation algorithm for Set Cover, even if the number of sets $m$ is at most $n^2$, unless P=NP. Theorem \ref{thm:MFASP-SC-hardness} implies that there is no $\nicefrac{1}{2}(\log(n)-\epsilon)$-approximation for MFASP, where $n$ is the number of elements of the corresponding Set Cover instance. Now let $\hat{n}$ denote the number of vertices in the MFASP instance constructed in the proof of Theorem \ref{thm:MFASP-SC-hardness}. We have 
\[\hat{n} \leq nm\left(2(nm)^2+3\right)+nm \leq 6(nm)^3 \leq 6n^9.\]

Hence, for $\hat{n}$ being the number of vertices in an MFASP instace, we conclude that unless P=NP, there is no approximation algorithm for MFASP with approximation factor better than $(1/20)\log(\hat{n})-\epsilon$.

  

\section{An OPT-approximation in graphs with no singletons}\label{sec:sqrt-n-approx}
In this section, we present an algorithm that achieves a
$\min\{\sqrt{n}, OPT\}$-approxi\-ma\-ti\-on factor in graphs whose
Gallai-Edmonds decomposition has no trivial factor critical
components.  As a subroutine, we use an extension of an algorithm to
solve MFASP in factor-critical graphs. We mention how to do this by
solving an LP. Note that this is also possible using ``combinatorial techniques'' (in particular, without solving an LP) 
by computing a certain minimum vertex cover in a constructed bipartite graph. 
However, we will not go into details here. 

Our main theorem is the following:

\thmOPTApprox*

In the remainder of this Section, we prove Theorem
\ref{thm:opt-approx}. We describe the algorithm as part of the proof, but for an overview, we also give the pseudocode at the end of this section.   
Fix an optimum solution $(M^*, y^*, c^*)$
satisfying the properties (c) and (d) given in Theorem \ref{thm:opt-properties} and Lemma \ref{lemma:PositiveTutteSet}. 
Then, by Proposition \ref{lemma:nontrivialproperties}, 
$c^*_e = 0$ for $e$ in a component that is not $M^*$-exposed. Moreover, as mentioned before, w.l.o.g.\ $Z = \emptyset$. 
As usual, $OPT := \sum_{e \in E} c^*(e)$.  Let $r$ denote the
difference between the number of components in $G[X]$ and the number
of vertices in $Y$.  As $M^*$ is a maximum matching, the properties of
the Gallai-Edmonds decomposition imply that $M^*$ exposes exactly $r$
vertices, at most one in each component of $G[X]$.  Further, $M^*$
matches at most one vertex of a component to a vertex in $Y$, while
the rest are matched within the component.

For each factor-critical component $K$, we compute a lower bound on
the cost of an optimum stabilizer where the matching exposes $K$.
\begin{lemma}\label{lem:cost-lower-bound}
Let $K$ be a (non-trivial) factor-critical component in $G[X]$. For each vertex $w$ in $K$, let $\ell_{K,w}$ denote the optimum value of the following LP:
\begin{align*}
\ell_{K,w} :=\min ~ & \sum_{v\in V(K)\cup N_G(V(K))} y_v - \left(\frac{|V(K)|-1}{2}\right) - \frac{|N_G(V(K))|}{2}\\
~~~~ \st\ & y_i + y_j \ge 1\ \forall\ \{i,j\}\in E[V(K)\cup N_G(V(K))]\\
& y_i\ge 1/2\ \forall\ i\in N_G(V(K))\\
& y_i\ge 0\ \forall\ i\in V(K)\\
& y_w=0
\end{align*}
Let $f(K):=\min_{w\in V(K)}\ell_{K,w}$. If $M^*$ exposes
a vertex in $K$ then $\1^Tc^* \geq f(K)+r-1$. 
\end{lemma}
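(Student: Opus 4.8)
The plan is to plug the restriction of a fixed optimum $(M^*,y^*,c^*)$ (one satisfying properties (c) and (d)) to the vertices of $K$ and their neighbourhood into the LP defining $\ell_{K,\cdot}$, and then to pay for the LP objective with the ``local'' part of $\mathbf{1}^T c^*$ while charging the remaining $r-1$ exposed components separately. Let $w^*$ be the vertex of $K$ exposed by $M^*$. First I would note that an exposed factor-critical component sends \emph{no} vertex into the matching with $Y$: otherwise $|V(K)|-2$ vertices of $K$ would need an internal perfect matching, which is impossible by parity since $|V(K)|$ is odd. Hence the $|V(K)|-1$ non-exposed vertices of $K$ are matched among themselves by $\nicefrac{(|V(K)|-1)}{2}$ edges of $M^*$. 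Using complementary slackness ($y^*_{w^*}=0$, and $y^*_i+y^*_j=1+c^*_{ij}$ on matching edges) and that $c^*$ is supported on $M^*$, this gives $\sum_{v\in V(K)}y^*_v = \nicefrac{(|V(K)|-1)}{2} + c^*(E(K))$, where $c^*(E(K)):=\sum_{e\in E(K)}c^*_e$. Since $Z=\emptyset$, $N_G(V(K))\subseteq Y$, so property (d) gives $y^*_v\in\{\nicefrac{1}{2},1\}$ there and $\sum_{v\in N_G(V(K))}y^*_v = \nicefrac{|N_G(V(K))|}{2} + \sum_{v\in N_G(V(K))}(y^*_v-\nicefrac{1}{2})$.

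Next I would check that $y^*$ restricted to $V(K)\cup N_G(V(K))$ is feasible for the LP defining $\ell_{K,w^*}$: the covering inequalities hold because $y^*$ is a $(\mathbf{1}+c^*)$-vertex cover of $G$; the bound $y^*_v\ge\nicefrac{1}{2}$ on $N_G(V(K))\subseteq Y$ is property (d); and $y^*_{w^*}=0$ is complementary slackness. Substituting the two identities above into the LP objective, the baseline terms $\nicefrac{(|V(K)|-1)}{2}$ and $\nicefrac{|N_G(V(K))|}{2}$ cancel, so $f(K)\le\ell_{K,w^*}\le c^*(E(K)) + \sum_{v\in N_G(V(K))}(y^*_v-\nicefrac{1}{2})$.

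It then remains to bound $\mathbf{1}^T c^*$ from below by this right-hand side plus $r-1$. I would split the support of $c^*$ (contained in $M^*$) into matching edges inside components of $G[X]$ and matching edges between $Y$ and $X$ --- there is no $M^*$-edge inside $Y$, since a maximum matching matches every $Y$-vertex into $G[X]$. Proposition \ref{lemma:nontrivialproperties} says a matched non-trivial component carries no $c^*$ and has all $y^*$-values equal to $\nicefrac{1}{2}$, so $c^*_{\{v,s_v\}}=y^*_v+y^*_{s_v}-1=y^*_v-\nicefrac{1}{2}$ on the edge joining $v\in Y$ to its partner $s_v$, whence $\mathbf{1}^T c^* = \sum_{K'\text{ exposed}}c^*(E(K')) + \sum_{v\in Y}(y^*_v-\nicefrac{1}{2})$. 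Since $N_G(V(K))\subseteq Y$ and all summands are nonnegative, $\sum_{v\in Y}(y^*_v-\nicefrac{1}{2})\ge\sum_{v\in N_G(V(K))}(y^*_v-\nicefrac{1}{2})$, and for each of the $r-1$ exposed components $K'\ne K$ (non-trivial by the standing assumption of this section), Lemma \ref{lemma:FactorCriticalNotForFree} applied at its exposed vertex (which has $y^*$-value $0$) gives $c^*(E(K'))\ge1$. Combining this with the previous paragraph, $\mathbf{1}^T c^*\ge c^*(E(K)) + \sum_{v\in N_G(V(K))}(y^*_v-\nicefrac{1}{2}) + (r-1) \ge f(K)+r-1$.

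The main obstacle I expect is the bookkeeping in the last step: one must be sure the decomposition of $\mathbf{1}^T c^*$ is exhaustive and that the ``$Y$-slack'' $\sum_{v\in Y}(y^*_v-\nicefrac{1}{2})$ is not counted twice --- once against the $\sum_{v\in N_G(V(K))}$ term coming from the LP and once inside some $c^*(E(K'))$. The two structural facts that an exposed component uses no $Y$-vertex, and that matched non-trivial components are ``free'' by Proposition \ref{lemma:nontrivialproperties}, are precisely what keep these accounts disjoint; everything else is the routine algebra sketched above.
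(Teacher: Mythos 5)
Your proof is correct and follows the paper's proof essentially step for step: decompose $\mathbf{1}^T c^*$ over the support of $M^*$, use complementary slackness and Proposition \ref{lemma:nontrivialproperties} to express the contributions in terms of $y^*$, observe that $y^*$ restricted to $V(K)\cup N_G(V(K))$ is feasible for the LP defining $\ell_{K,w^*}$, and charge $r-1$ to the other exposed components via Lemma \ref{lemma:FactorCriticalNotForFree}. The only cosmetic difference is that you substitute the identities and let the $\nicefrac{(|V(K)|-1)}{2}$ and $\nicefrac{|N_G(V(K))|}{2}$ terms cancel before comparing to $\ell_{K,w^*}$, whereas the paper bounds the three sums directly against the LP objective in its original form.
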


We now show that Lemma \ref{lem:cost-lower-bound} can be used to obtain an optimal solution for MFASP in factor-critical graphs and thereby prove Theorem \ref{theorem:FactorCriticalGraphs}. 
We note that factor-critical graphs are the special case where $G$ consists of one component $K$ (and thus $N(V(K)) = \emptyset$). In that case an optimum stabilizer can be obtained by computing $f(K)$, choosing any matching $M^*$ exposing $w^*=\text{argmin}_{w \in V(K)}\ell_{K, w}$ and setting $c^*$ to fulfill complementary slackness (i.e., if $y^*$ is a solution for $\ell_{K,w^*}$, then set $c^*(uv):=y^*_u+y^*_v-1$ for every $uv\in M^*$ and $c^*(uv)=0$ for every $uv\in E\setminus M$). 

\begin{proof}[Proof of Lemma \ref{lem:cost-lower-bound}]
Recall that $(M^*,y^*,c^*)$ is an optimum MFASP solution that 
satisfies the properties (c) and (d) of Theorem
\ref{thm:opt-properties}. We then have
\begin{align*}
\1^Tc^* = \sum_{e\in M^*} c^*_{e} 
= \!\!\!\!\!  \sum_{K'\neq K: K'\text{ is }M^*\text{-exposed}} \sum_{e\in M^*\cap E(K')} \!\!\!c^*_{e} + \!\!\!\!\! 
\sum_{e\in M^*\cap E(K)}\!\!\! c^*_e + \!\!\!\!\! 
\sum_{e\in M^*\cap \delta_G(Y)} \!\!\!c^*_e.
\end{align*}

The first double-sum on the right-hand side is at least $r-1$ by Lemma
\ref{lemma:FactorCriticalNotForFree}. (This Lemma only can be applied, because the factor-critical components are non-trivial.)  
By complementary slackness
conditions as mentioned in Section \ref{sec:prelims}, we know that for
every edge $\{i,j\}\in M^*$, we have $1+c^*_{ij}=y_i^*+y_j^*$. As $M^*$
exposes one vertex in K,
\[\sum_{e\in M^*\cap E(K)} c^*_e = \sum_{\{i,j\}\in M^*\cap E(K)} (y_i^* + y_j^* -1)  =  \sum_{v\in V(K)} y_v^* - \left(\frac{|V(K)|-1}{2}\right).\]
If $\{i,j\}\in M^*$ with $i\in Y$, then $j\in X$ is a vertex in a
factor-critical component that is matched by $M^*$.  By Proposition
\ref{lemma:nontrivialproperties}, we have that $y_j^*=1/2$.  Hence,
\begin{align*}
\sum_{e\in M^*\cap \delta_G(Y)} c^*_e &= \sum_{\{i,j\}\in M^*: i\in Y, j\in X} (y_i^* + y_j^* -1) \\
&= \sum_{i\in Y} \left(y_i^* -\frac{1}{2}\right)
\ge \smashoperator[r]{\sum_{i\in N_G(V(K))}}y_i^* -\frac{|N_G(V(K))|}{2}.
\end{align*}
Let $w$ be a $M^*$-exposed vertex in $K$. Then, $y^*$ restricted to the vertices $V(K)\cup N_G(V(K))$ 
is a feasible solution to the LP corresponding to $\ell_{K,w}$. Combining the three relations, we get that
\begin{align*}
\sum_{e\in M^*} c^*_{e} &\ge r-1+\hspace{-4mm}\sum_{v\in V(K)\cup N_G(V(K))}\hspace{-2mm} y_v^* - \left(\frac{|V(K)|-1}{2}\right) - \frac{|N_G(V(K))|}{2} \\
&\ge r-1+\ell_{K,w}\ge r-1+f(K).
\end{align*}
\end{proof}

In order to identify a suitable matching to stabilize, 
we build an auxiliary graph $G'$ as follows: 
Contract each component $K$ in $G[X]$ to a pseudo-vertex $v_K$ 
and assign edge weight $w_e:=f(K)$ for all edges $e$ incident to the contracted vertex $v_K$. 
Compute a matching $M$ in $G'$ of maximum weight covering $Y$. 

\begin{lemma}\label{lem:OPT-lower-bound}
The cost $\1^Tc^*$ of an optimum stabilizer $(M^*,c^*,y^*)$ is at least 
\[
r-1+\max_{K:v_K\text{ is }M\text{-exposed}}f(K).
\]
\end{lemma}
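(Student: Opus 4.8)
I want to show that $\mathbf{1}^Tc^* \ge r - 1 + \max_{K : v_K \text{ is } M\text{-exposed}} f(K)$, where $M$ is the maximum-weight $Y$-covering matching in the auxiliary graph $G'$. The natural strategy is to reduce to Lemma~\ref{lem:cost-lower-bound}, which already gives $\mathbf{1}^Tc^* \ge f(K) + r - 1$ whenever $M^*$ exposes a vertex in the component $K$. So it suffices to exhibit \emph{some} component $K$ which is both exposed by the optimum matching $M^*$ \emph{and} satisfies $f(K) \ge \max_{K' : v_{K'} \text{ is } M\text{-exposed}} f(K')$; then Lemma~\ref{lem:cost-lower-bound} applied to that $K$ finishes the proof. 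Equivalently, if $K_{\max}$ denotes a component maximizing $f$ among the $M$-exposed pseudo-vertices, I need to argue that $M^*$ must expose \emph{some} component whose $f$-value is at least $f(K_{\max})$.

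First I would record the relevant structure: both $M^*$ and $M$ correspond to matchings saturating $Y$ (equivalently, using the Gallai--Edmonds properties from Section~\ref{sec:prelims}, each matches the $|Y|$ vertices of the Tutte set to distinct components of $G[X]$), so in $G'$ each is a matching covering $Y$, and the set of $M$-exposed (resp.\ $M^*$-exposed) pseudo-vertices has size exactly $r$. The key observation is that $M$ is chosen to have \emph{maximum weight} among $Y$-covering matchings in $G'$, where the weight of a pseudo-vertex $v_K$ (carried by its incident edges) is $f(K)$; informally, $M$ keeps the $f$-cheapest components exposed, i.e.\ it covers the $f$-expensive ones. If $M^*$ exposed only components $K'$ with $f(K') < f(K_{\max})$, I would like to derive a contradiction with the maximality of $M$.

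The cleanest route is an exchange/augmenting-path argument on the symmetric difference $M \triangle M^*$ inside $G'$. Both are $Y$-covering matchings; their symmetric difference decomposes into paths and even cycles. Consider the pseudo-vertex $v_{K_{\max}}$: it is exposed by $M$ but, under the contradiction hypothesis, covered by $M^*$ (since $M^*$ exposes no component of $f$-value $\ge f(K_{\max})$, assuming $f(K_{\max})$ is strictly larger than the rest — the ties can be handled by a tie-breaking rule on components, e.g.\ lexicographic, so that ``$\ge f(K_{\max})$'' really pins down $K_{\max}$). Then $v_{K_{\max}}$ is an endpoint of an alternating path $P$ in $M \triangle M^*$ starting with an $M^*$-edge; since $Y$ is covered by both matchings, $P$ cannot terminate at an exposed $Y$-vertex, so it terminates at another pseudo-vertex $v_{K'}$ that is exposed by $M^*$ and covered by $M$. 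Now augment $M$ along $P$: the result $M' := M \triangle P$ is again a $Y$-covering matching, and swapping along $P$ replaces ``$v_{K'}$ covered, $v_{K_{\max}}$ exposed'' by ``$v_{K'}$ exposed, $v_{K_{\max}}$ covered'', leaving all other pseudo-vertices' exposure status unchanged. Hence the weight changes by $w(M') - w(M) = f(K_{\max}) - f(K')$ (being careful to track that the per-vertex weight is realized on exactly one incident edge in any $Y$-covering matching, so this bookkeeping is exact). But $f(K') < f(K_{\max})$ by the contradiction hypothesis, so $w(M') > w(M)$, contradicting the maximality of $M$. Therefore $M^*$ must expose some component $K$ with $f(K) \ge f(K_{\max}) = \max_{K' : v_{K'} \text{ is } M\text{-exposed}} f(K')$, and Lemma~\ref{lem:cost-lower-bound} gives $\mathbf{1}^Tc^* \ge f(K) + r - 1 \ge r - 1 + \max_{K : v_K \text{ is } M\text{-exposed}} f(K)$.

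\textbf{Main obstacle.} The delicate point is the weight bookkeeping: $w_e = f(K)$ is placed on \emph{all} edges incident to $v_K$, so I must verify that in any $Y$-covering matching exactly one such edge is used when $v_K$ is covered and none when it is exposed, and that the augmentation along $P$ indeed toggles exactly the two endpoints' exposure states while preserving the $Y$-covering property — this is where one could slip if $P$'s internal vertices were themselves pseudo-vertices whose incident-edge weights enter the accounting. I expect this to go through cleanly because every internal vertex of $P$ stays matched (its matching partner just changes), so its contribution to the total weight is unaffected; nonetheless it is the step that needs the most care. A secondary subtlety is handling ties in the $f$-values, which I would dispatch with a fixed tie-breaking order on components so that $K_{\max}$ is well-defined and ``exposes a component with $f$-value at least $f(K_{\max})$'' is the precise statement being negated.
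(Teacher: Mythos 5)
Your proof is correct and takes essentially the same route as the paper: find the $M$-alternating path in $M\triangle M^*$ (restricted to $G'$) from $v_{K_{\max}}$ to an $M^*$-exposed pseudo-vertex $v_{K'}$, use the maximality of $w(M)$ to compare $f(K_{\max})$ with $f(K')$, and then invoke Lemma~\ref{lem:cost-lower-bound}. The paper phrases this directly (concluding $f(K_1)\le f(K_t)$ from $w(P\setminus M)\le w(P\cap M)$) rather than by contradiction, but the content is identical; your internal-vertex cancellation observation is exactly the telescoping $\sum_{i=1}^{t-1}f(K_i)\le\sum_{i=2}^{t}f(K_i)$. One small remark: the tie-breaking worry is unnecessary — the contradiction hypothesis ``$M^*$ exposes only components $K'$ with $f(K')<f(K_{\max})$'' already forces $v_{K_{\max}}$ to be $M^*$-covered (since $f(K_{\max})\not< f(K_{\max})$), so no further disambiguation among components with equal $f$-value is needed.
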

\begin{proof}
  Let $K=\argmax_{K:v_K\text{ is }M\text{-exposed}}f(K)$.  If $M^*$
  exposes $K$, then Lemma \ref{lem:cost-lower-bound} proves the claim. So,
  we may assume that $M^*$ matches $K$.  Consider $M^*$ restricted to
  the edges in the bipartite graph $G'$. Both $M$ and $M^*$ are
  maximum cardinality matchings in $G'$ and $v_K$ is $M$-exposed.  So,
  we have an $M$-alternating path $P$ starting from $v_K$ and ending
  at another vertex corresponding to a contracted factor-critical
  component. Let
  $P=v_{K_1},b_1,v_{K_2},b_2,\ldots,v_{K_{t-1}},b_{t-1},v_{K_t}$ for
  some $t\ge 1$ and $v_{K_1}=v_K$ and $v_{K_t}$ is $M^*$-exposed.
  Since $M$ is a maximum weight matching, we have
  $\sum_{e\in M\Delta P} w_e\le \sum_{e\in M\cap P}w_e$.  Thus,
  $\sum_{i=1}^{t-1}f(K_i) \le \sum_{i=2}^{t}f(K_i)$ and we have that
  $f(K)=f(K_1)\le f(K_t)$. Thus, the cost of the stabilizer $c^*$ is
  at least $r-1+f(K_t)\ge r-1+f(K)$.
\end{proof}

We now stabilize $M$. For each $M$-exposed vertex $v_K$,
let \[w_K:=\text{argmin}_{w\in V(K)}\ell_{K,w},\] and let
$\overline{y}^{w_K}$ denote the solution $y$ achieving the optimum for
$\ell_{K,w_K}$.  Extend $M$ inside each factor-critical component $K$:
if $v_K$ is matched by $M$ using edge $\{u,b\}$ where
$u\in V(K), b\in Y$, then extend $M$ using a matching in $K$ that
exposes $u$. If $v_K$ is exposed by $M$, extend $M$ using a matching
in $K$ that exposes $w_K$. Let $\overline{M}$ denote the resulting
matching.

For each vertex $v_K$ matched by $M$, set $\overline{y}_u=1/2$ for all vertices $u\in V(K)$. 
For each vertex $v_K$ that is exposed by $M$, set $\overline{y}_u=\overline{y}^{w_K}_u$ for all vertices $u\in V(K)$. 
For each vertex $b\in Y$ that is adjacent to a $M$-exposed $v_K$, set $\overline{y}_b=\max_{K:v_K\text{ is }M\text{-exposed}} \overline{y}^{w_K}_b$. For each vertex $b\in Y$ with no adjacent $M$-exposed $v_K$, set $\overline{y}_b=1/2$. 
Note that these are only good choices because no trivial factor-critical components exist. 
For trivial components, there are cases where (for any reasonably good solution) even though the trivial component is matched, its $y$-value must be $0$.

Set $\overline{c}(uv)=\overline{y}_u+\overline{y}_v-1$ for edges $\{u,v\}\in \overline{M}$ and $\overline{c}(uv)=0$
for edges $\{u,v\}\in E\setminus \overline{M}$. 

We next show that the solution $(\overline{M},\overline{y},\overline{c})$ is a feasible solution. 
\begin{lemma}\label{lem:feasible-soln}
$(\overline{M},\overline{y},\overline{c})$ is a feasible solution to MFASP.
\end{lemma}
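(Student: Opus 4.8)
The plan is to verify the three defining properties of a feasible MFASP solution for the triple $(\overline{M},\overline{y},\overline{c})$: namely that $\overline{M}$ is a matching, $\overline{y}$ is a nonnegative fractional $(\1+\overline{c})$-vertex cover, and $\sum_{e\in\overline{M}}(1+\overline{c}_e)=\sum_{v\in V}\overline{y}_v$ (equivalently, $\overline{M}$ and $\overline{y}$ satisfy complementary slackness, so that $\overline{M}$ is a maximum $(\1+\overline{c})$-weight matching certified by the cover $\overline{y}$). That $\overline{M}$ is a matching is immediate from the construction: $M$ is a matching in $G'$ covering $Y$, and we extend it inside each factor-critical component $K$ by a matching of $K$ that exposes exactly the one vertex ($u$ or $w_K$) that is not already covered by the $Y$-edge, which exists because $K$ is factor-critical; these extensions use disjoint edge sets across distinct components. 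Complementary slackness is built into the definition of $\overline{c}$: on every edge $\{u,v\}\in\overline{M}$ we set $\overline{c}_{uv}=\overline{y}_u+\overline{y}_v-1$, so $\overline{y}_u+\overline{y}_v=1+\overline{c}_{uv}$ holds by fiat, and $\overline{c}_e=0$ off $\overline{M}$; we also need $\overline{y}_v=0$ for every $\overline{M}$-exposed vertex $v$, which holds because such a $v$ lies in some $M$-exposed component $K$ and equals $w_K$, and $\overline{y}^{w_K}_{w_K}=0$ by the constraint $y_w=0$ in the LP defining $\ell_{K,w_K}$. So the two remaining things to check are that $\overline{c}\ge\0$ (equivalently $\overline{y}_u+\overline{y}_v\ge 1$ on matching edges) and that $\overline{y}$ covers \emph{all} edges of $G$.

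The core of the argument is the edge-covering check, which I would organize by the type of endpoints, using the Gallai-Edmonds structure (recall $Z=\emptyset$). First, for an edge inside a matched component $K$: every vertex of $K$ has $\overline{y}=1/2$, so the constraint is tight, giving $\overline{c}=0$ there (consistent with Proposition \ref{lemma:nontrivialproperties}). Second, for an edge inside an $M$-exposed component $K$: here $\overline{y}=\overline{y}^{w_K}$ on $V(K)$, and the LP constraint $y_i+y_j\ge 1$ for $\{i,j\}\in E[V(K)\cup N_G(V(K))]$ guarantees coverage. Third, for an edge $\{u,b\}$ with $u\in V(K)$, $b\in Y$: if $K$ is matched then $\overline{y}_u=1/2$ and $\overline{y}_b\ge 1/2$ (either $\overline{y}_b=1/2$ or $\overline{y}_b$ is a max of $\overline{y}^{w_{K'}}_b$ values, each $\ge 1/2$ by the LP constraint $y_i\ge 1/2$ on $N_G(V(K'))$), so the sum is $\ge 1$; if $K$ is $M$-exposed, then $b\in N_G(V(K))$, so $\overline{y}_b=\max_{K':\,v_{K'}\text{ is }M\text{-exposed}}\overline{y}^{w_{K'}}_b\ge\overline{y}^{w_K}_b$, and since $\overline{y}^{w_K}$ satisfies $y_u+y_b\ge 1$ on the edge $\{u,b\}\in E[V(K)\cup N_G(V(K))]$ and $\overline{y}_u=\overline{y}^{w_K}_u$, the sum $\overline{y}_u+\overline{y}_b\ge\overline{y}^{w_K}_u+\overline{y}^{w_K}_b\ge 1$. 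There is no edge with both endpoints in $Y$ (by definition of the Gallai-Edmonds decomposition, $Y$ is the neighborhood of $X$ and — since $Z=\emptyset$ — $Y$ induces no edges among $X$'s non-neighbors); and there are no edges between distinct components of $G[X]$. This exhausts all cases, so $\overline{y}$ is a fractional $(\1+\overline{c})$-vertex cover.

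Finally, $\overline{c}\ge\0$: on edges off $\overline{M}$ this is trivial; on edges of $\overline{M}$ it follows from the covering inequality $\overline{y}_u+\overline{y}_v\ge 1$ just established for exactly those edge types that can occur in $\overline{M}$ (inside a component, or a component-to-$Y$ edge). Combining: $\overline{M}$ is a matching, $\overline{y}\ge\0$ is a fractional $(\1+\overline{c})$-vertex cover, complementary slackness holds, hence by LP duality $\overline{M}$ is a maximum $(\1+\overline{c})$-weight matching and $\1^T\overline{c}=\sum_{v}\overline{y}_v-|\overline{M}|$, so $(\overline{M},\overline{y},\overline{c})$ is feasible for MFASP. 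The main obstacle — and the only place any real thought is needed — is the third case above, the component-to-$Y$ edges incident to an $M$-exposed $v_K$: one must be careful that taking the coordinatewise \emph{maximum} over all competing $M$-exposed components $K'$ for each shared $Y$-vertex $b$ does not destroy feasibility elsewhere, but since raising $\overline{y}_b$ only relaxes covering constraints and the value $\overline{y}_b=\max_{K'}\overline{y}^{w_{K'}}_b$ still equals $1/2$ whenever $b\notin N(X\text{-exposed})$, no constraint is violated and $\overline{c}_{ub}=\overline{y}_u+\overline{y}_b-1$ stays well-defined and nonnegative.
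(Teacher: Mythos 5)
Your proposal takes essentially the same approach as the paper: verify that $\overline{M}$ is a matching, verify complementary slackness and $\overline{c}\ge 0$ via the definition $\overline{c}_{uv}=\overline{y}_u+\overline{y}_v-1$ on $\overline{M}$, and then do a case analysis of the covering constraints by edge type (inside a matched component, inside an $M$-exposed component, between $Y$ and $V(K)$). Your case analysis on the last two types and your use of LP feasibility of $\overline{y}^{w_K}$ together with the coordinatewise max on $Y$-vertices match the paper exactly. You are in fact somewhat more careful than the paper in explicitly checking that $\overline{y}_v=0$ for $\overline{M}$-exposed vertices $v=w_K$ (needed so that $\sum_{e\in\overline{M}}(\overline{y}_u+\overline{y}_v)=\sum_v\overline{y}_v$) and in stating explicitly that $\overline{c}\ge\0$.

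One factual slip: your claim that ``there is no edge with both endpoints in $Y$'' is false. The Gallai--Edmonds decomposition places in $Y$ the neighbours of $X$, but nothing prevents two vertices of $Y$ from being adjacent; $E[Y]$ may be nonempty. Fortunately the case is trivial: such an edge $e$ is never in $\overline{M}$ (the auxiliary matching lives on $X$--$Y$ edges and its extension stays inside components of $G[X]$), so $\overline{c}_e=0$, and every $b\in Y$ satisfies $\overline{y}_b\ge\nicefrac{1}{2}$ by construction, giving $\overline{y}_u+\overline{y}_v\ge 1=1+\overline{c}_e$. The paper's proof also silently skips this case, so once you replace your incorrect justification with the one-line argument just given, the two proofs coincide.
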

\begin{proof}
By construction, $\overline{M}$ is a matching and $\sum_{e\in \overline{M}}(1+\overline{c}_e)=\sum_{e\in \overline{M}}(\overline{y}_u+\overline{y}_v)$. It remains to show that $\overline{y}$ is a feasible fractional $w$-vertex cover for $w_e=1+\overline{c}_e$ for every $e\in E$. 

Consider an edge $e=\{u,v\}\in E$. If $e\in \overline{M}$, then $\overline{y}_u+\overline{y}_v=1+\overline{c}_{uv}$. Let $e\in E\setminus \overline{M}$. For such edges, we have $\overline{c}_e=0$ and hence $1+\overline{c}_e=1$. 

We distinguish several cases. 
If $e\in K$ where $v_K$ is matched by $M$, then $\overline{y}_u=\overline{y}_v=1/2$ and hence $\overline{y}_u+\overline{y}_v=1$. 
If $e\in K$ where $v_K$ is exposed by $M$, then $\overline{y}_u+\overline{y}_v=\overline{y}^{w_K}_u+\overline{y}^{w_K}_v\ge 1$ by the feasibility of the solution $\overline{y}^{w_K}$ to the LP corresponding to $\ell_{K,w}$. 
If $e\in \delta_G(Y)$, then let $u\in Y, v\in V(K)$. 
If $v\in V(K)$ where $v_K$ is matched by $M$, then $\overline{y}_v=1/2$ and moreover $\overline{y}_u\ge 1/2$ and hence $\overline{y}_u+\overline{y}_v\ge 1$. 
If $v\in V(K)$ where $v_K$ is exposed by $M$, then $\overline{y}_v=\overline{y}^{w_K}_v$ and $\overline{y}_u=\max_{w_K:v_K\text{ is }M\text{-exposed}}\overline{y}^{w_K}_u\ge \overline{y}^{w_K}_u$. By the feasibility of the solution $\overline{y}^{w_K}$ to the LP corresponding to $\ell_{K,w}$, we have that $\overline{y}_u+\overline{y}_v\ge \overline{y}^{w_K}_u+\overline{y}^{w_K}_v\ge 1$.
\end{proof}

We now bound the cost of the constructed solution $(\overline{M},\overline{y},\overline{c})$. 

\begin{lemma}\label{lem:cost-of-soln}
The cost $\1^T \overline{c}$ of the stabilizer $(\overline{M},\overline{y},\overline{c})$ is at most $(\sum_{e\in E}c^*_e)^2$.
\end{lemma}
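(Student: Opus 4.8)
The plan is to first write down $\1^T\overline{c}$ exactly, bound it by $\sum_{K}f(K)$ over the $M$-exposed components $K$, and then control this sum with Lemma~\ref{lem:OPT-lower-bound} and a crude estimate $OPT\ge 1$. First, if $X=\emptyset$ (equivalently $r=0$) then $\overline{c}\equiv 0$ and the claim is trivial, so assume $r\ge 1$. I would split $\overline{M}$ into three groups: (a) edges inside a component $K$ with $v_K$ matched by $M$; (b) edges inside a component $K$ with $v_K$ exposed by $M$; and (c) the edges $\{u_K,b_K\}\in\delta_G(Y)$ used to match $v_K$. For group (a), both endpoints have $\overline{y}$-value $\nicefrac12$, so $\overline{c}=0$. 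For group (c), each $b\in Y$ appears in exactly one such edge (since $M$ covers $Y$), so the total contribution is $\sum_{b\in Y}(\overline{y}_b-\nicefrac12)$. For a component $K$ in group (b), the restriction of $\overline{M}$ to $K$ exposes $w_K$ and $\overline{y}_{w_K}=0$, so the contribution is $\sum_{v\in V(K)}\overline{y}^{w_K}_v-\tfrac{|V(K)|-1}{2}$. Hence
\[
\1^T\overline{c}=\sum_{K:\,v_K\text{ is }M\text{-exposed}}\Bigl(\sum_{v\in V(K)}\overline{y}^{w_K}_v-\tfrac{|V(K)|-1}{2}\Bigr)+\sum_{b\in Y}\Bigl(\overline{y}_b-\tfrac12\Bigr).
\]

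Next, using the definition of $\ell_{K,w_K}$ and that $\ell_{K,w_K}=f(K)$, rewrite each inner term as
\[
\sum_{v\in V(K)}\overline{y}^{w_K}_v-\tfrac{|V(K)|-1}{2}=f(K)-\sum_{v\in N_G(V(K))}\Bigl(\overline{y}^{w_K}_v-\tfrac12\Bigr).
\]
Substituting, $\1^T\overline{c}=\sum_{K\text{ exp}}f(K)-\sum_{K\text{ exp}}\sum_{v\in N_G(V(K))}(\overline{y}^{w_K}_v-\tfrac12)+\sum_{b\in Y}(\overline{y}_b-\tfrac12)$. The main obstacle is to show the last two sums combine to a nonpositive quantity. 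This is where the construction of $\overline{y}$ on the Tutte set pays off: for $b\in Y$ adjacent to some $M$-exposed $v_K$, $\overline{y}_b-\tfrac12=\max_{K}(\overline{y}^{w_K}_b-\tfrac12)$ over those $K$, and every summand $\overline{y}^{w_K}_b-\tfrac12$ is $\ge 0$ because of the constraint $y_i\ge\nicefrac12$ on $N_G(V(K))$ in the LP defining $\ell_{K,w}$; hence the max is at most the sum. Swapping the order of summation ($N_G(V(K))\subseteq Y$ since $Z=\emptyset$) gives $\sum_{b\in Y}(\overline{y}_b-\tfrac12)\le\sum_{K\text{ exp}}\sum_{v\in N_G(V(K))}(\overline{y}^{w_K}_v-\tfrac12)$, so $\1^T\overline{c}\le\sum_{K\text{ exp}}f(K)$.

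Finally, there are exactly $r$ components $K$ with $v_K$ $M$-exposed, so $\sum_{K\text{ exp}}f(K)\le r\cdot f_{\max}$ where $f_{\max}:=\max_{K:\,v_K\text{ is }M\text{-exposed}}f(K)$. Lemma~\ref{lem:OPT-lower-bound} gives $OPT\ge r-1+f_{\max}$, hence $f_{\max}\le OPT-r+1$ and $\1^T\overline{c}\le r(OPT-r+1)$. Since $r\ge 1$, $M^*$ also exposes at least one (non-trivial) component whose exposed vertex has $y^*$-value $0$, so Lemma~\ref{lemma:FactorCriticalNotForFree} yields $OPT\ge 1$. Viewing $r(OPT-r+1)$ as a downward parabola in $r$ bounded by its vertex value $\tfrac{(OPT+1)^2}{4}$, and noting $(OPT+1)^2\le 4\,OPT^2$ precisely when $OPT\ge 1$, we conclude $\1^T\overline{c}\le\tfrac{(OPT+1)^2}{4}\le OPT^2$, as claimed.
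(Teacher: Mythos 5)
Your proof is correct and follows essentially the same route as the paper's: decompose $\1^T\overline{c}$ into component-internal and $Y$--$X$ cross contributions, use ``max $\le$ sum'' on the Tutte-set contribution to collapse everything to $\sum_{K\,\text{exp}}f(K)\le r\cdot f_{\max}$, and then obtain $(OPT+1)^2/4\le OPT^2$ (the paper invokes AM--GM where you complete the square, but these give the identical bound). The one genuine, if minor, addition is that you make explicit the implicit assumptions $r\ge 1$ and $OPT\ge 1$ needed for the final inequality, via Lemma~\ref{lemma:FactorCriticalNotForFree}.
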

\begin{proof}
Let $\mathcal{K}$ be the set of components such that $v_K\text{ is }M\text{-exposed}$. The cost of $(\overline{M},\overline{y},\overline{c})$ is
\[\sum_{e\in \overline{M}}\overline{c}_e = \sum_{K \in \mathcal{K}}\sum_{\{u,v\}\in \overline{M}\cap K}(\overline{y}_u+\overline{y}_v-1) 
+ \sum_{u\in Y, v\in X: \{u,v\}\in M}(\overline{y}_u+\overline{y}_v-1)\]


We next bound the second term in the above sum using $y_v = \nicefrac{1}{2}$ for $v \in X$ with $\{u, v\} \in M$. 
Let \[Y' = \{u \in Y: u \text{ is not adjacent to an $M$-exposed vertex}\}.\] For $u \in Y\setminus Y'$, we have $\overline{y}_u=1/2$ 
and such vertices do not contribute to the sum.
%
\begin{align*}
\sum_{u\in Y}\left(\overline{y}_u-\frac{1}{2}\right) &\le
  \sum_{u\in Y'} \left(\sum_{K \in \mathcal{K}: u \in N_G(V(K))}\left(\overline{y}^{w_K}_u-\frac{1}{2}\right)\right)\\
  &= \sum_{K\in \mathcal{K}}\left(\sum_{u\in Y\cap N_G(V(K))} \overline{y}^{w_K}_u-\frac{|N_G(V(K))|}{2}\right)
  \end{align*}


Therefore,
\begin{align*}
 &\sum_{e\in \bar{M}}\overline{c}_e \le  
\sum_{K \in \mathcal{K} }\left(\sum_{v\in V(K)\cup N_G(V(K))}\overline{y}^{w_K}_v-\frac{|V(K)|-1}{2}-\frac{|N_G(V(K))|}{2}\right)\\
&\le  r\max_{K \in \mathcal{K}} f(K)\\
&{\le} \left(\frac{r+\max_{K \in \mathcal{K}} f(K)}{2}\right)^2 \quad \quad \text{(since arithmetic mean is at least goemetric mean)}\\
&{\le} \left(\frac{1+\sum_{e\in E}c^*_e}{2}\right)^2 \quad \quad \quad \quad \text{(by Lemma \ref{lem:OPT-lower-bound})}\\
&\le \left(\sum_{e\in E}c^*_e\right)^2.
\end{align*}

\end{proof}

If $OPT > \sqrt{n}$, any solution fulfilling properties (a) (b), (c) and (d) 
of Theorem \ref{thm:opt-properties} is a $\sqrt{n}$-approximation as
the cost of any such solution is bounded by
$\nu(G) \le \nicefrac{n}{2}$. Therefore, Lemmas \ref{lem:cost-of-soln} and \ref{lem:feasible-soln} and the construction of $(\overline{M},\overline{y},\overline{c})$ 
imply Theorem \ref{thm:opt-approx}. We give an overview of the algorithm here:

\medskip
\noindent
\noindent\hrule
\vspace{0.1cm}
\textbf{Algorithm}
\vspace{0.1cm}
\noindent\hrule
\begin{enumerate}
\item For each factor-critical component $K$ in $G[X]$:
\begin{enumerate}
\item For each vertex $w$ in $K$, solve following LP:
\begin{align*}
\ell_{K,w} :=\min \sum_{v\in V(K)\cup N_G(V(K))} y_v &- \left(\frac{|V(K)|-1}{2}\right) - \frac{|N_G(V(K))|}{2}\\
y_i + y_j &\ge 1\ \forall\ \{i,j\}\in E[V(K)\cup N_G(V(K))]\\
y_i&\ge 1/2\ \forall\ i\in N_G(V(K))\\
y_i&\ge 0\ \forall\ i\in V(K)\\
y_w&=0
\end{align*}
\item Let $f(K):=\min_{w\in V(K)}\ell_{K,w}$. 
\end{enumerate} 

\item Construct an auxiliary bipartite graph $G'$ from $G$ as follows: 
Contract each component $K$ in $G[X]$ to a pseudo-vertex $v_K$ and assign edge weight $w_e:=f(K)$ for all edges $e$ incident to the contracted vertex $v_K$. 
Delete edges in $E[Y]$.

\item Compute a matching $M$ in $G'$ of maximum weight covering $Y$. 

\item For each $M$-exposed vertex $v_K$, let $w_K:=\text{argmin}_{w\in V(K)}\ell_{K,w}$, 
let $\overline{y}^{w_K}$ denote the solution $y$ achieving the optimum for $\ell_{K,w_K}$. 

\item Identify a matching $\overline{M}$: Extend $M$ inside each factor-critical component $K$: 
if $v_K$ is matched by $M$ using edge ${u,b}$ where $u\in V(K), b\in Y$, then extend $M$ using a matching in $K$ that exposes $u$. 
If $v_K$ is exposed by $M$, extend $M$ using a matching in $K$ that exposes $w_K$. Let $\overline{M}$ denote the resulting matching. 

\item Identify a fractional vertex cover $\overline{y}$: For each vertex $v_K$ matched by $M$, set $\overline{y}_u=1/2$ for all vertices $u\in V(K)$. 
For each vertex $v_K$ that is exposed by $M$, set $\overline{y}_u=\overline{y}^{w_K}_u$ for all vertices $u\in V(K)$. 
For each vertex $b\in Y$ that is adjacent to a $M$-exposed $v_K$, set $\overline{y}_b=\max_{K:v_K\text{ is }M\text{-exposed}} \overline{y}^{w_K}_b$. 
For each vertex $b\in Y$ with no adjacent $M$-exposed $v_K$, set $\overline{y}_b=1/2$. 

\item Identify a feasible MFASP solution $\overline{c}$: 
Set $\overline{c}(uv)=\overline{y}_u+\overline{y}_v-1$ for edges $\{u,v\}\in \overline{M}$ and $\overline{c}(uv)=0$ for edges $\{u,v\}\in E\setminus \overline{M}$. 

\item Return $(\overline{M},\overline{y},\overline{c})$.
\end{enumerate}
\noindent\hrule


\section{An Exact Algorithm for MFASP}\label{sec:ExactAlgo}

In this section, we describe an exact algorithm to solve the MFASP in
arbitrary graphs $G$.  The algorithm is based on the Gallai-Edmonds
decomposition $V(G)=X\cup Y \cup Z$ and makes use of a polynomial time exact
algorithm to solve MFASP in the factor-critical components in $G[X]$.
The runtime of our algorithm grows
exponentially only in the size of the Tutte set.  Thus, our algorithm
is fixed parameter tractable w.r.t.\ the size of the Tutte set $Y$.
In particular, the resulting algorithm runs in polynomial-time if the
size of the Tutte set is bounded by $\mathcal{O}(\log n)$.

\thmExactAlgo*
  
\noindent \textbf{Outline of the algorithm.} 
By property $(ii)$ in Theorem \ref{thm:opt-properties}, 
we know that there exists a subset $S^*\subseteq Y$ and a half-integral minimum fractional stabilizer $c^*$ 
with a half-integral minimum fractional $(1+c^*)$-vertex cover solution $y^*$ such that $y^*_v=1$ for all $v\in S^*$, 
and $y_v^*=\frac{1}{2}$ for all $v\in Y\setminus{S^*}$. 
This motivates the following problem: given a set $\hat{S}\subseteq Y$, find a minimum fractional stabilizer $c$ which admits a minimum fractional $(1+c)$-vertex cover $y$
satisfying $y_v = 1$ if $v\in \hat{S}$ and $y_v=\frac{1}{2}$ if $v\in Y\setminus{\hat{S}}$.
Or, decide that no such solution exists. 
In Section \ref{sec:fixedTutteAlgo} we present a polynomial-time algorithm for this problem. 
Repeatedly applying this algorithm to all subsets of the Tutte set and searching for the optimal one gives the optimal solution to 
MFASP and implies Theorem \ref{theorem:exact-algorithm-for-MFASP}.

\subsection{Algorithm to find the optimal stabilizer knowing the subset of Tutte vertices with y-value one}\label{sec:fixedTutteAlgo}
Let $G$ be a graph with Gallai-Edmonds decomposition $X, Y, Z$. In this section, we focus on the following problem: 
Given a set $\hat{S}\subseteq Y$, find a minimum cost fractional additive stabilizer $(M,y,c)$ 
among those which have $y_v=1$ if $v\in \hat{S}$, and $y_v=\frac{1}{2}$ if $v\in Y\setminus{\hat{S}}$. 
Let $f(\hat{S})=\sum_{e\in E} c_e$ denote the cost of such a solution. 

We give an overview of the algorithm to compute $f(\hat{S})$. 
(For a formal description see Algorithm $MFASP(\hat{S})$.) 
Let $(M,y,c)$ denote the triple of an optimal solution corresponding to $f(\hat{S})$. 
Let us examine the structure of the optimal solution $(M,y,c)$. Recall that we are restricting $c$ and $y$ to be half-integral. \\



\noindent \textbf{Finding an optimum with knowledge of matching edges between $Y$ and $X$.}
Let us focus on the matching edges in $M$ that link $Y$ to $X$ and argue that it is sufficient to know these links to find an optimal solution $c$. 
We consider a component $K\in G[X]$ matched to some vertex in $v \in Y$ by $M$ and distinguish two cases.

(i) $K$ is non-trivial. By Proposition \ref{lemma:nontrivialproperties}, we have $c(e) = 0$ for $e \in E[K]$ and $c(\delta(V(K)) = y_v - \nicefrac{1}{2}$.


(ii) Suppose $K = \{u\}$. If $u$ is not incident to any vertex $w \in Y\setminus \hat{S}$, 
then we may assign $y_u=0$ thereby incurring a cost of $c_e=0$ and this is optimal. 
Otherwise, the feasible $y$ assigns $y_u=1/2$
and as before incurs an optimal cost of $y_v - \nicefrac{1}{2}$ over $\delta(V(K))$. 

Next let us consider $K\in G[X]$ that is not matched to any vertex in $Y$. 

(i) Suppose there are no edges $\{v,u\}\in \delta(V(K))$ that are incident to a vertex $u\in Y\setminus \hat{S}$. 
Then the optimal fractional additive stabilizer $c$ restricted to the set of edges in $E(K)\cup \delta(V(K))$ 
should also be an optimal fractional additive stabilizer for $K$ itself and vice-versa. 
Therefore, the stabilizer values on these edges can be computed using the exact algorithm for factor-critical graphs.

(ii) Suppose there are edges $\{v,u\} \in \delta(V(K))$ that are incident to a vertex $u \in Y\setminus \hat{S}$.
In this case, the vertex cover values $y$ should satisfy the covering constraints for the edges in $\delta(V(K))$.
In particular, $y_v\ge 1/2$ for vertices $v\in V(K)$ which have neighbors in $Y\setminus \hat{S}$. 
As a consequence, the optimal stabilizer restricted to the set of edges in $E(K)\cup \delta(V(K))$ may not be the optimal stabilizer for $K$ itself. However, the optimal fractional additive stabilizer restricted to the set of edges in $E(K)\cup \delta(V(K))$ should also be an optimal fractional additive stabilizer for a modified graph $\tilde{K}$ obtained from $K$ by adding an extra loop $\{v,v\}$ to each vertex $v\in V(K)$ linked to a vertex $u\in Y\setminus{\hat{S}}$. Conversely, we can modify an optimal fractional additive stabilizer $c$ over the set of edges in $E(K)\cup \delta(V(K))$ to take the same values as an optimal fractional additive stabilizer for $\tilde{K}$ without losing optimality. Further, we note that we can compute a minimum fractional additive stabilizer in $\tilde{K}$, by running the algorithm $A(v)$ for every node $v$ that is not incident to a loop in $\tilde{K}$ and output the best.

We observe that if every vertex $v\in V(K)$ has an edge adjacent to a vertex $u\in Y\setminus \hat{S}$, then this necessiates $y_v\ge 1/2$ for every vertex in $K$ and therefore $K$ must necessarily be matched to a vertex in $Y$. \\


\noindent \textbf{Computing matching edges between $Y$ and $X$.}
From the above discussion, it is clear that the cost of the solution $f(\hat{S})$ does not depend on the precise choice of the edges used to match the components of $G[X]$ to $Y$ but only depends on which components of $G[X]$ are matched by $M$. Therefore, 
we can also identify the edges between $Y$ and $X$ in an optimal matching $M$ as follows: 
Let us denote by $\kappa(K, \hat{S})$ the cost of the stabilizer over the edges in $E[K]\cup \delta(V[K])$ if $K$ is not matched to $Y$ 
(as observed before, we can compute $\kappa(K, \hat{S})$ by applying an exact algorithm to the factor-critical graph $\tilde{K}$, 
for example the one presented in section \ref{sec:sqrt-n-approx}).
If $K$ must necessarily be matched, then we set $\kappa(K, \hat{S})$ to infinity
(or an arbitrarily large value $U$ in the implementation). Let $T$ denote the trivial components in $G[X]$ all of whose neighbors are in $\hat{S}$. 
Let us construct a weighted bipartite graph $H$ from $G$ as follows: Delete $Z$, delete the edges between vertices in $Y$, and contract each component $K$ of $G[X]$ to a vertex $v_K$; replace the multi-edges by a single edge to make it a simple graph and 
for a vertex $u\in Y$ that is adjacent to some node in $K$, we introduce weight $\kappa(K,\hat{S})$ on the edge $\{u,v_K\}$. 

By the above discussion, a maximum weight matching $N$ in $H$ covering all vertices in $Y$ gives the edges of an optimal matching $M$ between $Y$ and $X$. Therefore, 
\[
f(\hat{S})=\frac{1}{2}\left(|\hat{S}|-|\{K\in T: N \text{ covers } v_K \}|\right)+\sum_{K\in G[X]:\ v_K \text{ is exposed by }N}\kappa(K,\hat{S}).
 \] 
Hence, we find a maximum weight matching in $H$ to compute $f(\hat{S})$.

\begin{remark}
Between matching a component in $T$ or a non-trivial factor-critical component, 
$M$ prefers the latter choice by Lemma \ref{lemma:FactorCriticalNotForFree}. 
Thus, assigning $ \kappa(K, \hat{S}) = 0$ for $K \in T$, 
implicitly assumes that components in $T$ are only matched if there is no other choice.

 
\end{remark}

\medskip
 \noindent
\noindent\hrule
\vspace{0.1cm}
\noindent \textbf{Algorithm $MFASP(\hat{S})$.}
\vspace{0.1cm}
\noindent\hrule
 \hrule
 \begin{enumerate}
 \item For each factor-critical component $K$ in $G[X]$ compute the cost $\kappa(K, \hat{S})$ needed to stabilize $K$ 
 and the edges linking $K$ to vertices in $Y\setminus{\hat{S}}$ in case $K$ would not be matched to $Y$. 
 (We discussed above 
 that this can be done in polynomial time.) Let $(M^K, c^K, y^K)$ be an optimal stabilizer for $K\cup \delta(V(K))$ among those with $M^K\cap \delta(K)=\emptyset$. 
 \item Shrink the components $K$ in $G[X]$ to pseudo-vertices $v_K$, assign the weight $\kappa(K,\hat{S})$ 
 to all edges linking a Tutte vertex to pseudo-vertex $v_K$, and compute a bipartite matching $\hat{M}$ of maximum weight covering $Y$ (this is possible in polynomial time).
 
 If no feasible solution exists, i.e.\ if there exists an unmatched component $K$ with $\kappa(K, \hat{S}) = U$, STOP and RETURN INFEASIBLE.
 \item Obtain a maximum matching in $G$ by extending $\hat{M}$ as follows:
 \begin{itemize}
 \item for each component $K$ not matched to $Y$ add the matching edges in $M^K$  to $\hat{M}$;
 \item for each component $K$ having $v_K$ matched to $Y$, pick a vertex in $K$ that has the corresponding matching edge adjacent to it, say $v$, and add a maximum matching in $K$ that exposes $v$ to $\hat{M}$;
 \item for each  component $C$ in $G[Z]$ (we note that all these components are even), add an arbitrary perfect matching to $\hat{M}$;
 \end{itemize}
 \item Obtain a fractional additive stabilizer as follows:
 \begin{itemize}
 \item $\hat{c}_e=c_e^K$ for all components $K$ in $G[X]$ that are not matched to $Y$,
 \item $\hat{c}_e=\frac{1}{2}$ for each matching edge $e\in \hat{M}$ linked to a Tutte vertex $v\in \hat{S}$, 
      except if $e = \{v, w\}$ for some vertex $w$ that is a trivial component in $G[X]$ with $N_G(w)\subseteq \hat{S}$, and
 \item $\hat{c}_e=0$ else.
 \end{itemize}
 \item Obtain a fractional $(1+\hat{c})$-vertex cover $\hat{y}$ that satisfies complementary slackness with $\hat{M}$ as follows:
 \begin{itemize}
 \item $\hat{y}_v= \frac{1}{2}$ for all vertices in $Z$, all vertices in $Y\setminus{\hat{S}}$ 
 and all vertices in components $K$ in $G[X]$ that are matched to $Y$ except if $v$ is a trivial 
 component in $G[X]$ with $N_G(v)\subseteq \hat{S}$,
 \item $\hat{y}_v=1$ for all $v\in \hat{S}$, 
 \item $\hat{y}_v=y_v^K$ for all vertices in components $K$ in $G[X]$ that are not matched to $Y$, and
 \item $\hat{y}_v = 0$ for all vertices $v$ that are trivial components in $G[X]$ with $N_G(v)\subseteq \hat{S}$ and matched to $Y$.
 \end{itemize}
 \item Return $(\hat{M}, \hat{y}, \hat{c})$ and $f(\hat{S}):=\sum_{e\in E} \hat{c}_e$;
 \end{enumerate}
\noindent\hrule
\begin{remark}
As mentioned earlier, not every possible choice of $\hat{S}$ has a fractional additive stabilizer $c$ which has $y_v=1$ if $v\in \hat{S}$ and $y_v=1/2$ if $v\in Y\setminus \hat{S}$ for a half-integral minimum fractional $(1+c)$-vertex cover $y$. For example, consider a graph where $Z=\emptyset$, $Y = \{v\}$ and  $G[X]$ consists of two triangles whose nodes are all connected to $v$. Then $y_v$ must be $1$ and $\hat{S} = \emptyset$ is not feasible. The algorithm detects these cases in Step 2. 
\end{remark}

\subsection{An Approximation Algorithm for Graphs with Many Nontrivial Components}
We can also use the algorithm to compute $f(\hat{S})$ 
to obtain an approximation algorithm for graphs that have a 
large number of non-trivial factor-critical components in the Gallai-Edmonds decomposition.

\begin{restatable}{theorem}{thmApproxAlgo}
\label{thm_approxAlgo}
For a graph $G$ with Gallai-Edmonds decomposition
$V(G) = X \cup Y \cup Z$, let $\mathcal{C}^+$ denote the set of
nontrivial components in $X$. If
$|\mathcal{C}^+| \geq (1+\nicefrac{1}{k})|Y|$ for $k>0$, then there is
a $(\nicefrac{k}{2}+1)$-approximation algorithm for MFASP.
\end{restatable}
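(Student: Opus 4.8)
The plan is to observe that the polynomial-time subroutine of Section~\ref{sec:fixedTutteAlgo}, invoked for the single choice $\hat S = Y$, already yields the desired approximation. The algorithm therefore simply runs $MFASP(Y)$ and returns its output; since the all-ones assignment on $Y$ always extends to a feasible stabilizer, $f(Y)$ is finite and this output is a feasible MFASP solution of cost $f(Y)$ produced in polynomial time. It then remains to prove $f(Y)\le(k/2+1)\,OPT$, which I would split into a lower bound on $OPT$ and an upper bound on $f(Y)$.

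\textbf{Lower bound $OPT\ge|Y|/k$.} By property~(iv) of the Gallai--Edmonds decomposition, every maximum matching saturates $Y$ and matches its vertices to distinct components of $G[X]$, so exactly $|Y|$ components of $G[X]$ are matched; hence at most $|Y|$ nontrivial components are matched and at least $|\mathcal{C}^+|-|Y|\ge(1+1/k)|Y|-|Y|=|Y|/k$ nontrivial components are exposed. Fixing an optimal solution $(M^*,y^*,c^*)$ as in Theorem~\ref{thm:opt-properties} (so $M^*$ is a maximum matching, $c^*$ and $y^*$ are half-integral, $y^*_v\in\{1/2,1\}$ on $Y$, and WLOG $Z=\emptyset$), the $M^*$-exposed vertex of each exposed nontrivial component $K$ has $y^*$-value $0$ by complementary slackness, so Lemma~\ref{lemma:FactorCriticalNotForFree} gives $\sum_{e\in E(K)}c^*_e\ge1$. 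As $c^*$ vanishes off $M^*$ and these components span pairwise disjoint edge sets, summing over them gives $OPT\ge|\mathcal{C}^+|-|Y|\ge|Y|/k$.

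\textbf{Upper bound $f(Y)\le OPT+|Y|/2$.} From the same $(M^*,y^*,c^*)$ I would build the candidate solution $(M^*,y',c')$ with $y'_v:=1$ for $v\in Y$, $y'_v:=y^*_v$ for $v\in X$, and $c'_e:=y'_u+y'_v-1$ for $e=\{u,v\}\in M^*$ and $c'_e:=0$ otherwise. A routine check shows feasibility: on non-matching edges incident to $Y$ the covering constraint holds because $y'$ equals $1$ there; on non-matching edges inside $X$ it is inherited from $y^*$ (where $c^*=0$); and since $M^*$ saturates the support of $y'$ (the $M^*$-exposed $X$-vertices have $y^*=0$), weak LP duality for \eqref{lp:matching}/\eqref{lp:cover} with weights $1+c'$ makes $M^*$ a maximum $(1+c')$-weight matching and $y'$ a minimum $(1+c')$-vertex cover. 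Because $y'_v=1$ for all $v\in Y$, this solution lies in the feasible set defining $f(Y)$, so $f(Y)\le\sum_{e\in M^*}c'_e$. Using that each $v\in Y$ is matched by $M^*$ exactly once and $c^*_e=y^*_u+y^*_v-1$ on $M^*$, one gets $\sum_{e\in M^*}c'_e=OPT+\sum_{v\in Y}(1-y^*_v)\le OPT+|Y|/2$, since $y^*_v\ge1/2$ on $Y$. Combining the two bounds, $f(Y)\le OPT+|Y|/2\le OPT+(k/2)\,OPT=(k/2+1)\,OPT$.

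I do not expect a genuine obstacle here: the only slightly delicate point is the feasibility verification of $(M^*,y',c')$ in Step~2 --- checking that raising the Tutte $y$-values to $1$ cannot violate a covering inequality and that $M^*$ remains optimal for the new weights --- but this follows immediately from the complementary-slackness description recalled in Section~\ref{sec:prelims}, so the work is bookkeeping rather than a conceptual hurdle.
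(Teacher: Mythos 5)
Your proposal is correct and takes essentially the same route as the paper: the lower bound $OPT\ge|Y|/k$ via Lemma~\ref{lemma:FactorCriticalNotForFree} applied to the exposed nontrivial components is identical, and your explicit candidate $(M^*,y',c')$ with $y'=1$ on $Y$ and $y'=y^*$ on $X$ is precisely the feasibility witness the paper invokes when it asserts $\sum_{v\in X}\hat y_v\le\sum_{v\in X}y^*_v$ (because ``$y^*$ restricted to $X$ is also feasible for the auxiliary problem with $y_v$ fixed to one on all Tutte vertices''). You merely unpack that observation into a direct cost computation $f(Y)\le OPT+\sum_{v\in Y}(1-y^*_v)\le OPT+|Y|/2$, which is a cleaner bookkeeping of the same inequality chain.
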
 

\begin{proof}
Let $(M^*, y^*, c^*)$ be an optimal solution for MFASP with cost $|c^*|$ and $X_1, \ldots X_r$ be the components of $G[X]$. 
We first note that the number of unmatched non-trivial components of $X$ is at least $\ceil{\left(\nicefrac{1}{k}\right)|Y|}$. 
We know by Lemma \ref{lemma:FactorCriticalNotForFree} that the optimal stabilizer pays at least $1$ over the edges in each of these components. 
This yields
\begin{align*}
c^*(E) = \sum\limits_{e \in E}{c^*_e} \geq \sum\limits_{i=1}^r{\sum_{e \in X_i}{c^*_e}} \geq 
{ \frac{|Y|}{k} }.
\end{align*}

Let $\hat{S}=Y$, i.e., fix $y_t=1$ for all vertices $t$ in the Tutte set, and calculate an optimal solution $(\hat{M}, \hat{y}, \hat{c})$ 
corresponding to $f(\hat{S})$.
We observe that the optimal solution corresponding to $f(\hat{S})$ can be computed efficiently using the algorithm from Section \ref{sec:fixedTutteAlgo}.
Recall that $y^*_v\ge \nicefrac{1}{2}$ for each vertex $v$ in the Tutte set. Therefore, 
\begin{align*}
\hat{c}(E) &= \hat{y}(V) - |\hat{M}| = \sum\limits_{v \in V}{\hat{y}_v} - |M^*|
= \sum\limits_{v \in X}{\hat{y}_v} + \sum\limits_{v \in Y}{\hat{y}_v} - |M^*\setminus E[Z]|\\
&\leq \sum\limits_{v \in X}{y^*_v} + \frac{|Y|}{2} + \sum\limits_{v \in Y}{y^*_v} - |M^*\setminus E[Z]| \\
&\leq \sum\limits_{v \in X}{y^*_v} + \left(\frac{k}{2}\right)c^*(E) + \sum\limits_{v \in Y}{y^*_v} - |M^*\setminus E[Z]| 
= \left( \frac{k}{2}+1\right)c^*(E),
\end{align*}
which finishes the proof. In the first inequality above, we have used $\sum_{v \in X}{\hat{y}_v}\le \sum_{v \in X}{y^*_v}$ 
since $y^*$ restricted to $X$ is also feasible for the auxiliary problem with $y_v$ fixed to one on all Tutte vertices. 
\end{proof}

\bibliography{fractional_stabilizers}

\clearpage


\end{document}